\pgfplotsset{compat=1.8}
\newtheorem{theorem}{Theorem}
\newtheorem{lemma}[theorem]{Lemma}
\newtheorem{corollary}[theorem]{Corollary}
\newtheorem{observation}{Observation}
\date{}
\begin{document}

\title{Cutting Polygons into Small Pieces with Chords:\\
	 Laser-Based Localization}

\author{
Esther M. Arkin\thanks{Stony Brook University, NY, USA. Email:
\texttt{\{esther.arkin,rathish.das,joseph.mitchell\}@stonybrook.edu}.}
\and
Rathish Das\protect\footnotemark[1]
\and
Jie Gao\thanks{Rutgers University, NJ, USA. Email:\texttt{jg1555@cs.rutgers.edu}.}
\and
Mayank Goswami\thanks{Queens College of CUNY, NY, USA. Email:\texttt{mayank.goswami@qc.cuny.edu}.}
\and
Joseph S. B. Mitchell\protect\footnotemark[1]
\and
Valentin Polishchuk\thanks{Link\"oping University, Norrk\"oping, Sweden. Email:
	\texttt{valentin.polishchuk@liu.se}.} 
\and
Csaba D. T\'oth\thanks{California State University Northridge, CA;
	and Tufts University, MA, USA. Email:
	\texttt{csaba.toth@csun.edu}.}
}


\maketitle
\thispagestyle{empty}
\sloppy

\begin{abstract}
	Motivated by indoor localization by tripwire lasers, we study the problem of cutting a polygon into small-size pieces, using the chords of the polygon. Several versions are considered, depending on the definition of the ``size'' of a piece. In particular, we consider the area, the diameter, and the radius of the largest inscribed circle as a measure of the size of a piece. We also consider different objectives, either minimizing the maximum size of a piece for a given number of chords, or minimizing the number of chords that achieve a given size threshold for the pieces. We give hardness results for polygons with holes and approximation algorithms for multiple variants of the problem.
\end{abstract}

\clearpage
\pagenumbering{arabic} 



\newcommand\e\emph\newcommand\eps{\ensuremath\varepsilon\xspace}
\renewcommand\d{\ensuremath{\delta}\xspace}\renewcommand\k{\ensuremath{k}\xspace}\renewcommand\P{\ensuremath{P}\xspace}\renewcommand\g{\ensuremath{\gamma}\xspace}
\newcommand\bd{\ensuremath{\partial}\xspace}
\renewcommand\mm{\textsf{MinMeasure}\xspace}\newcommand\mlm{\textsf{Min-LaserMeasure}\xspace}
\newcommand\ma{\textsf{MinArea}\xspace}\newcommand\mla{\textsf{Min-LaserArea}\xspace}
\newcommand\md{\textsf{MinDiameter}\xspace}\newcommand\mld{\textsf{Min-LaserDiameter}\xspace}
\newcommand\mc{\textsf{MinCircle}\xspace}\newcommand\mlc{\textsf{Min-LaserCircle}\xspace}
\newtheorem{proposition}[theorem]{Proposition}
\newenvironment{reflemma}[1]{\medskip\parindent 0pt{\bf Lemma \ref{#1}.}\em }{\vspace{1em}}
\newenvironment{refthm}[1]{\medskip\parindent 0pt{\bf Theorem \ref{#1}.}\em }{\vspace{1em}}

\newcommand{\footnotenonumber}[1]{{\def\thempfn{}\footnotetext{#1}}}
\newcommand{\footnotetight}[1]{\footnote{\renewcommand\baselinestretch{1}\footnotesize #1}}

\newcommand{\nmax}{N_{\text{\rm max}}}
\newcommand{\io}{I/O\xspace}
\newcommand{\ios}{I/Os\xspace}

\newcommand{\punt}[1]{}
\newcommand{\calU}{{\cal U}}
\newcommand{\calC}{{\cal C}}
\newcommand{\calB}{{\cal B}}
\newcommand{\calS}{{\cal S}}
\newcommand{\calF}{{\cal F}}
\newcommand{\calD}{{\cal D}}

\newcommand{\FP}{\textsc{False Positive}\xspace}
\newcommand{\FN}{\textsc{False Negative}\xspace}
\newcommand{\FPs}{\textsc{False Positives}\xspace}
\newcommand{\FNs}{\textsc{False Negatives}\xspace}
\newcommand{\OL}{\textsc{Online}\xspace}
\newcommand{\Scalable}{\textsc{Scalable}\xspace}
\newcommand{\odp}{\textsc{odp}\xspace}
\newcommand{\oedp}{\textsc{oedp}\xspace}
\newcommand{\tedp}{\textsc{tedp}\xspace}
\newcommand{\oedpfp}{\textsc{oedpfp}\xspace}

\newcommand{\oaedp}{\textsc{oaedp}\xspace}

\newcommand{\taedp}{\textsc{taedp}\xspace}

\makeatletter
%

\makeatletter
\def\@copyrightspace{\relax}
\makeatother

\newcommand{\defn}[1]       {{\textit{\textbf{\boldmath #1}}}}
\newcommand{\pparagraph}[1]{\vspace{0.09in}\noindent{\bf \boldmath #1.}}
\renewcommand{\paragraph}[1]{\vspace{0.09in}\noindent{\bf \boldmath #1}}
\newcommand{\poly}{\mbox{poly}}
\newcommand{\polylog}{\mbox{polylog}}

\newcommand{\Ns}{N}
\newcommand{\Nk}{u}
\newcommand{\stream}{S}

\date{}

\newcommand{\namedcomment}[3]{{\color{#2} #1: #3}}


\newcommand{\MinLaserConvexArea}{{\sc MinLaser-ConvexArea}\xspace}
\newcommand{\Att}[1]{\namedcomment{Att}{red}{#1}}

\newcommand{\estieESA}[1]{\namedcomment{Estie}{red}{#1}}
\newcommand{\csabaESA}[1]{\namedcomment{Csaba}{blue}{#1}}
\newcommand{\rathishESA}[1]{\namedcomment{Rathish}{purple}{#1}}
\newcommand{\joeESA}[1]{\namedcomment{Joe}{olive}{#1}}
\newcommand{\jieESA}[1]{\namedcomment{Jie}{magenta}{#1}}
\newcommand{\valESA}[1]{\namedcomment{Val}{cyan}{#1}}
\newcommand{\mayankESA}[1]{\namedcomment{Mayank}{blue}{#1}}

\newcommand{\estie}[1]{\namedcomment{Estie}{red}{#1}}
\newcommand{\csaba}[1]{\namedcomment{Csaba}{blue}{#1}}
\newcommand{\rathish}[1]{\namedcomment{Rathish}{purple}{#1}}
\newcommand{\joe}[1]{\namedcomment{Joe}{olive}{#1}}
\newcommand{\jie}[1]{\namedcomment{Jie}{magenta}{#1}}
\newcommand{\val}[1]{\namedcomment{Val}{cyan}{#1}}
\newcommand{\mayank}[1]{\namedcomment{Mayank}{blue}{#1}}

\renewcommand{\rathish}[1]{} 
\renewcommand{\estie}[1]{} 
\renewcommand{\csaba}[1]{} 
\renewcommand{\joe}[1]{} 
\renewcommand{\jie}[1]{} 
\renewcommand{\val}[1]{} 
\renewcommand{\mayank}[1]{} 

\newcommand{\rathishI}[1]{} 
\newcommand{\estieI}[1]{} 
\newcommand{\csabaI}[1]{} 
\newcommand{\joeI}[1]{}
\newcommand{\jieI}[1]{} 
\newcommand{\valI}[1]{} 
\newcommand{\mayankI}[1]{}


\newcommand{\varK}{24\xspace}
\renewcommand{\epsilon}{\varepsilon}
\newcommand{\bet}{B$^{\varepsilon}$-tree\xspace}
\newcommand{\bets}{B$^{\varepsilon}$-trees\xspace}
\newcommand{\bonehalftree}{B$^{1/2}$-tree\xspace}
\newcommand{\bonehalftrees}{B$^{1/2}$-trees\xspace}
\newcommand{\pf}{popcorn filter\xspace}

\newcommand{\appref}[1]         {Appendix~\ref{app:#1}}
\newcommand{\applabel}[1]    {\label{app:#1}}

\newcommand{\chapref}[1]        {Chapter~\ref{chap:#1}}
\newcommand{\secref}[1]         {Section~\ref{sec:#1}}
\newcommand{\seclabel}[1]    {\label{sec:#1}}
\newcommand{\subsecref}[1]      {Subsection~\ref{subsec:#1}}
\newcommand{\subseclabel}[1]    {\label{subsec:#1}}
\newcommand{\secreftwo}[2]      {Sections \ref{sec:#1} and~\ref{sec:#2}}
\newcommand{\secrefthree}[3]    {Sections \ref{sec:#1}, \ref{sec:#2}, and \ref{sec:#3}}
\newcommand{\secreffour}[4]     {Sections \ref{sec:#1}, \ref{sec:#2}, \ref{sec:#3}, and~\ref{sec:#4}}
\newcommand{\quanlabel}[1] {\label{quan:#1}}
\newcommand{\quanref}[1]  {Quantity~\ref{quan:#1}}
\newcommand{\figlabel}[1]   {\label{fig:#1}}
\newcommand{\figref}[1]         {Figure~\ref{fig:#1}}
\newcommand{\figreftwo}[2]      {Figures \ref{fig:#1} and~\ref{fig:#2}}
\newcommand{\tabref}[1]         {Table~\ref{tab:#1}}
\newcommand{\tablabel}[1]   {\label{tab:#1}}
\newcommand{\stref}[1]          {Step~\ref{st:#1}}
\newcommand{\thmlabel}[1]   {\label{thm:#1}}
\newcommand{\thmref}[1]         {Theorem~\ref{thm:#1}}
\newcommand{\thmabbrevref}[1]         {Thm.~\ref{thm:#1}}
\newcommand{\claimlabel}[1]         {\label{claim:#1}}
\newcommand{\claimref}[1]         {Claim~\ref{claim:#1}}
\newcommand{\thmreftwo}[2]      {Theorems \ref{thm:#1} and~\ref{thm:#2}}
\newcommand{\lemlabel}[1]   {\label{lem:#1}}
\newcommand{\lemref}[1]         {Lemma~\ref{lem:#1}}
\newcommand{\algolabel}[1]   {\label{alg:#1}}
\newcommand{\algoref}[1]         {Algorithm~\ref{alg:#1}}
\newcommand{\lemreftwo}[2]      {Lemmas \ref{lem:#1} and~\ref{lem:#2}}
\newcommand{\lemrefthree}[3]    {Lemmas \ref{lem:#1}, \ref{lem:#2}, and~\ref{lem:#3}}
\newcommand{\corlabel}[1]   {\label{cor:#1}}
\newcommand{\corref}[1]         {Corollary~\ref{cor:#1}}
\newcommand{\nonlabel}[1]    {\label{blank:#1}}
\newcommand{\nonref}[1]          {~(\ref{blank:#1})}
\newcommand{\eqlabel}[1]    {\label{eq:#1}}
\newcommand{\eqreff}[1]          {(\ref{eq:#1})}
\renewcommand{\eqref}[1]          {Eq.~\ref{eq:#1}}
\newcommand{\eqreftwo}[2]       {(\ref{eq:#1}) and~(\ref{eq:#2})}
\newcommand{\ineqlabel}[1]    {\label{ineq:#1}}
\newcommand{\ineqref}[1]        {Inequality~(\ref{ineq:#1})}
\newcommand{\ineqreftwo}[2]     {Inequalities (\ref{ineq:#1}) and~(\ref{ineq:#2})}
\newcommand{\invref}[1]         {Invariant~\ref{inv:#1}}
\newcommand{\deflabel}[1]    {\label{def:#1}}
\newcommand{\defref}[1]         {Definition~\ref{def:#1}}
\newcommand{\propref}[1]        {Property~\ref{prop:#1}}
\newcommand{\propreftwo}[2]     {Properties~\ref{prop:#1} and~\ref{prop:#2}}
\newcommand{\proplabel}[1]        {\label{prop:#1}}

\newcommand{\caseref}[1]        {Case~\ref{case:#1}}
\newcommand{\casereftwo}[2]     {Cases \ref{case:#1} and~\ref{case:#2}}
\newcommand{\lilabel}[1]        {\label{li:#1}}
\newcommand{\liref}[1]          {line~\ref{li:#1}}
\newcommand{\Liref}[1]          {Line~\ref{li:#1}}
\newcommand{\lirefs}[2]         {lines \ref{li:#1}--\ref{li:#2}}
\newcommand{\Lirefs}[2]         {Lines \ref{li:#1}--\ref{li:#2}}
\newcommand{\lireftwo}[2]       {lines \ref{li:#1} and~\ref{li:#2}}
\newcommand{\lirefthree}[3]     {lines \ref{li:#1}, \ref{li:#2}, and~\ref{li:#3}}
\newcommand{\exref}[1]          {Exercise~\ref{ex:#1}}
\newcommand{\princref}[1]       {Principle~\ref{prop:#1}}

\newcommand{\obslabel}[1]   {\label{obs:#1}}

\newcommand{\resultref}[1]         {Result~\ref{result:#1}}
\newcommand{\resultlabel}[1]   {\label{result:#1}}
\newcommand{\resultreftwo}[2]      {Results~\ref{result:#1} and~\ref{result:#2}}
\newcommand{\resultrefthree}[3]    {Results~\ref{result:#1}, \ref{result:#2}, and~\ref{result:#3}}
\newcommand{\resultrefthrough}[2]      {Results~\ref{result:#1}-\ref{result:#2}}

\newcommand{\area}{{\rm area}}
\newcommand{\conv}{{\rm conv}}
\newcommand{\diam}{{\rm diam}}
\renewcommand{\per}{{\rm per}}
\newcommand{\len}{{\rm len}}
\newcommand{\opt}{{\rm OPT}\xspace}
\newcommand{\alg}{{\rm ALG}}

\newcommand{\denselist}{\itemsep 0pt\parsep=1pt\partopsep 0pt}
\newcommand{\bitem}{\begin{itemize}\denselist}
	\newcommand{\eitem}{\end{itemize}}
\newcommand{\benum}{\begin{enumerate}\denselist}
	\newcommand{\eenum}{\end{enumerate}}

\DeclarePairedDelimiter\ceil{\lceil}{\rceil}
\DeclarePairedDelimiter\floor{\lfloor}{\rfloor}


\section{Introduction}

Indoor localization is a challenging and important problem. While GPS technology is very effective outdoors, it generally performs poorly inside buildings, since GPS depends on line-of-sight to satellites. Thus, other techniques are being considered for indoor settings. One of the options being investigated for localization and tracking is to use one-dimensional tripwire sensors~\cite{He2004-xs} such as laser beams, video cameras with a narrow field of view~\cite{Zahnd_undated-pw}, and pyroelectric or infrared sensors~\cite{Gopinathan2003-oo,Gustafson1982-ng}. In these approaches, multiple sensors emitting directional signal beams
are deployed in an environment, with the beams inducing an arrangement that cuts the domain into cells, allowing one to track the movement of a mobile target from one cell to another when it crosses the signal beam. Since the accuracy of the localization depends on the sizes of the cells, it is desirable to cut the polygon into \e{small} pieces. With such beam deployment, one can also ensure that no ``large'' object can be ``hidden'' in the domain, since any such object will necessarily intersect one of the beams.

In the literature there have been studies of target localization and tracking using such ``tripwire'' sensors. Zheng, Brady, and Agarwal~\cite{Zheng2007-fl} consider general models of ``boundary sensors'' that are triggered when an object crosses them. They assume that the position of the sensors is already given and consider the signal processing problem of determining the location and trace of a target by the spatial and temporal sequence of the laser beams crossed by the target. In this paper, we focus on the problem of optimizing the placement of signal beam sensors to minimize the ambiguity of target location within each cell.

\paragraph{Problem Formulation and Notation.}
We study various versions of the laser cutting problem. The input polygon, denoted by~\P, is a closed polygonal domain (i.e., a connected compact set in $\mathbb{R}^2$ with piecewise linear boundary) having a total of $n$ vertices, $r$ of which are reflex (having internal angle greater than $\pi$).
The terms ``cut'' and ``laser'' will be used interchangeably to denote a chord of \P, i.e., a maximal line segment in \P whose relative interior lies in the interior of \P.
The \e{measure} (or \e{size}) of a cell in the arrangement will be (a)~the cell's area, (b)~its diameter (defined as the maximum Euclidean distance between two points of the cell), or (c)~the radius of the largest inscribed disk within the cell.


For each measure, we consider two formulations of the optimization problem:
\begin{itemize}
\item
\mm: Given a positive integer $k$, determine how to place $k$ laser beams in \P to minimize the maximum measure, \d, of a cell in the arrangement of the lasers.
\item
\mlm: Given $\d>0$, determine the smallest number of laser beams to cut \P into cells each of measure at most \d.
\end{itemize}
In \mlm, no generality is lost by taking the cell size bound, \d, to be~1. We assume that the optimal solution is greater than a constant $c$; otherwise, the problem can be solved optimally in $O(n^{\textrm{poly}(c)})$ time (in the real RAM model of computation, standard for geometric algorithms) by reducing it to a mathematical program whose variables are the locations of the lasers endpoints on the boundary of \P (the space of the variables would be split into regions of fixed combinatorial types for all the lasers, and in each region, the measures for the cells of the partition of \P will be explicitly written and optimized---since each cell has $\textrm{poly}(c)=O(1)$ complexity, the optimization problem will be of constant size).
It may be interesting to investigate also the opposite scenario and obtain efficient algorithms for minimizing the measures using a small given number of lasers.
Further variants of the problem may be defined. One 
possible requirement is to use only axis-aligned lasers---in fact, with this restriction 
(of primarily theoretical interest)
we obtain better approximations than for the more general case of unrestricted-orientation lasers.

\paragraph{Results.}
We give hardness results and approximation algorithms for several variants of the problems, using a variety of techniques. Specifically,\begin{itemize}
\item Section~\ref{sec:hardness} proves hardness of our problems in polygons \e{with holes}: we show that it is NP-hard to decide whether one can split the domain into pieces of measure at most \d, using a given number \k of lasers (this holds for any of the measures, which implies that both \mm and \mlm are hard for polygons with holes). Our hardness reductions hold using axis-parallel lasers, as well, which implies that the problem is hard with or without the restriction to axis-aligned lasers.
\item Section~\ref{sec:minlaser-area} gives an $O(\log r)$-approximation for \mla in \e{simple} polygons. The algorithm ``unrefines'' the ray shooting subdivision by Hershberger and Suri~\cite{suri}, merging the triangles bottom-up along the decomposition tree; the merging stops whenever the next merge would create a cell of area greater than \d, implying that the boundaries between the merged cells can be charged to disjoint parts of \P of area more than \d. The lasers are then put along the cell boundaries of the coarsened subdivision; since the subdivision is obtained by cutting out $O(1)$ children from parents in a tree on the original subdivision (where the children were separated from parents by polygonal chains of $O(1)$ complexity), we can charge these $O(1)$ lasers to the intersection of \opt with an area of more than \d. The remaining large pieces in the coarsened subdivision (e.g., triangles of area more than \d in the initial triangulation) are cut with a suitable grid of lasers, which is within a constant factor of optimal subdivision for each piece. The $O(\log n)$ approximation factor then follows from the fact that each laser could pass through $O(\log n)$ cells of the original subdivision (the subdivision's core property). To bring the approximation factor down to $O(\log r)$ we decompose \P into convex pieces with a decomposition whose stabbing number is $O(\log r)$ (a result, which may be of independent interest) and use the same scheme as with the Hershberger--Suri decomposition.
\item In Section~\ref{sec:minlaser-diameter} we present a bi-criteria approximation to the diameter version for \e{simple} polygons: if \k lasers can cut \P into pieces of diameter at most \d, we find a cutting with at most 2\k lasers into $O(\d)$-diameter pieces. In Section~\ref{sec:k-laser-mindiameter} we use the bi-criteria algorithm to give a constant-factor approximation to \md. Both algorithms use only axis-aligned lasers, yielding the same approximation guarantees for the versions with general-direction lasers and with axis-aligned lasers.
\item Section~\ref{sec:orthogonal-laser} gives a constant-factor approximation to \mld and \mla in \e{simple} polygons under the restriction that the lasers are axis-aligned. The algorithms are based on ``histogram decomposition'' with constant stabbing number and solving the problems in each histogram separately. 
\item In Section~\ref{sec:polygon-holes} we give a bi-criteria approximation to the diameter version in polygons \e{with holes} under the restriction that lasers are axis-parallel. The algorithm is similar to the one for simple polygons in that they both use a grid; however, everything else is different: in simple polygons we place lasers along grid lines, while in polygons with holes the grid lines just subdivide the problem (in fact, we consider the vertical and the horizontal strips separately). More importantly, even though we place axis-aligned lasers in both simple and nonsimple polygons, for the former we approximate cutting with arbitrary-direction lasers, while for the latter only cuttings with axis-aligned lasers (approximating cuttings with general-direction lasers in polygons with holes is open). We use the bi-criteria algorithm to give a constant-factor approximation to \md in polygons with holes---this part is the same as for simple polygons.
\item Section~\ref{sec:circle} gives an $O(\log\opt)$-approximation for \mlc in polygons \e{with holes}. The algorithm is based on a reduction to the SetCover problem.
\end{itemize}

Table~\ref{tab:results} summarizes our results.
The running times of our algorithms depend on the output complexity, which may depend on the size (area, perimeter, etc.) of \P. Some of our algorithms can be straightforwardly made to run in strongly-polynomial time, producing a strongly-polynomial-size representation of the output; for others, such conversion---which in general is outside our scope---is not easily seen.
Many versions of the problem still remain open. For simple polygons, despite considerable attempts, we have neither hardness results nor polynomial-time algorithms to compute an optimal solution; all of our positive results are approximation algorithms.

\begin{table}[!htb]
	\centering
	\begin{tabular}{ |c|c|c|c|c|  }
		\hline
		\multicolumn{1}{|c|}{ } & \multicolumn{2}{c|}{Axis-Parallel Lasers} & \multicolumn{2}{c|}{Unrestricted-Direction Lasers} \\
		\hline
		& \mlm & \mm & \mlm & \mm\\
		\hline
		Area  & $O(1)$ \ref{sec:orthogonal-laser}  & OPEN & $O(\log r)$ \ref{sec:minlaser-area} & OPEN \\
		\hline
		Diameter & $O(1)$ \ref{sec:orthogonal-laser}  & $O(1)$* \ref{sec:k-laser-mindiameter}, \ref{sec:polygon-holes} & bi-critreria \ref{sec:minlaser-diameter}  & $O(1)$ \ref{sec:k-laser-mindiameter} \\
		\hline
		In-circle radius & $O(\log \mbox{OPT})$* \ref{sec:circle} & OPEN & $O(\log \mbox{OPT})$*  \ref{sec:circle} & OPEN \\
		\hline
	\end{tabular}
\caption{Approximations for simple polygons. The results marked with asterisks apply also to polygons with holes (either directly or with a similar/extended algorithm).}\label{tab:results}
\end{table}


\noindent\textbf{Related Previous Work.}
Previous results on polygon decomposition~\cite{Keil00} use models that do not support laser cuts or are restricted to convex bodies. For example, \emph{Borsuk's conjecture}~\cite{Borsuk32,JenrichB14,KK93} seeks to partition a convex body of unit diameter in $\mathbb{R}^d$ into the minimum number of pieces of diameter less than one.
Conway's \e{fried potato problem}~\cite{croftunsolved,bezdek1995solution} seeks to minimize the maximum in-radius of a piece after a given number of \e{successive} cuts by hyperplanes for a convex input polyhedron in $\mathbb{R}^d$.  
Croft et al.~\cite[Problem~C1]{croftunsolved} raised a variant of the problem in which a convex body is partitioned by an \e{arrangement} of hyperplanes (i.e., our problem in $\mathbb{R}^d$), but no results have been presented.

Equipartition problems ask to partition convex polygons into convex pieces all having the same area or the same perimeter (or other measures)~\cite{barany2010equipartitioning,Blagojevic2014,Karasev2014,kostitsyna2015balanced,nandakumar2012fair,soberon2012}.
In these problems, the partition is not restricted to chords (or hyperplanes). Topological methods are used for existential results in this area, and very few algorithmic results are known~\cite{armaselu2015algorithms}.
Another related problem is the family of so-called \emph{cake cutting problems}~\cite{robertson1998cake,GH2005}, 
in which an infinite straight line ``knife'' is used to cut a convex ``cake'' into (convex) pieces that represent a ``fair'' division into portions.
In contrast, we are interested in cutting \emph{nonconvex} polygons into connected pieces.

In \cite{bose1998polygon} several variants of Chazelle's result from \cite{chazelle1982theorem} were explored, including cutting the polygon along a chord to get approximately equal areas of the two resulting parts.
Yet another related problem is that of ``shattering'' with arrangements \cite{Freimer-et-al}, in which one seeks to isolate objects in cells of an arrangement of a small number of lines, but without consideration of the size of the cells (as is important in our problem). 
\section{Hardness in Polygons with Holes}\label{sec:hardness}
We show that for all three measures (area, diameter, the radius of the largest inscribed circle) it is NP-hard to decide whether a given polygon \P \emph{with holes} can be divided into pieces of small measure using a given number of lasers, both for unrestricted-orientation and axis-aligned lasers. However, it is currently open whether these problems remain NP-hard for simple polygons.

We prove hardness by reduction from the 3SAT problem. Our polynomial-time reduction is similar to previous reductions for line cover problems, which are geometric variants of set cover~\cite{LangermanM05}. In particular, Megiddo and Tamir~\cite{MegiddoT82} proved that the \textsc{LineCover} problem is NP-complete: Given $n$ points in the plane and an integer $k$, decide whether the points can be covered by $k$ lines. Hassin and Megiddo~\cite{HassinM91} proved hardness for \textsc{MinimumHittingHorizontalUnitSegments} problem: Given $n$ horizontal line segments in the plane, each of unit length, and an integer $k$, decide whether there exists a set of $k$ axis-parallel lines that intersects all $n$ segments.
Our reduction is based on the idea of Hassin and Megiddo, but requires some adjustments to generate a subdivision of a polygon.

\begin{theorem}\label{thm:hardness1}
In a polygon with holes, both \ma and \mla are NP-hard (with or without the axis-aligned lasers restriction).
\end{theorem}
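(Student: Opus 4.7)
The plan is to reduce from 3SAT, adapting the Hassin--Megiddo hitting-set construction to generate a polygonal decomposition instance. Given a 3SAT formula with $n$ variables and $m$ clauses I build a polygon with holes $P$, a threshold $\delta$, and a laser budget $k$ such that the formula is satisfiable iff $P$ admits a partition into cells of area at most $\delta$ using at most $k$ chords. The common decision problem ``can $k$ chords achieve maximum cell area $\le \delta$?'' is the decision version of both \ma and \mla, so one reduction handles both optimization problems; and because every feature of the construction is axis-parallel, the same reduction will settle both the axis-aligned and the unrestricted-orientation versions.

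The domain is a large axis-aligned rectangle into which I carve rectangular holes forming $n$ horizontal \emph{variable tracks} and $m$ vertical \emph{clause columns}. For each variable $x_i$ I place two thin parallel horizontal corridors, a ``true corridor'' and a ``false corridor,'' each bounded by holes so that its area lies just above $\delta$ and at most $2\delta$; leaving a corridor uncut forces a cell of area exceeding $\delta$, while one horizontal chord through the corridor drops both resulting pieces to area at most $\delta$. For each clause $C_j$ I build a vertical clause column whose area likewise exceeds $\delta$ but whose admissible vertical-chord positions are exactly the three $x$-coordinates corresponding to its three literals, enforced by pairs of flanking holes. A chord placed at one of those $x$-coordinates threads simultaneously through the clause column and through the corresponding literal's corridor, serving both roles with a single laser. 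Setting $k$ to be $n$ (one chord per variable) plus $m$ (one per clause column) plus a fixed constant count for some uniform background regions leaves exactly enough budget; the budget is attainable iff every clause column is cut by a chord aligned with one of its true literals, i.e., iff the formula is satisfiable.

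Correctness has two directions. For the forward direction, a satisfying assignment prescribes one horizontal chord per variable inside the chosen truth corridor and one vertical chord per clause through a true literal's column, and a direct area computation verifies every resulting cell has area at most $\delta$. The reverse direction is a packing argument: each corridor and each clause column individually has area greater than $\delta$ and can only be reduced below $\delta$ by a chord that traverses it, so any valid cutting with only $k$ chords must spend exactly one chord per variable corridor-pair and one per clause column, and each clause-column chord is forced to coincide with the corridor chord of some variable; reading off which corridor of each variable is cut then yields a satisfying assignment.

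The main obstacle, and where the construction needs the most care, is ruling out ``cheating'' chords: tilted chords, off-center chords, or chords that attempt to serve multiple gadgets in unintended ways. I handle this by adding small holes flanking every intended chord position, so that a chord of significantly different slope or offset is clipped short by the flanking holes and can no longer bring every cell it would need to fix down to area $\delta$. These flankings force the useful chords to be axis-parallel and to lie exactly at intended coordinates, which is precisely what makes the same reduction valid both for the axis-aligned and for the unrestricted-orientation versions of \ma and \mla, completing the proof.
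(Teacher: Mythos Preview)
Your high-level strategy---reduce from 3SAT with variable and clause gadgets, arranging that a single chord can simultaneously encode a truth choice and satisfy a clause---is the same as the paper's. But the specific construction you sketch has a budget-accounting gap that breaks the forward direction.

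You give each variable \emph{two} corridors (true and false), each of area just above $\delta$, so all $2n$ of them must be hit. Your budget is $k=n+m+O(1)$. In the forward direction you spend $n$ horizontal chords, one through the \emph{chosen} truth corridor of each variable, and $m$ vertical chords on clause columns, each threaded through ``the corresponding literal's corridor.'' But that literal is a \emph{true} literal (you selected it because it witnesses satisfiability), so its corridor is precisely the one already cut by that variable's horizontal chord. Nothing ever hits the $n$ \emph{un}-chosen corridors, and each of those still has area exceeding $\delta$. So a satisfying assignment does not yield a valid cutting with $k$ chords as described. (Your reverse-direction claim that ``each clause-column chord is forced to coincide with the corridor chord of some variable'' is also inconsistent: one is vertical, the other horizontal.)

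The paper avoids this by giving each variable a \emph{single} $1\times 2$ room rather than two separate corridors; the variable's horizontal laser can sit at either of two adjacent grid lines (encoding true or false), and \emph{both} positions split that one room. Each clause gets \emph{five} rooms rather than one column, positioned so that a true literal's horizontal variable laser already slices one of them, leaving at most four rooms that two vertical lasers can finish. The paper also adds dedicated rooms at the ends of grid corridors (``separator gadgets'') that force $m+n+5$ specific lasers; this is how it rules out off-grid or tilted chords, rather than the flanking-hole mechanism you allude to but do not construct. The resulting budget is $3m+2n+5$, not $n+m$.
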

\begin{proof}
We reduce from 3-SAT. Let $\Phi$ be a boolean formula in 3CNF with $m$ clauses $c_1,\ldots ,c_m$, and $n$ variables $x_1,\ldots , x_n$. We construct an orthogonal polygon $P$ with holes and an integer $k$ such that $\Phi$ is satisfiable if and only if $P$ can be subdivided into regions of area at most $2$ using $k$ lasers. (The reduction goes through with or without the restriction that all lasers are axis-parallel).

We construct a polygon $P$ from the rectangle $B=[0,7m+2]\times [0,3n+4]$ by carving rectangular ``rooms'' connected by narrow corridors.
The rooms are pairwise disjoint and they each have area of $2$. The corridors are axis-parallel, run between opposite sides of the bounding box $B$, and their width is $1/(100\max\{m,n\})$. See Fig.~\ref{fig:reduction} for an illustration.

 \begin{figure*}[h]
 \centering
 	\includegraphics[width=0.75\textwidth]{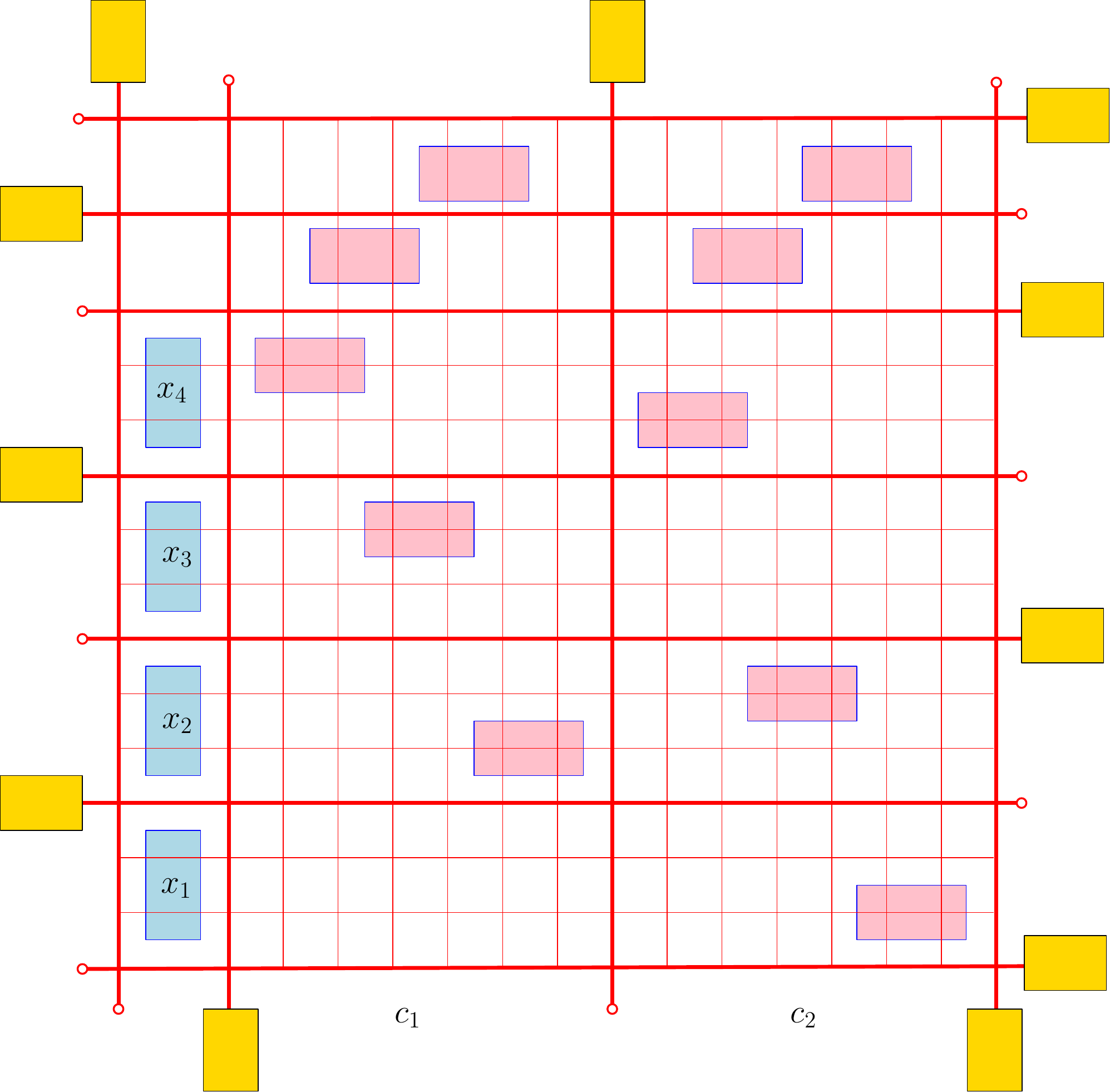}
 	\caption{An example for the rooms and corridors for $\Phi=(x_2\vee \overline{x}_3\vee\overline{x}_4)\wedge (x_1\vee \overline{x}_2\vee x_4)$. The rooms corresponding to variables are in blue. The rooms corresponding to clauses (five rooms per clause) are in pink. The corridors are shown in red, some of which are connected to additional rooms (shown in yellow).
The polygon is composed of the union of all the rooms and corridors.\label{fig:reduction}}
 \end{figure*}

\smallskip\noindent\textbf{Variable rooms.} For each variable $x_i$, $i=1,\ldots , n$, create one room:
$[\frac12,\frac{3}{2}]\times [3(i-1)+\frac12, 3i-\frac12]$. Note that all rooms are to the
left of the line $x=2$.

\smallskip\noindent\textbf{Clause rooms.} For each clause $c_j$, $j=1,\ldots, m$, create five rooms.
All five rooms have size $2\times 1$ and lie between the lines $x=7(j-1)+2$ and $x=7j+2$.
Three out of five rooms are aligned with the variable rooms. Suppose $c_j$ contains the variables
$x_i$, $x_{i'}$, and $x_{i''}$, where $i<i'<i''$. If $x_i$ is nonnegated, then create the room
$[7(j-1)+\frac12, 7(j-1)+\frac{5}{2}]\times [3(i-1)+\frac12,3i-\frac{3}{2}]$; otherwise
create the room $[7(j-1)+\frac12, 7(j-1)+\frac{5}{2}]\times [3(i-1)+\frac{3}{2},3i-\frac12]$.
We create a room for $x_{i'}$ (resp., $x_{i''}$) analogously, shifted by a horizontal vector
$(0,2)$ (resp;., $(0,4)$). Note that the $x$-projections of these rectangles do not overlap.
Two additional rooms lie above the variable rooms:
$[7(j-1)+\frac{3}2, 7(j-1)+\frac{7}{2}]\times [2n+\frac12,2n+\frac{3}{2}]$ and
$[7(j-1)+\frac{7}2, 7(j-1)+\frac{11}{2}]\times [2n+\frac{5}{2},2n+\frac{7}{2}]$.

\smallskip\noindent\textbf{Corridors and separator gadgets.}
Create narrow corridors along the vertical lines $x=0,2,3,\ldots, 7m$ and
horizontal lines $y=0,1,2,\ldots , 3n$, $y=3n+2$, and $y=3n+4$. Add rectangular rooms
of area $2$ at one end of some of the corridors. Specifically, we add rooms to the corridors at
$x=0$ and $x=7j+2$ for $j=0,1,\ldots , m$ alternately at the top and bottom endpoints;
and similarly for the corridors at $y=3i$ for $i=0,1,\ldots, n$, $y=3n+2$, and $y=3n+4$,
alternately at the left and right endpoints.
Altogether, $m+n+5$ corridors have rooms at their endpoints.

Finally, we set the parameter $k=3m+2n+5$. This completes the description of an instance corresponding to the Boolean formula $\Phi$.

\smallskip\noindent\textbf{Equivalence.}
Let $\tau:x_i\rightarrow \{\texttt{true},\texttt{false}\}$ be a satisfying truth assignment for $\Phi$.
We show that $P$ can be subdivided by $k$ lasers into regions of area at most $2$.
Place lasers at all horizontal and vertical lines that have additional rooms at their endpoints; this requires $m+n+5$ lasers.
These lasers subdivide $P$ into subpolygons that each intersect at most one room.
For $i=1,\ldots ,n$, if $\tau(x_i)=\texttt{true}$, then place a horizontal laser at $y=3(i-1)+1$ (along the bottom corridor touching room for $x_i$), otherwise at $y=3(i-1)+2$ (along the top corridor touching room for $x_i$).
These lasers split each variable room into two rectangles of area $\frac12$ and $\frac32$.
For $j=1,\ldots, m$, we place two vertical lasers that subdivide the rooms associated with clause $c_j$.
Since $\tau$ is a satisfying truth assignment, the rooms corresponding to true literals are already
split by horizontal lasers. As can easily be checked, the remaining (at most 4) rooms can be split
using two vertical lasers. Now $P$ is subdivided into pieces that each intersect at most one room,
and contains at most $1.5$ area of each room. Since the corridors are narrow,
the area of each piece is less than 2, as required.
We have used $n$ horizontal lasers for the variables, and $2m$ vertical lasers for clauses.
Overall, we have used $(m+n+5)+n+2m=3m+2n+5$ lasers.\\
Suppose now that $k=3m+2n+5$ lasers can subdivide $P$ into polygons of area at most 2.
We show that $\Phi$ is satisfiable. The area of each room is about 2, so they each intersect
at least one laser. Each variable room requires at least one laser; and the $n$ variable rooms
jointly require $n$ lasers (as no laser can intersects two variable rooms).
Each clause is associated with two rooms above the line $y=3n$; which jointly require two lasers.
Overall these rooms require $2m$ lasers.

Note that a laser that intersects a clause rooms above $y=3n$
or a variable room cannot intersect any room at the end of corridors.
We are left with at most $k-(n+2m)=m+n+5$ lasers to split these rooms.
Since we have precisely $m+n+5$ rooms at the end of the corridors, and no laser
can intersect two such rooms, there is a unique laser intersecting each of these rooms.
As argued above, for $i=1,\ldots , n$, the room associated with $x_i$ intersects only one laser.
If this laser intersects the corridor at $y=3(i-1)+1$, then let $\tau(x_i)=\texttt{true}$,
otherwise $\tau(x_i)=\texttt{false}$. For $j=1,\ldots , m$, there are two lasers that intersect
the two rooms associated with $c_j$ above $y=3n$. These two lasers cannot intersect all three
rooms associated with $c_j$ below $y=3n$. Consequently, at least one of these rooms intersects
a laser coming from a variable room. Hence each clause contains a true literal, and $\Phi$ is satisfiable.
\end{proof}

\begin{theorem}\label{thm:hardcircle}
In polygons with holes, both \mc and \mlc are NP-hard (with or without the axis-aligned lasers restriction).
\end{theorem}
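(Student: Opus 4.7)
The plan is to adapt the 3-SAT reduction used for Theorem~\ref{thm:hardness1} so that the in-circle radius becomes the controlling measure, while keeping the same high-level polygon structure: a rectangular bounding box containing pairwise-disjoint rectangular \emph{rooms} joined by narrow axis-parallel \emph{corridors}, plus \emph{separator rooms} at certain corridor endpoints. As before, each variable will be assigned two candidate axis-parallel chord positions (corresponding to $\tau(x_i) = \texttt{true}$ and $\tau(x_i) = \texttt{false}$) and each clause a constant number of mandatory chord positions through its five clause rooms; the integer $k = 3m + 2n + 5$ is set so that a satisfying assignment uses exactly $k$ lasers, and conversely any $k$-laser subdivision of in-radius at most $R$ forces a satisfying assignment.

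The substantive change from Theorem~\ref{thm:hardness1} is the geometry of the rooms. A $1\times 2$ rectangle has in-radius $1/2$, and no single axis-parallel chord can bring the in-radii of both resulting pieces below $1/2$; rectangles of that aspect ratio therefore cannot play the role of ``room'' in the in-radius version. I would instead take each room to be a unit square and set the target to $R = 3/8 + \epsilon$ for a small $\epsilon > 0$. An uncut unit square has in-radius $1/2 > R$, so every room must contain a chord; and a horizontal axis-parallel chord at height $\alpha \in [1/4, 3/4]$ within a unit square yields pieces of in-radii at most $\max(\alpha, 1-\alpha)/2 \leq 3/8 < R$. Placing the two candidate corridors of variable room $R_i$ at heights $1/4$ and $3/4$ above its base, and positioning the clause rooms so that the literal-laser of clause $c_j$ passes through the center of the corresponding clause room, gives the same truth-encoding as in Theorem~\ref{thm:hardness1}. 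Corridor widths are kept strictly below $2R$ so that their own in-radii fall below $R$, and separator rooms at the $m + n + 5$ distinguished corridor endpoints again force a baseline set of lasers.

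The equivalence is then proved following Theorem~\ref{thm:hardness1}. \emph{Forward:} a satisfying assignment $\tau$ is converted into $k = 3m + 2n + 5$ lasers, namely the $m + n + 5$ baseline lasers in the separator corridors, $n$ horizontal literal-lasers placed in the ``true'' or ``false'' corridor of each $R_i$ according to $\tau$, and $2m$ vertical clause-lasers slicing the remaining clause rooms through (near) their centers; each resulting cell has in-radius at most $R$. \emph{Converse:} since the rooms are pairwise disjoint and the corridor widths prevent any axis-parallel chord of \P from meeting two rooms simultaneously, each of the $k$ lasers in an optimal subdivision intersects exactly one room; the horizontal chord through each variable room must lie in one of its two designated corridors (any other position leaves a piece of in-radius strictly greater than $R$), which defines a truth assignment; and the same $2m$-lasers-cover-at-most-four-of-the-five-rooms counting argument used in Theorem~\ref{thm:hardness1} forces at least one literal per clause to be true. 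Hence $\Phi$ is satisfiable, establishing NP-hardness of both \mc and \mlc under the axis-parallel restriction.

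The main obstacle is lifting the converse direction to \emph{unrestricted-orientation} lasers, since a single slanted chord of a unit square can in principle produce two pieces of in-radius below $R$ without lying in any designated corridor. Two geometric lemmas neutralize this: (i) the inter-room spacing and the corridor widths can be chosen so that every chord of \P that meets two rooms is axis-parallel, preserving the ``one laser per room'' count; and (ii) within any single variable room, the chords that reduce the in-radius to at most $R$ on both sides all enter and exit through one of the two axis-parallel designated corridors (up to a perturbation too small to affect the combinatorial truth encoding). Once these observations are established, the reduction shows that \mc and \mlc in polygons with holes remain NP-hard whether or not the lasers are restricted to be axis-aligned.
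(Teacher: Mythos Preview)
Your high-level plan coincides with the paper's: both adapt the area reduction of Theorem~\ref{thm:hardness1} by replacing the $1\times 2$ rooms with squares and recalibrating the threshold. The paper is slightly more economical, using $\tfrac32\times\tfrac32$ squares concentric with the original rooms and setting $\delta=\tfrac58$; with those parameters the integer grid of corridors from Theorem~\ref{thm:hardness1} can be kept verbatim (a grid line through such a square leaves pieces of in-radius at most $\tfrac58$), whereas your unit squares with corridors at heights $\tfrac14$ and $\tfrac34$ force you to redesign the corridor network.

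Your proposed lemma~(ii) for the unrestricted-orientation case will not go through as stated. In a unit-square room with narrow corridor openings at heights $\tfrac14$ and $\tfrac34$ on both sides, the skew chord from $(0,\tfrac14)$ to $(1,\tfrac34)$ is a chord of $P$ and splits the square into two congruent right trapezoids of in-radius $\tfrac{3(3-\sqrt5)}{8}\approx 0.286<\tfrac38$; so a non-axis-parallel chord through two \emph{different} designated corridors already meets the in-radius bound, and there is no ``small perturbation'' sense in which it encodes a single truth value. This does not actually break the reduction, however: any skew (or vertical) laser through a variable room is confined near that room by the narrow corridors and hence cannot reach any clause room, so it is a wasted laser that satisfies no literal. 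The tight count $k=3m+2n+5$ and the room-disjointness argument then still force, for every clause, at least one literal room to be hit by a \emph{horizontal} variable laser lying along one of the two designated corridors, which yields a satisfying assignment. Replace lemma~(ii) by this counting observation rather than trying to rule out skew chords geometrically; the paper's own proof (like that of Theorem~\ref{thm:hardness1}) relies on exactly this robustness and does not attempt to exclude non-axis-parallel cuts inside a single room.
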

\begin{proof}
The hardness reduction from 3SAT in the proof of Theorem~\ref{thm:hardness1} goes through if we replace all the $1\times 2$ and $2\times 1$ rectangles in the variable and clause gadgets with axis-aligned squares of size $\frac32\times \frac32$, and set the desired inradius to be $\delta= \frac58$. These parameters ensure that if a laser along a grid line intersects $\frac32\times \frac32$, it subdivides it into two regtangles, each of inradius at most $\frac58$.
%
\end{proof}

\begin{theorem}\label{thm:hardness2}
In polygons with holes, both \md and \mld are NP-hard (with or without the axis-aligned lasers restriction).
\end{theorem}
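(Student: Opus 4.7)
The plan is to adapt the reduction from \thmref{hardness1} in the same spirit as \thmref{hardcircle}: keep the orthogonal polygon with holes that encodes a 3SAT formula $\Phi$ via rooms connected by narrow corridors, and adjust only the room dimensions and threshold $\delta$ so that the diameter dichotomy matches the area one.

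Concretely, I would scale each variable room to an $s\times 2s$ rectangle and each clause room to a $2s\times s$ rectangle, with the corridor spacings scaled accordingly. An uncut room then has diameter $s\sqrt{5}$, while a laser through one of the two admissible interior corridors (at distance $s/2$ or $3s/2$ from one end of the long side) splits it into two pieces of diameters $s\sqrt{5}/2$ and $s\sqrt{13}/2$. I would then pick $\delta$ slightly above $s\sqrt{13}/2$ but strictly below $s\sqrt{5}$, so that an uncut room exceeds $\delta$ in diameter while any properly cut room does not. With this choice, the equivalence argument of \thmref{hardness1} carries over essentially verbatim: $\Phi$ is satisfiable iff $P$ can be subdivided by the same family of separator, truth-assignment, and clause lasers into cells of diameter at most $\delta$; conversely, in any $k$-laser cutting achieving diameter $\delta$, the counting/uniqueness arguments force each variable room to be cut along exactly one of its two interior corridors, which we read off as the satisfying assignment. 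As in \thmref{hardness1}, the reduction goes through both with and without the axis-parallel restriction, since a narrow corridor geometrically forces any chord crossing it to be nearly axis-aligned, leaving no extra cutting power to oblique lasers.

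The main obstacle is to bound the diameter of \emph{every} cell of the final arrangement, not merely the pieces of the rooms---including L-shaped cells that fuse a room fragment with a corridor stub and cells consisting solely of long corridor segments between two perpendicular cuts. The corridor widths can be made negligibly small, so the short dimension contributes nothing; only the length of the longest corridor run between two consecutive perpendicular lasers matters. Horizontal cuts (separator and truth-assignment lasers) are spaced at most $2s$ apart and thus safely within $\delta$. Within each clause column of width $7s$, however, the two clause lasers can leave a sub-interval of length up to about $7s/3 > s\sqrt{5}$, exceeding $\delta$. The cleanest fix is to add one extra vertical separator laser per clause column (increasing $k$ by $m$), chosen so that every resulting $x$-sub-interval has length at most $2s$; this absorbs only $O(m)$ additional lasers into $k$, preserves the combinatorial equivalence with 3SAT, and yields NP-hardness of both \md and \mld in polygons with holes, with or without the axis-aligned restriction.
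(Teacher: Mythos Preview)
Your approach has a genuine gap in the converse direction (``cuttable $\Rightarrow$ satisfiable''), and it stems from the very obstacle you flag. In the area reduction, corridors have negligible area, so the \emph{only} objects forcing lasers are rooms; this is what makes the counting tight and lets one conclude that each variable room sees exactly one laser. For diameter, long corridor segments also force lasers. Your fix---bumping $k$ by $m$ for extra vertical separators---does not restore tightness: nothing pins those extra $m$ lasers to specific corridors. In a hypothetical $k$-laser cutting, an adversary could spend those $m$ lasers elsewhere (e.g., giving a second laser to some variable rooms, or cutting an awkward clause column with three verticals), and then your ``each room gets exactly one laser'' conclusion fails, and with it the extraction of a truth assignment. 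You also cannot argue the corridors are automatically handled by the room lasers, since a laser meeting a room need not lie along a corridor.

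The paper takes a structurally different route to eliminate this slack. It replaces the rooms by small spike gadgets, and---crucially---lines the entire grid with \emph{paired side-corridors}: each has diameter just above $\delta$, and the two members of a pair sit on opposite sides of a grid corridor with doors aligned so that a single laser (and essentially only that laser) splits both. This forces one specific laser per pair in \emph{every} optimal solution, and those forced lasers already chop the whole corridor network into pieces of diameter at most $\delta$, independently of the truth assignment. With the grid taken care of rigidly, exactly $n+2m$ lasers remain, and the counting argument from the area proof transfers cleanly. Your proposal lacks any mechanism that rigidly accounts for the corridor-cutting budget; adding separator \emph{rooms} rather than merely lasers would help the count but still would not control \emph{where} each such laser sits, so horizontal corridor runs could still exceed $\delta$. (A smaller issue: you pick $\delta$ ``slightly above $s\sqrt{13}/2\approx 1.803s$'' yet claim vertical corridor runs of length $2s$ are ``safely within $\delta$''; you would need $\delta>2s$, forcing $\delta\in(2s,s\sqrt{5})$.)
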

\begin{proof}
We reduce the problem from 3-SAT. Let $\Phi$ be a boolean formula in 3CNF with $m$ clause  $c_1,\ldots ,c_m$, and $n$ variables $x_1,\ldots , x_n$. We construct a polygon $P$ with holes and an integer $k$ such that $\Phi$ is satisfiable if and only if $P$ can be subdivided into regions of diameter at most \d using $k$ lasers (where \d=1.103 and $k$ is polynomial in $m$ and $n$ as specified below).

Similarly to the proof of Theorem~\ref{thm:hardness1}, we start with a network of horizontal and vertical corridors in a bounding box $B=[0,7m+2]\times [0,3n+4]$, which we call a \emph{grid}. The holes of the polygon are the grid cells.
Instead of rectangular ``rooms''  of a certain area, we add \emph{side-corridors} of specified shapes and lengths.
The side-corridors are narrow, and do not introduce new holes; but they have an impact on the diameter of polygonal pieces.
See Fig.~\ref{fig:reduction2} for an overview.

 \begin{figure*}[htbp]
 \centering
 	\includegraphics[width=0.60\textwidth]{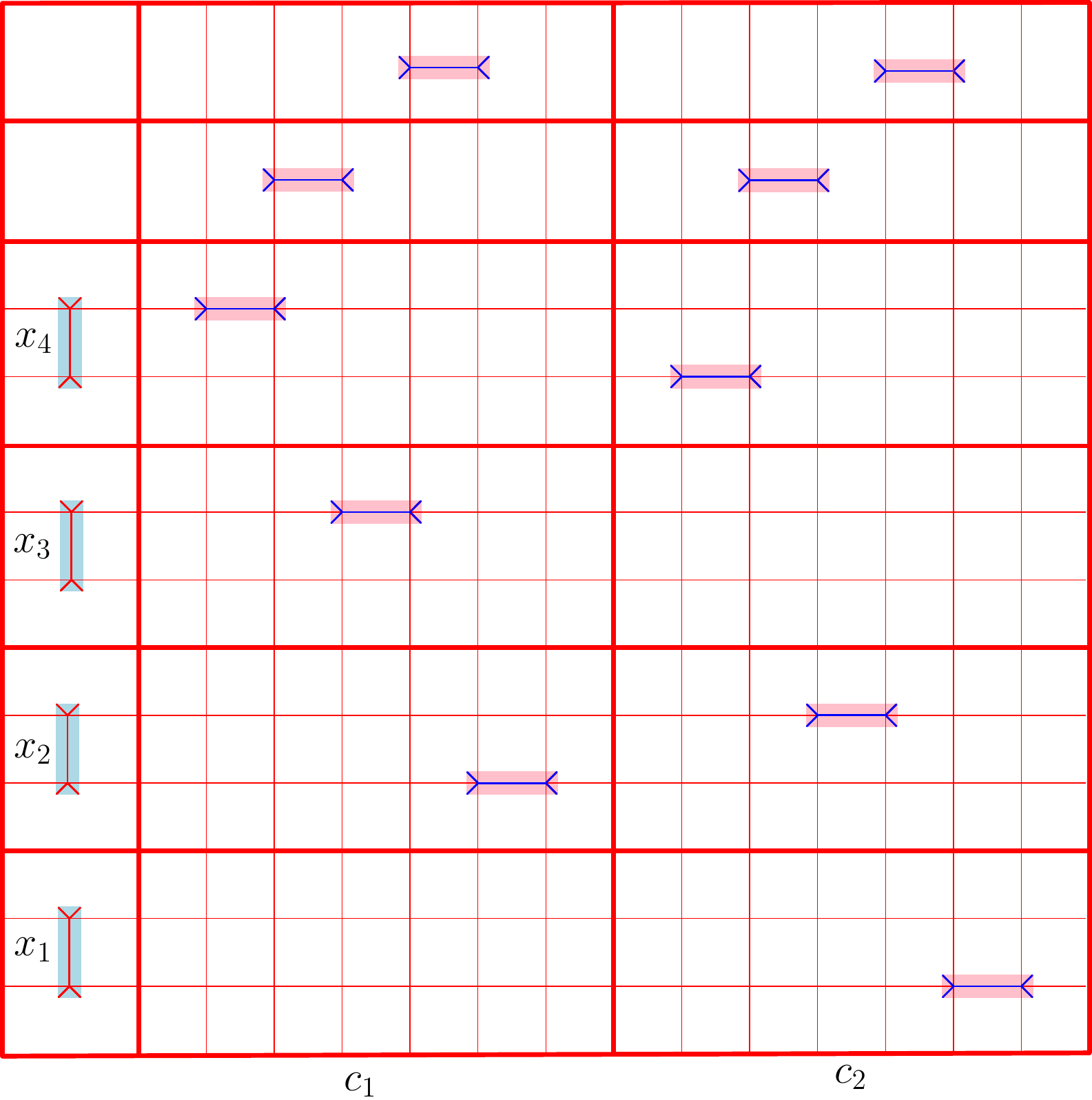}
 	\caption{An example for the grid of corridors for $\Phi=(x_2\vee \overline{x}_3\vee\overline{x}_4)\wedge (x_1\vee \overline{x}_2\vee x_4)$ (in red). The clause gadgets (five gadgets per clause) are in shaded pink. The side corridors are not shown; they are detailed in Fig.~\ref{fig:diameter-room2}.}\label{fig:reduction2}
 \end{figure*}

Each $1\times 2$ (resp., $2\times 1$) room in the proof of Theorem~\ref{thm:hardness1} is replaced by
a concentric $(1+2a)\times (2a)$ (resp., $(2a)\times (1+2a)$) bounding box $b$, where $a=0.05$.
See Fig.~\ref{fig:diameter-room1}.
These bounding boxes are not contained in the polygon; in each bounding box, \P contains
a unit-length axis-parallel corridor and four \emph{spikes} that connect
the four corners of the bounding box to the unit segment
(in particular, the diameter of the unit segment and the four spikes is $\diam(b)$).

Let the diameter threshold be $\delta:=1.103$, where $\diam(b)=\sqrt{(1+2a)^2+(2a)^2} = \sqrt{1.22}> 1.104$.
Each box $b$ can be split into pieces of diameter less than $\delta$ by a laser along a grid line passing through $b$.
Therefore, we can prove the correctness of the reduction similarly to the proof of Theorem~\ref{thm:hardness1}.
However, we need to control how the grid is subdivided into pieces of diameter at most $\delta$
(independent of the truth assignment of the variables). This is the primary role of the side-corridors
that we discuss in the remainder of the proof.

 \begin{figure}[htbp]
 \centering
 	\includegraphics[width=0.75\textwidth]{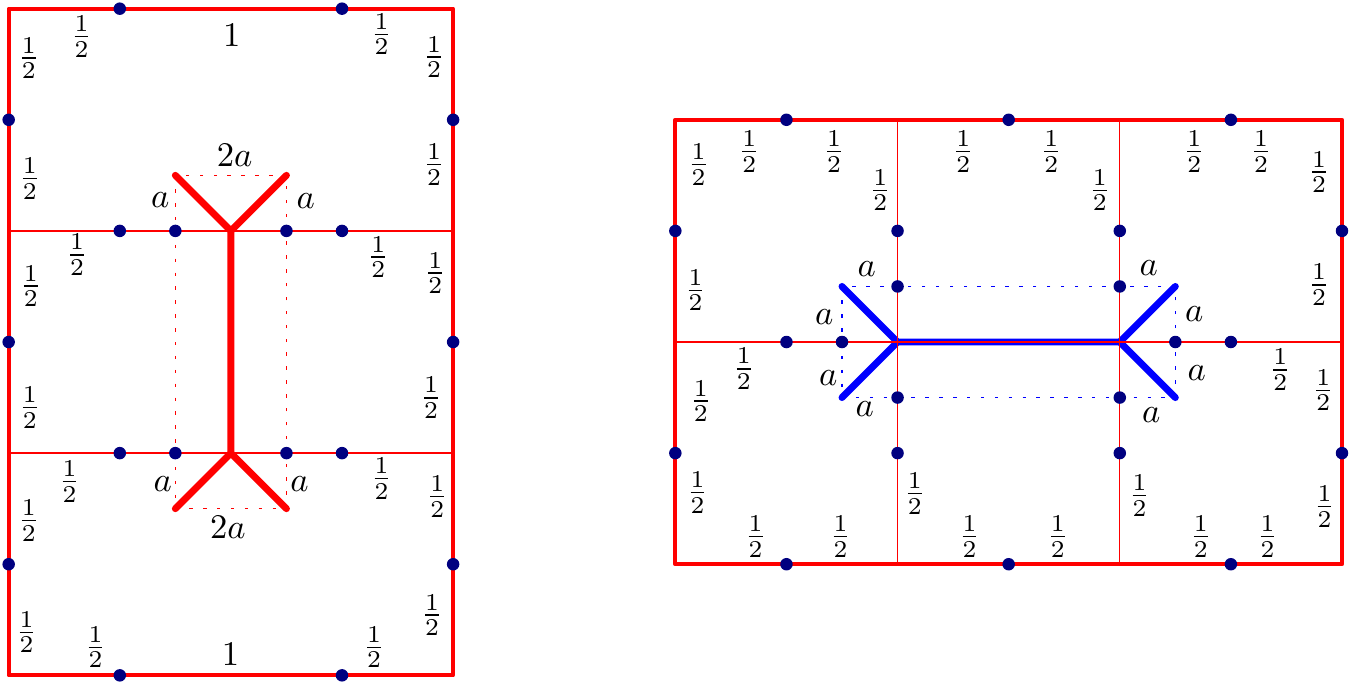}
 	\caption{A variable gadget (left) and a clause gadget (right), with four spikes of length $\sqrt{2}a$. Each dot indicates a pair of doors on opposite sides of corridors, which are shown in detail in Fig.~\ref{fig:diameter-room2}.\label{fig:diameter-room1}}
 \end{figure}

We attach a set $\mathcal{S}$ of side-corridors to the grid as follows:
\benum
\item For every $s\in \mathcal{S}$, $\diam(s)= \sqrt{5/4}\approx 1.118>\delta$, hence $s$ requires at least one laser.
\item Every $s\in \mathcal{S}$ is connected the grid by a very short and narrow corridor, called the \emph{door} of $s$; a laser through this door can split $s$ into two pieces of diameter less than $\d$.
\item $\mathcal{S}$ is partitioned into pairs $\{s_1,s_2\}\subset \mathcal{S}$ such that the doors of $s_1$ and $s_2$ are located symmetrically on opposite sides of a grid corridor, and a laser can split both $s_1$ and $s_2$ into pieces of diameter at most $\d$ iff $\{s_1,s_2\}$ is a pair. (Fig.~\ref{fig:diameter-room2}(middle)--(right) show side-corridors in two adjacent grid cells on opposite sides of a corridor.)
\item Therefore, every optimal solution contains a laser for each pair of side-corridors in $\mathcal{S}$.
\item There is a pair of doors at every intersection between a main corridor (of the grid) and the bounding box of a gadget.
\item The number $k$ of lasers equals $n+2m$ (one per variable gadget, two per clause gadget) plus half the number of side-corridors, which is proportional to the total length $O(mn)$ of the grid.
\eenum

 \begin{figure}[htbp]
 \centering
 	\includegraphics[width=0.98\textwidth]{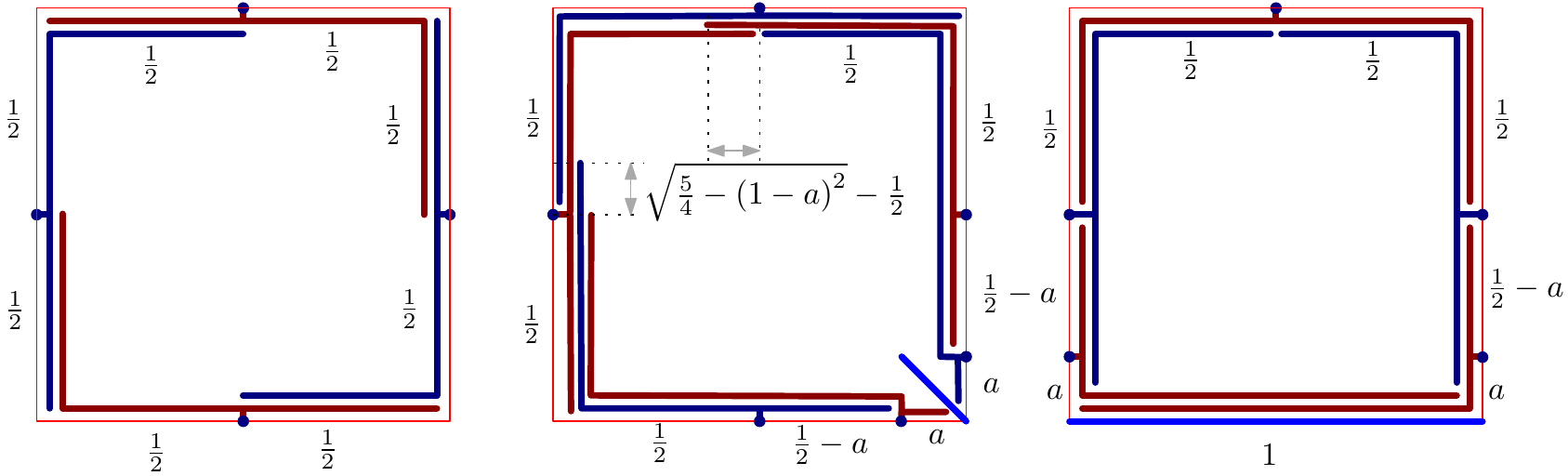}
 	\caption{Side-corridors for three types of holes of the grid. Left: A hole not adjacent to any gadget. Middle: a hole adjacent to a corner of a gadget (with a spike in the lower right corner). Right: a hole adjacent to the side of a gadget (the unit-length corridor of a gadget is the bottom edge, highlighted in blue).
 \label{fig:diameter-room2}}
 \end{figure}

Figure~\ref{fig:diameter-room1} indicates the locations of the doors to matching pairs of side-corridors in a grid;
and Figure~\ref{fig:diameter-room2} shows the specific shapes of these side-corridors in a grid cell.
Lasers placed at every pair of side-corridors decompose the polygon into cells of diameter at most $\delta$,
and the bounding boxes of the gadgets boxes. It follows that the reduction in the proof of Theorem~\ref{thm:hardness1}
goes though, the decision problem whether $P$ can be subdivided into pieces of diameter $\delta$ with a given
number of lasers is NP-hard.
\end{proof}

\section{Decomposition Algorithms for Simple Polygons}
\label{sec:general-cell}

In this section, we present approximation results for decomposing a simple polygon \P 
by lasers of arbitrary orientations (recall that $n$ denotes the total number of vertices of \P and $r$ is the number of reflex vertices).
We describe an $O(\log r)$-approximation for \mla (Section~\ref{sec:minlaser-area}), a bi-criteria algorithm for diameter (Section~\ref{sec:minlaser-diameter}), and a $O(1)$-approximation for \md (Section~\ref{sec:k-laser-mindiameter}).

\subsection{\mla}
\label{sec:minlaser-area}

Given a simple polygon \P and a threshold \d, we wish to find the minimum number of lasers that subdivide \P into pieces, each of area at most 1. We start with the easy $O(1)$-approximation in the special case when \P is a convex polygon.

\begin{lemma}
\label{lem:Convex-ConvexArea}
For every convex polygon $P$, we can find a set of $k=O(\sqrt{\area(P)})$ lasers that subdivide $P$ into pieces, each of area at most 1, in $O(k+n)$ time. Every decomposition into pieces of area at most 1 requires $\Omega(\sqrt{\area(P)})$ lasers.
\end{lemma}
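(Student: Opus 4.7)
My plan is to prove both bounds via a single construction: enclose $P$ in a rectangle of area $O(\area(P))$ and cut $P$ with chords that lie on a uniform grid inside that rectangle.

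For the \textbf{upper bound}, I would first show that $P$ is contained in some rectangle $R$ of area at most $2\area(P)$. Let $p,q\in P$ realize the diameter $D=\diam(P)$, let $pq$ be the ``$x$-axis'' in a rotated coordinate system, and let $w$ be the vertical extent of $P$ in this system. If $u,v\in P$ achieve the extreme $y$-coordinates on opposite sides of $pq$, then the triangles $puq$ and $pvq$ are disjoint, contained in $P$ by convexity, and have combined area $\tfrac12 Dw$; if $u,v$ lie on the same side of $pq$, one of the two triangles already has area $\tfrac12 Dw$. In either case $\area(P)\ge Dw/2$, so the $D\times w$ rectangle $R$ aligned with $pq$ contains $P$ and satisfies $\area(R)\le 2\area(P)$.

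Now set $m=\lceil\sqrt{ab}\,\rceil$, where $R=[0,a]\times[0,b]$, and draw $m-1$ equally spaced horizontal and $m-1$ equally spaced vertical lines inside $R$. Each of the $m^2$ grid cells has area at most $ab/m^2\le 1$. By convexity of $P$, every grid line is either disjoint from $P$ or meets $P$ in a single chord of $P$. These at most $2(m-1)=O(\sqrt{\area(P)})$ chords partition $P$ into pieces, each of which is of the form $P\cap(\text{grid cell})$ and therefore has area at most $1$, as required. The diameter direction is found in $O(n)$ time by rotating calipers, and all chord endpoints on $\partial P$ are computed in $O(k+n)$ time by a synchronized walk along the upper and lower chains of $\partial P$ in the sorted order of the grid coordinates.

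For the \textbf{lower bound}, I would use the standard face-count argument: adding the $i$-th chord intersects at most $i-1$ previously placed chords, is split into at most $i$ sub-arcs, and each sub-arc creates at most one new piece. Hence $k$ chords partition $P$ into at most $1+\sum_{i=1}^{k} i=1+k(k+1)/2$ pieces. If every piece has area at most $1$, then the number of pieces is at least $\area(P)$, and $1+k(k+1)/2\ge \area(P)$ forces $k=\Omega(\sqrt{\area(P)})$.

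There is no serious obstacle in the argument. The main care points are (i) justifying $\area(R)\le 2\area(P)$ via the triangle computation sketched above, (ii) handling the trivial case $\area(P)<1$ (no chord needed) separately, and (iii) organizing the chord-endpoint computation as a single sweep rather than binary searching per chord, so as to meet the $O(k+n)$ time bound.
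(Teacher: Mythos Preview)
Your proof is correct and follows the same overall strategy as the paper: enclose $P$ in a rectangle of area $O(\area(P))$, overlay a uniform $m\times m$ grid with $m=\lceil\sqrt{\area(R)}\rceil$, and use the quadratic face-count bound for the lower bound. The one substantive difference is how the bounding-rectangle area is controlled. The paper invokes John's ellipsoid theorem (an ellipse $E$ with $E\subset P\subset 2E$) and then takes the minimum-area bounding box $B$, computable in $O(n)$ time via~\cite{FreemanS75,Toussaint14}, to conclude $\area(B)\le\frac{4}{\pi}\area(2E)=O(\area(P))$. Your elementary diameter--width argument (the projection of $P$ onto the diametral line is exactly the diametral segment, and the two triangles $puq$, $pvq$ give $\area(P)\ge Dw/2$) yields the explicit constant~$2$ with no external machinery, and the rotating-calipers diameter computation is likewise $O(n)$. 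Either route suffices; yours is arguably more self-contained for this lemma, while the paper's appeal to John's theorem is shorter to state.
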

\begin{proof}
	For the lower bound, notice that the arrangement of $\ell$ lines has $O(\ell^2)$ faces, and so $\ell$ lasers decompose $P$ into $O(\ell^2)$ cells. By the pigeonhole principle, the area of the largest piece is $\Omega(\area(P)/\ell^2)\leq 1$. Hence $\ell\geq \Omega(\sqrt{\area(P)})$, as claimed.
	
	For the upper bound, recall that by John's theorem~\cite{Bar02,John48}, $P$ contains an ellipse $E$ such that $E\subset P\subset 2E$, where $2E$ is obtained from $E$ by central dilation of ratio 2. Let $B$ be a bounding box of $P$ (i.e., a rectangle of arbitrary orientation that contains $P$) of minimum area. Then $\area(E)\leq \area(P)\leq \area(B)\leq \frac{4}{\pi}\,\area(2E)=O(\area(E))$, hence $\area(B)\leq O(\area(P))$, and $B$ can be computed in $O(n)$ time~\cite{FreemanS75,Toussaint14}. Assume, w.l.o.g., that $B$ is axis-aligned with the lower-left corner at the origin. For every $m\in \mathbb{N}$, we can decompose the rectangle $B$ into $m^2$ congruent rectangles with $2(m-1)$ axis-parallel lasers: $m-1$ equally spaces horizontal (resp.,  vertical) lasers. If $m=\lceil \sqrt{\area(B)}\rceil=O(\sqrt{\area(P)})$, then the area of each piece is at most $\area(B)/m\leq 1$. These $\ell=2m-2$ lasers decompose $P$ into pieces of area at most $1$, as well, as required.
\end{proof}

\paragraph{Overview.}
We give a brief overview of our approximation algorithm for a simple polygon \P. The basic idea is to decompose \P into convex pieces, and use Lemma~\ref{lem:Convex-ConvexArea} to further decompose each convex piece.
There are two problems with this na\"{\i}ve approach: (1) a laser in an optimal solution may intersect several convex pieces (i.e., the sum of lower bounds for the convex pieces is not a global lower bound); and (2) the lasers used for a convex decomposition are not accounted for.
We modify the basic approach to address both of these problems.

We use the Hershberger--Suri triangulation (as a convex subdivision). For a simple polygon \P with $n$ vertices, Hershberger and Suri~\cite{suri} construct a Steiner triangulation into $O(n)$ triangles such that every chord of $P$ intersects $O(\log n)$ triangles.
We can modify their construction to produce a Steiner decomposition into a set $\mathcal{C}$ of convex cells (rather than triangles) such that each laser intersects $O(\log r)$ convex cells, where $r$ is the number of reflex vertices of \P. Thus, each laser of $\opt$ can help partition $O(\log r)$ convex cells; this factor dominates the approximation ratio of our algorithm.

A convex cell $C\in \mathcal{C}$ is \emph{large} if $\area(C)>1$, otherwise it is \emph{small}. We decompose each large convex cell using Lemma~\ref{lem:Convex-ConvexArea}. We can afford to place $O(1)$ lasers along the boundary of a large cell. We cannot afford to place lasers on the boundaries of all small cells. If we do not separate the small cells, however, they could merge into a large (nonconvex) region, so we need \emph{some} separation between them. In the algorithm below, we construct such separators recursively by carefully unrefining the Hershberger--Suri triangulation. The unrefined subdivision is no longer a triangulation, but we maintain the properties that
(i) each cell is bounded by $O(1)$ lasers within each pseudotriangle (and an arbitrary number of consecutive edges of $P$), and
(ii) every chord of $P$ intersects $O(\log n)$ cells.

\begin{figure*}[h]
\centering
\includegraphics[width=0.95\textwidth]{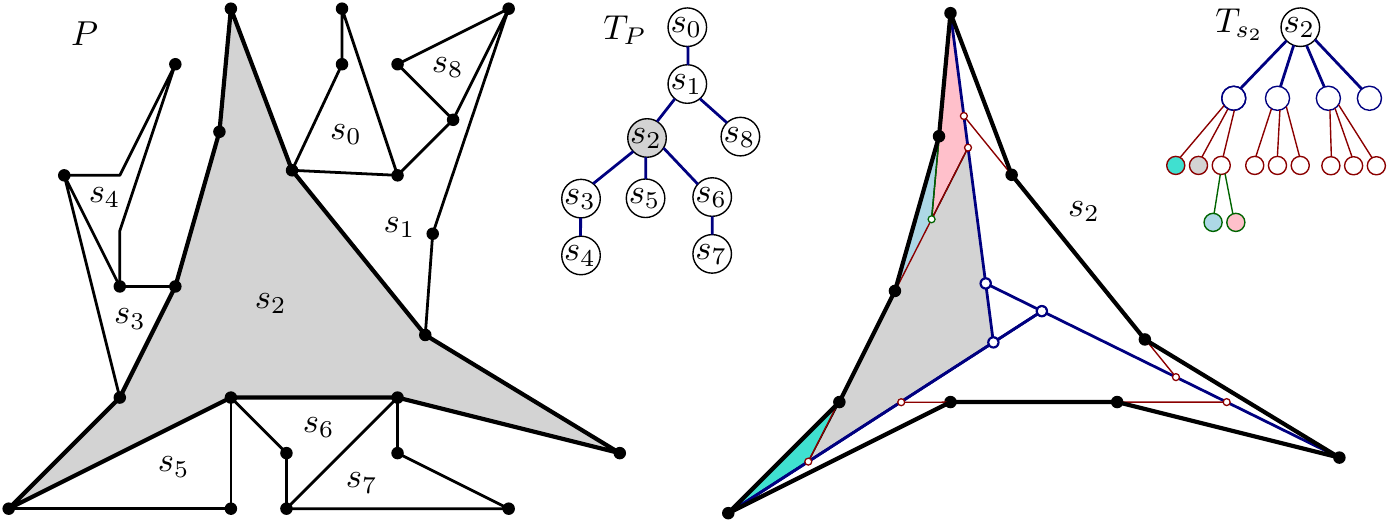}
\caption{Left: A simple polygon $P$, decomposed into pseudotriangles, and the dual graph $T_P$.
Right: A pseudotriangle $s_2$ is recursively subdivided into Steiner triangles, with recursion tree $T_{s_2}$. }
\label{fig:pseudotriangle}
\end{figure*}

\paragraph{Basic properties of the Hershberger--Suri triangulation.}
Given a simple polygon $P$ with $n$ vertices, Hershberger and Suri~\cite{suri} construct a Steiner-triangulation in two phases (see Fig.~\ref{fig:pseudotriangle} for an example): First, they subdivide $P$ into $O(n)$ pseudotriangles (i.e., simple polygons with precisely three convex vertices) using $O(n)$ noncrossing diagonals of $P$; and then subdivide each pseudotriangle into Steiner triangles. The runtime of their algorithm, as well as the number of Steiner triangles, is $O(n)$. Let $\mathcal{S}$ denote the set of pseudotriangles produced in the first phase; and let $T_P$ be the \emph{dual tree} of the pseudotriangles, in which each node corresponds to a pseudotriangle, and two nodes are adjacent if and only if the corresponding pseudotriangles share an edge (a diagonal of $P$).
Note that the degree of $T_P$ is not bounded by a constant (it is bounded by $n$), as a pseudotriangle may be adjacent to arbitrarily many other pseudotriangles.
We consider $T_P$ to be a rooted tree, rooted at an arbitrary pseudotriangle.
Then every nonroot pseudotriangle $s$ in $\mathcal{S}$ has a unique edge incident to the parent of $s$;
we call this edge the \emph{parent edge of~$s$}.

Hershberger and Suri subdivide each pseudotriangle $s\in S$ recursively: In each step, they use $O(1)$ line segments to subdivide a pseudotriangle into $O(1)$ pseudotriangles, which are further subdivided recursively until they obtain triangles. Let us denote by $T_s$ the recursion tree for $s$.
Each vertex $v\in T_s$ represents a region $R_v\subset s$: The root of $T_s$ represents $s$, and the leaves represent the Steiner triangles in $s$. The recursive subdivision maintains the following two properties:
(a) Every edge of $s$ is incident to a unique region in each level of $T_s$,
(b) For each node $v\in T_s$, the boundary between $R_v$ and $s\setminus R_v$ is a  polyline with $O(1)$ edges (that is, $R_v$ is bounded by $O(1)$ line segments inside $s$, and some sequence of consecutive edges of $s$).

\smallskip\noindent\textbf{Algorithm.}
We are ready to present an approximation algorithm for \mla.
Given a simple polygon $Q$, we begin by computing the Hershberger--Suri triangulation, the pseudotriangles $\mathcal{S}$, the dual tree $T_P$, and a recursion tree $T_s$ for each pseudotriangle $s\in \mathcal{S}$. We then process the pseudotriangles in a bottom-up traversal of $T_P$.



Within each pseudotriangle $s\in \mathcal{S}$, we unrefine the Steiner triangulation of $s$ by merging some of the cells into one cell (the resulting larger cells need not be triangular or convex). Initially, each node $v\in T_s$ corresponds to a region $R_v\subseteq s$. However, if we do not place lasers along the edges of $s$, then $R_v$ may be adjacent to (and merged with) other cells that are outside the pseudotriangle $s$, along the boundary of $s$. Since we have an upper bound on the total area of each cell in the final decomposition, we need to keep track of the area of the region on both sides of an edge of the pseudo-triangulation. In the course of unrefinement algorithm for all $s\in \mathcal{S}$, we compute nonnegative \emph{weights} $w(\cdot)$ for all edges of the pseudotriangulation. The weights are used for bookkeeping purposes. Specifically, the edges of $P$ have zero weight. In a bottom-up traversal of $T_P$,
when we start processing a pseudotriangle $s$, the \emph{weights} $w(e)$ have already been computed for all edges of the pseudo-triangle $s$ except the parent edge of $s$. The weight $w(e)$ for the parent edge $e$ of $s$ is determined when we have computed the unrefined subdivision of $s$;
and $w(e)$ will be the area of the unrefined cell in $s$ adjacent to the parent edge.
A node $v\in T_s$ initially corresponds to a region $R_v$ within the pseudotriangle $s$, but in the final decomposition of $P$, the node is part of some larger cell $\widehat{R}_v\subseteq P$, with $\area(\widehat{R}_v) = \area(R_v) + \sum_{e}w(e)$, where the summation is over all edges of $s$ on the boundary of $R_v$, and $w(e)$ denotes the area of the cell on the opposite side of $e$.

As the weight of the parent edge is not available yet when we unrefine $s$, we modify the recursion tree $T_s$ as follows: We choose the root to be the leaf $v_0\in T_s$ adjacent to the parent edge of $s$, and reverse the parent-child relation on all edges of $T_s$ along the $s$-$v_0$ path. We denote the modified recursion tree $T'_s$ (Fig.~\ref{fig:pseudotriangle2} (left)).
For all nodes $v$ along the $s$-$v_0$ path (including $s$ and $v_0$), we redefine the corresponding regions of the nodes in $T'_s$ as follows. We denote by $R_v(T_s)$ and $R_v(T'_s)$ the regions corresponding to node $v$ in trees $T_s$ and $T'_s$, respectively. We set $R_{v_0}(T'_s):=s$ and for all other nodes $v$ along the $s$-$v_0$ path (including $s$), we set $R_{v}(T'_s):=s\setminus R_u(T_s)$, where $u$ is the parent of $v$ in $T'_s$.
With a slight abuse of notation, we set $R_v = R_v(T'_s)$ for all $v\in T'_s$ for the remainder of the algorithm.
Note that $\area(R_{v})$ monotonically decreases with the depth in $T'_S$.



In a bottom-up traversal of $T_P$, consider every $s\in \mathcal{S}$.
We proceed with two phases (see Fig.~\ref{fig:pseudotriangle2} for an example).

\begin{figure}[htbp]
\centering
\includegraphics[width=0.95\textwidth]{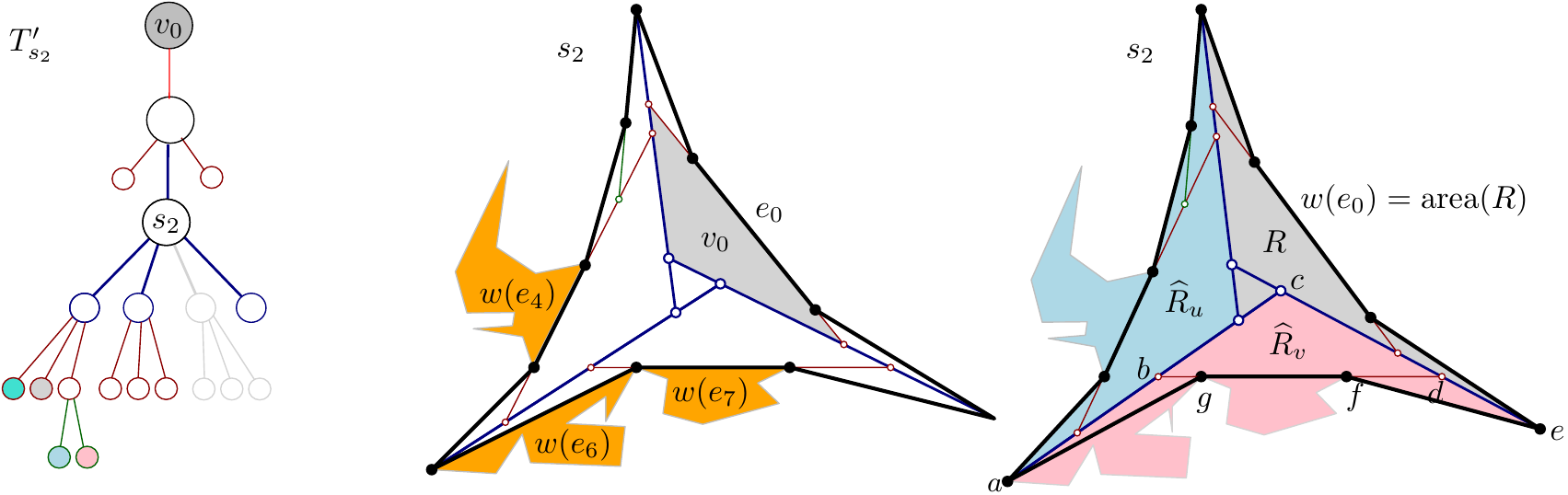}
\caption{Left: The modified recursion tree $T_{s_2}'$.
Middle: pseudotriangle $s_2$ with the initial Steiner triangulation, edge weights representing the areas of adjacent regions in the descendants of $s_2$, and the parent edge $e_0$ of $s_2$.
Right: The unrefined subdivision of $s_2$ into $R_u$, $R_v$, and $R$; larger cells $\widehat{R}_u$ and $\widehat{R}_v$ (blue and pink),
and the weight $w(e_0)=\area(R)$ of the parent edge of $s_2$ (gray).}
\label{fig:pseudotriangle2}
\end{figure}

\noindent Phase~1 of the algorithm is an unrefinement process, that successively merges small cells of the Hershberger--Suri triangulation (no lasers are involved).  We initialize three variables:
    \[R:=s,\hspace{6mm} T:=T'_s,\hspace{6mm} U_s:=\emptyset,\]
where $R\subseteq s$ is the region yet to be handled, $T$ is a subtree of $T_s'$ corresponding to the region $R$,
and $U_s$ is the set of interior-disjoint faces in $s$ produced by the unrefinement process.
While $\area(R)>1$, do the following:
\bitem
\item Find a lowest node $v\in T$ for which $\area(\widehat{R}_v)>1$,
\item Set $U_s:=U_s\cup \{\widehat{R}_v\}$,
\item Set $R:=R\setminus R_v$,
\item Delete the subtree rooted at $v$ from $T$, and
\item For all ancestors $u$ of $v$, set $\widehat{R}_u:=\widehat{R}_u\setminus \widehat{R}_v$.
\eitem
When the while loop ends, define the weight of the parent edge of $s$ to be $\area(R)$.

\noindent Phase~2 of the algorithm positions lasers in a pseudotriangle $s$ as follows.\\
For every region $\widehat{R}_v\in U_s$, do:
\bitem
\item Step~1. Place lasers along all edges of the boundary between $\widehat{R}_v$ and $s\setminus \widehat{R}_v$, and the boundaries between $R_{v}$ and $R_{v'}$ for all children $v'$ of $v$. For example in Fig.~\ref{fig:pseudotriangle2} (right), two lasers are placed along the edges $(a,c)$ and $(c,e)$ that disconnect $\widehat{R}_v$ from $s_2$. Also, a laser that is placed along edge $(b,d)$ that separates the children of $R_{v}$. 

\item Step~2. If $\area(R_v)\geq 1$ (which means $R_v$ has not merged with any other region in Phase~1, i.e., $\widehat{R}_v=R_v$ hence $\widehat{R}_v$ is convex), subdivide $\widehat{R}_v$ by $\Theta(\sqrt{\area(R_v)})$ lasers according to Lemma~\ref{lem:Convex-ConvexArea}.
%
\eitem
This completes the description of our algorithm.

\begin{theorem}\label{thm:minlaser-area}
Let \P be a simple polygon with $n$ vertices, and let $k^*$ be the minimum number of lasers that subdivide \P into pieces of area at most $1$. We can find an integer $k$ with $k^*\leq k\leq O(k^*\log n)$ in $O(n)$ time, and a set of $k$ lasers that subdivide \P into pieces of area at most $1$ in output-sensitive $O(k+n)$ time.
\end{theorem}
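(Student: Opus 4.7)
The plan is to verify correctness by induction on the bottom-up traversal of $T_P$ and then bound the number of lasers produced by the algorithm via a charging argument that uses the Hershberger--Suri stabbing property, namely that every chord of \P meets $O(\log n)$ triangles of their Steiner triangulation. The inequality $k^*\le k$ is immediate since $k^*$ is the optimum. For correctness, the invariant I maintain at each pseudotriangle $s\in\mathcal{S}$ is that every cell moved into $U_s$ has area at most~$1$ in the final subdivision of~\P, while the leftover region $R\subseteq s$ has weighted area at most~$1$, a value that is recorded as the weight of the parent edge of $s$ for use when its parent in $T_P$ is processed. Phase~1 explicitly enforces $\area(R)\le 1$ at termination; Phase~2~Step~1 places boundary lasers that isolate each $\widehat{R}_v\in U_s$ from the rest of \P; and Phase~2~Step~2 invokes Lemma~\ref{lem:Convex-ConvexArea} on any convex $\widehat{R}_v$ whose area exceeds~$1$, cutting it into pieces of area at most~$1$.

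For the approximation ratio, the key observation is that each $\widehat{R}_v\in U=\bigcup_s U_s$ has area strictly greater than~$1$ at the moment it is added, so any feasible cutting of \P into pieces of area at most~$1$---in particular \opt---must contain at least one laser that meets the interior of $\widehat{R}_v$. By the Hershberger--Suri stabbing bound, each \opt laser crosses $O(\log n)$ original Steiner triangles, and therefore at most $O(\log n)$ of the merged cells $\widehat{R}_v$, since every Steiner triangle belongs to exactly one such cell. Swapping the order of summation then gives $|U|\le O(k^*\log n)$. Because property~(b) of the Hershberger--Suri construction bounds the internal boundary of each $R_v$ by $O(1)$ edges and $v$ has $O(1)$ children in $T'_s$, Phase~2~Step~1 places $O(1)$ lasers per cell of $U$ and thus contributes $O(k^*\log n)$ lasers in total. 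For Phase~2~Step~2, Lemma~\ref{lem:Convex-ConvexArea} supplies the matching lower bound: any subdivision of a convex $R_v$ into area-$\le 1$ pieces requires $\Omega(\sqrt{\area(R_v)})$ chords of $R_v$, so at least this many \opt lasers cross $R_v$; a second double counting, again using the $O(\log n)$ stabbing bound, absorbs $\sum_v\Theta(\sqrt{\area(R_v)})$ into $O(k^*\log n)$.

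The main obstacle I expect is formalizing the claim ``one \opt laser meets $O(\log n)$ merged cells'': because a cell $\widehat{R}_v$ can be non-convex and can straddle several pseudotriangles via the weight mechanism, a chord might in principle enter and exit a single $\widehat{R}_v$ multiple times, so the naive one-interval-per-cell argument fails. The cleanest way around this is to read off the merged-cell sequence from the Steiner-triangle sequence visited by the chord: the latter has length $O(\log n)$, and each triangle lies in exactly one merged cell, so the number of \emph{distinct} $\widehat{R}_v$ met is trivially bounded by the number of triangles crossed. For the running time, the Hershberger--Suri decomposition, the dual tree $T_P$, the reoriented trees $T'_s$, and the bottom-up unrefinement each run in $O(n)$ time with constant-cost local area and weight updates, so a value of $k$ satisfying the claimed bound can be determined in $O(n)$ time; emitting the lasers themselves then takes an additional $O(k+n)$ time, yielding the output-sensitive complexity.
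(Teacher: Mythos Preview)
Your proposal is correct and follows essentially the same route as the paper's own proof: the same unrefinement invariant, the same charging of the $O(1)$ boundary lasers per cell in $U$ against \opt via the Hershberger--Suri $O(\log n)$ stabbing bound, the same use of Lemma~\ref{lem:Convex-ConvexArea} (both directions) for the large convex leaves, and the same running-time accounting. Your anticipated ``main obstacle'' and its resolution---bounding the number of distinct merged cells a chord meets by the number of Steiner triangles it crosses---is exactly the argument the paper uses as well.
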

\begin{proof}
Phase~1 of our algorithm (unrefinement) subdivides each pseudotriangle $s\in \mathcal{S}$ into regions such that each region corresponds to a subtree rooted at some node $v$ of the recursion tree $T_s'$. Node $v$ corresponds to a region $R_v\subset s$, and a possibly larger region $\widehat{R}_v\subset P$ which is the union of $R_v$ and adjacent regions in the descendant pseudotriangles of $s$ adjacent to $R_v$. Phase~1 of the algorithm ensures that $\area(\widehat{R}_v)> 1$ (therefore, $\widehat{R}_v$ must intersect at least one laser in $\opt$), but for all children $v'$ of $v$ in $T'_s$, we have $\area(\widehat{R}_{v'})\leq 1$.

In Step~1, the algorithm uses $O(1)$ lasers for each $v\in U_s$ to separate $\widehat{R}_v$ from $s\setminus \widehat{R}_v$.
Recall that the recursion tree $T_s$ has bounded degree. Consequently, we use $O(1)$ lasers to separate $\widehat{R}_{v'}$ from $\widehat{R}_v\setminus \widehat{R}_{v'}$ for all children $v'$ of $v$. These polylines subdivide $\widehat{R}_{v'}$ into smaller regions of area at most $1$. Overall, we have used $O(1)$ lasers for each of these nodes $v\in T_s'$, $s\in \mathcal{S}$.
Note that each region $\widehat{R}_v$ is the union of triangles from the Hershberger--Suri Steiner triangulation,
and so each laser in $\opt$ intersects $O(\log n)$ such regions. Consequently, we use $O(k^* \, \log n)$ lasers in Step~1.

Finally, consider the lasers used in Step~2 for subdividing the triangles $t\in T$ with $\area(t)>1$.
By Lemma~\ref{lem:Convex-ConvexArea}, each such triangle intersects at least $\Omega(\sqrt{\area(t)})$
lasers in any valid solution; and conversely each laser of an optimal solution intersects $O(\log n)$ regions in $T$.
Consequently, the number of lasers uses in  Step~2 is $\sum_{t\in T}O(\sqrt{\area(t)})\leq O(k^*\, \log n)$.

It remains to show that the algorithm runs in $O(n+k)$ time.
The Hershberger-Suri Steiner triangulation can be computed in $O(n)$ time~\cite{suri}. It consists of $O(n)$ triangles, hence the combined size of the dual tree $T_p$, and the recursion trees $T_s$, $s\in \mathcal{S}$, is also $O(n)$. The unrefinement algorithm is done in a single traversal of these trees, spending $O(1)$ time at each node. For each large cell (triangle) of the Hershberger-Suri triangulation, by Lemma~\ref{lem:Convex-ConvexArea}, we can compute a minimum bounding box and the \emph{number} of lasers used by the algorithm in $O(1)$ time. Computing all $k$ lasers requires $O(k)$ additional time.
\end{proof}

\paragraph{An $O(\log r)$ Approximation for \mla in Simple Polygons.}
We can improve the approximation ratio in Theorem~\ref{thm:minlaser-area} from $O(\log n)$ to $O(\log r)$, where $r$ is the number of reflex vertices of $P$, if we replace the Hershberger--Suri triangulation with a convex decomposition.
(Hershberger and Suri decompose $P$ into \emph{triangles} to support ray shooting queries, but for our purposes a decomposition into convex cells suffices.)

Let $(v_1,\ldots , v_n)$ be the $n$ vertices of \P; assume they are in general position. 
Let $R$ be the set of reflex vertices of $P$. For every reflex vertex $v\in R$, the angle bisector of the interior angle at $v$ hits some edge $a_vb_v$ of $P$. Let $L=\{v,a_v,b_v:v\in R\}$, that is, $L$ is the set of all reflex vertices of $P$, and both endpoints of the edges hit by the angle bisectors of reflex angles. Clearly, $|L|\leq 3r$.

\begin{lemma}\label{lem:convex0}
There is a simple polygon $Q\subset P$ whose vertex set is $L$,
and every connected component of $P\setminus Q$ is a convex polygon.
The polygon $Q$ can be computed in $O(n\log n)$ time.
\end{lemma}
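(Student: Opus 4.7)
The plan is to construct $Q$ in two phases: first compute the set $L$ via ray shooting, then ``cut off'' a convex ear from $P$ at each reflex vertex.

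\textbf{Phase 1 (ray shooting).} Build the Hershberger--Suri ray-shooting data structure for the simple polygon $P$ in $O(n)$ preprocessing time. For each reflex vertex $v\in R$, shoot a ray from $v$ along the interior angle bisector to find the hit point $p_v$ on the boundary edge $a_vb_v$; each query takes $O(\log n)$ time. The resulting set $L=\{v,a_v,b_v:v\in R\}$ has $|L|\le 3r$, and the total time is $O(n+r\log n)=O(n\log n)$.

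\textbf{Phase 2 (constructing $Q$).} At each reflex vertex $v$ the natural candidate convex ear is the triangle $T_v:=\triangle(v,a_v,b_v)$, whose three vertices already lie in $L$ and one of whose sides is the polygon edge $a_vb_v$. In the ``generic'' case where each $T_v$ is contained in $P$ and the triangles $\{T_v:v\in R\}$ are pairwise interior-disjoint, I would set $Q:=P\setminus\bigcup_v \mathrm{int}(T_v)$; then $Q$ is a simple polygon whose vertex set is exactly $L$, and $P\setminus Q$ is a disjoint union of convex triangles. In general the ears may overlap, or $T_v$ may fail to lie inside $P$ because another reflex vertex $u$ obstructs visibility from $v$ to $a_v$ or $b_v$. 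To handle this, I would first draw all the bisector chords $\{vp_v:v\in R\}$ to obtain a convex subdivision of $P$ whose only non-$P$ vertices are the Steiner points $p_v$ on $\partial P$, and then eliminate each $p_v$ by sliding it along $a_vb_v$ toward either $a_v$ or $b_v$, choosing the direction so that the two cells incident to $p_v$ merge into a convex cell. When the slide is blocked by an intervening reflex vertex $u$, the obstruction $u$ itself lies in $L$, so the sliding chord naturally breaks into two $L$-to-$L$ chords meeting at $u$, and the resulting sub-ears remain convex (each lies on one side of the bisector chord $vp_v$, which splits the reflex angle at $v$ into two non-reflex angles).

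\textbf{Main obstacle.} The hard step is the Steiner-point elimination. I need to argue that the ``slide each $p_v$ to an adjacent $L$-vertex'' operation can be carried out consistently, so that the resulting boundary $\partial Q$ is a simple closed curve and every component of $P\setminus Q$ remains convex, even when several obstructions interact and several bisector chords emanate from nearby reflex vertices. I expect this will require processing the reflex vertices in a specific order (e.g., by increasing length of the bisector chord, or by a radial sweep around each edge $a_vb_v$) together with a charging argument that assigns each eliminated $p_v$ to a ``witness'' $L$-vertex along its bisector chord; simplicity of $\partial Q$ should then follow from a planar-sweep invariant tracking which chords have already been installed.
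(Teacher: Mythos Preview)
Your Phase~1 is fine, but Phase~2 fails already in the ``generic case.'' The triangle $T_v=\triangle(v,a_v,b_v)$ meets $\partial P$ in two \emph{disconnected} pieces: the full edge $a_vb_v$ and the isolated point $v$. Hence $T_v$ separates $P$ into two regions that touch only at $v$, and $Q=P\setminus\mathrm{int}(T_v)$ is pinched there (four edges of $\partial Q$ meet at $v$: the two edges of $P$ and the two diagonals $va_v$, $vb_v$). So $Q$ is not a simple polygon. Independently, every convex vertex of $P$ whose incident edges are not hit by any bisector still lies on $\partial Q$, so the vertex set of $Q$ is all of $V(P)$, not $L$.

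Your fallback plan also rests on a false premise: bisector chords from different reflex vertices can cross in the interior of $P$. For instance, take a square with a triangular notch on the bottom side and another on the left side; the two bisectors are a vertical and a horizontal chord that meet in the middle. The subdivision by all bisector chords therefore has interior Steiner vertices, not only the boundary points $p_v$, and these cannot be eliminated by sliding along $\partial P$.

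The paper reverses the roles: $Q$ is the \emph{union} of the cut-out pieces, and the convex components of $P\setminus Q$ are the leftover ears. At a reflex vertex $v$ of the current subpolygon $P'$, the cut-out piece is the \emph{geodesic} triangle $P(v)$ bounded by the edge $ab$ of $P'$ hit by the bisector together with the shortest paths in $P'$ from $v$ to $a$ and to $b$. Shortest-path bends occur only at reflex vertices of $P'$ (hence of $P$, hence in $L$), and $ab$ is either an edge of $P$ or a diagonal produced in an earlier iteration (endpoints in $L$ by induction), so every vertex of $P(v)$ lies in $L$ automatically --- this is what replaces your sliding argument. The components of $P'\setminus P(v)$ are each bounded by one diagonal and a chain of consecutive vertices of $P$, hence convex once all reflex angles have been processed; a short second pass then merges the geodesic triangles into a single simple polygon $Q$.
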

\begin{proof}
We describe an algorithm that decomposes $P$ along noncrossing diagonals into a collection $\mathcal{P}$ of convex polygons and their complement $P\setminus (\bigcup_{A\in \mathcal{P}} A)$ will be polygon $Q$. Initially $\mathcal{P}=\{P\}$ and $Q=\emptyset$.

The algorithm has two steps. In the first step, a collection $\mathcal{P}$ of convex polygons is created such that the vertex set of the complement $P\setminus (\bigcup_{A\in \mathcal{P}} A)$ is $L$. However, $P\setminus (\bigcup_{A\in \mathcal{P}} A)$ is not necessarily connected. In the second step, the connected components of $P\setminus (\bigcup_{A\in \mathcal{P}} A)$ are merged into a simple polygon $Q$ (a single connected component) with the same vertex set $L$.

\paragraph{First step:} While there is a nonconvex polygon $P'\in \mathcal{P}$,
we replace $P'$ with one or more smaller polygons in $\mathcal{P}$ as follows.
Let $v$ be a reflex vertex of $P'$. Since $P'\subset P$, vertex $v$ is also a reflex vertex of $P$. Denote by $\vec{r}_v$ the angle bisector of $P$ at $v$. Note that $\vec{r}_v$ enters the interior of $P'$ at $v$; denote by $ab$ the edge of $P'$ where $\vec{r}_v$ first exits $P'$. Let $P(v)$ be the geodesic triangle formed by the edge $ab$ and the shortest paths from  $v$ to $a$ and to $b$, respectively.
Update $\mathcal{P}$ by replacing $P'$ with the polygons in $P'\setminus P(v)$. See Figure~\ref{fig:decomposition} for an example.
In the course of the algorithm, every polygon in $\mathcal{P}$ is formed by a sequence of consecutive vertices of the input polygon $P$.

\begin{figure}[htbp]
\centering
\includegraphics[scale=0.75]{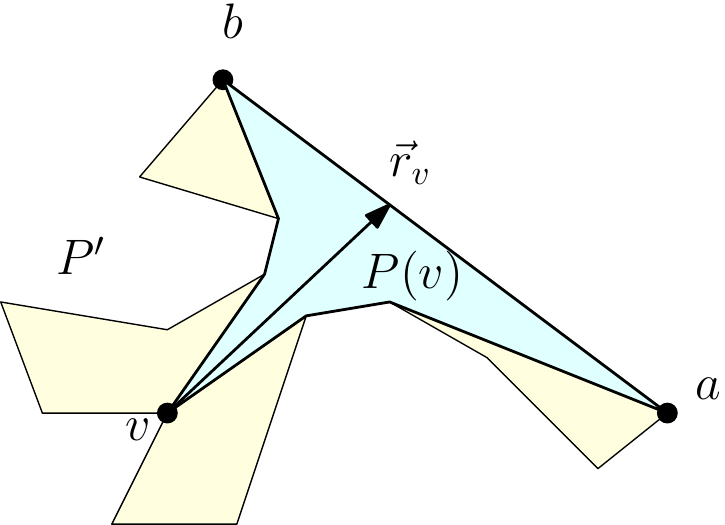}
\caption{Replace $P'$ by four polygons (in yellow), after taking out the geodesic triangle $P(v)$ where $v$ is a reflex vertex. }
\label{fig:decomposition}
\end{figure}

We claim that in each iteration of the algorithm, all vertices of $P(v)$ are in $L$. Clearly, $v$ is a reflex vertex in $P'$, hence a reflex vertex of $P$, as well. Similarly, the interior vertices of the shortest paths from $v$ to $a$ and to $b$ are reflex vertices in $P'$, hence in $P$. It remains to show that $a,b\in L$. If $ab$ is an edge of $P$, then $a,b\in L$ by the definition of $L$. Otherwise, $ab$ is a diagonal of $P$, and so it is an edge of a geodesic triangle $P(v')$ of some previous iteration of the algorithm---by induction, they are in $L$, as well. At the end of the while loop, all polygons in $\mathcal{P}$ are convex, however, the complement $P\setminus (\bigcup_{A\in \mathcal{P}} A)$ is not necessarily connected. See Figure~\ref{fig:merge2} for an illustration.

\begin{figure}[htbp]
\centering
\includegraphics[scale=0.98]{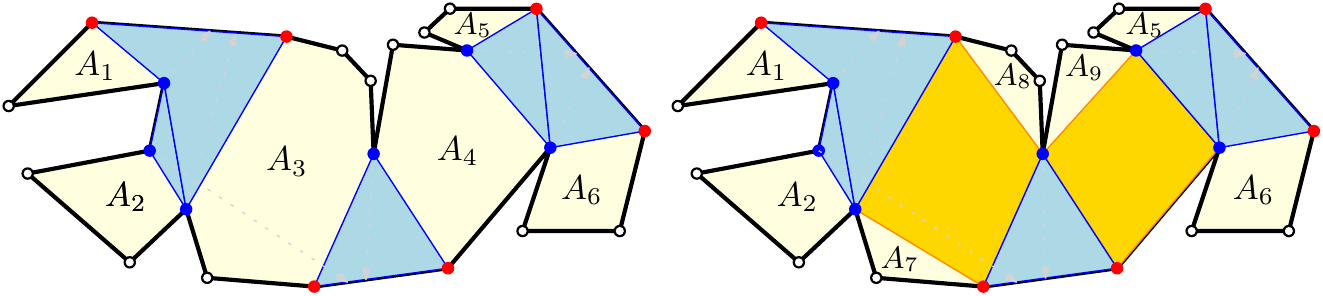}
\caption{A simple polygon $P$, the vertices in $L$ are blue (reflex) or red (hit by angle bisector).
 Left: The first step produces convex polygons $\mathcal{P}=\{A_1,\ldots , A_6\}$, but $P\setminus \bigcup_{i=1}^6 A_i$
 is disconnected.
 Right: The second step merges $P\setminus \bigcup_{i=1}^6 A_i$ into a simple polygon $Q$. As $|A_3\cap L|\geq 3$ and $|A_4\cap L|\geq 3$, the second step creates conv$(V(A_3)\cap L)$ and conv$(V(A_4)\cap L)$ (shown in deep yellow) which merges $P\setminus \bigcup_{i=1}^6 A_i$ into a single connected component $Q$.}
\label{fig:merge2}
\end{figure}

\paragraph{Second step:} While there is a (convex) polygon $P'\in \mathcal{P}$ incident to three or more vertices in $L$, we replace $P'$ with smaller polygons: In particular, let $V(P')$ be the vertex set of $P'$. If $|V(P')\cap L|\geq 3$, then replace $P'$ with the polygons in $P'\setminus \text{conv}(V(P')\cap L)$, where $\text{conv}(.)$ stands for the convex hull. In each iteration, all polygons in $\mathcal{P}$ remain convex. At the end of the while loop, every polygon in $\mathcal{P}$ is incident to exactly two vertices in $L$, and $P\setminus (\bigcup_{A\in \mathcal{P}} A)$ is a simple polygon with vertex set $L$.
\end{proof}

\begin{lemma}\label{cor:convex3}
Every simple polygon \P on $n$ vertices, $r$ of which are reflex, can be decomposed into convex faces
such that every chord of \P intersects $O(\log r)$ faces. Such a decomposition can be computed in
$O(n \log n)$ time.
\end{lemma}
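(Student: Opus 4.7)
The plan is to combine the convex decomposition from Lemma~\ref{lem:convex0} with the Hershberger--Suri Steiner triangulation, applied to the ``inner'' polygon $Q$. First invoke Lemma~\ref{lem:convex0} to obtain a simple polygon $Q\subset P$ whose vertex set is $L$ (so $|V(Q)|\le 3r$), together with the convex components $A_1,\dots,A_k$ of $P\setminus Q$. Then run the Hershberger--Suri construction on the simple polygon $Q$, producing $O(r)$ Steiner triangles inside $Q$ with the property that every chord of $Q$ meets $O(\log r)$ of them; this takes $O(r)$ time on top of the $O(n\log n)$ spent in Lemma~\ref{lem:convex0}. Declare the decomposition of $P$ to consist of these Steiner triangles together with the pieces $A_i$ left intact. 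Every face is convex, and the total running time is $O(n\log n)$.

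The analysis hinges on a structural claim: each $A_i$ is attached to $Q$ along a \emph{single} straight segment. This should follow from the second step of the construction in Lemma~\ref{lem:convex0}, which iteratively replaces any polygon $P'\in\mathcal{P}$ carrying three or more vertices of $L$ by the collection $P'\setminus\conv(V(P')\cap L)$, and halts precisely when every remaining convex piece carries exactly two vertices of $L$. These two $L$-vertices are then joined in $\partial A_i$ by the corresponding edge of that convex hull, which is a single diagonal of $P$ shared with $Q$.

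Given this, the $O(\log r)$ stabbing bound reduces to a short case analysis for an arbitrary chord $c$ of $P$. Because $c$ and every shared edge are line segments, $c$ crosses the boundary between $A_i$ and $Q$ in at most one point, so $c$ can enter $A_i$ only via an endpoint of $c$ lying on $\partial A_i\cap\partial P$. Hence $c$ intersects at most two of the convex pieces, and $c\cap Q$ is a single (possibly empty) connected line segment inside $Q$, which by the Hershberger--Suri property crosses $O(\log r)$ Steiner triangles of $Q$. Summing gives $O(\log r)$ faces in total. The main obstacle is securing the single-segment-attachment property above; once that is in place, the rest is a direct appeal to the Hershberger--Suri bound applied to $Q$, and the time bound falls out immediately from the sizes of the two phases.
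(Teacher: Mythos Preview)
Your proposal is correct and follows essentially the same approach as the paper: apply Lemma~\ref{lem:convex0} to obtain $Q$ and the convex pieces $A_i$, triangulate $Q$ via Hershberger--Suri, and then argue that any chord of $P$ meets at most two of the $A_i$'s (because each $A_i$ meets $Q$ along a single diagonal of $P$) together with $O(\log r)$ triangles of $Q$. The paper phrases the ``at most two $A_i$'' step as a contradiction (a chord hitting three components would force the middle one to border $Q$ along two edges), while you argue it directly via the single-crossing property of two line segments; both arguments rest on the same structural fact, which you correctly trace to the second step of Lemma~\ref{lem:convex0}.
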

\begin{proof}
We can compute the set $L$ of up to $3r$ vertices and a simple polygon $Q\subset P$ described in Lemma~\ref{lem:convex0} in $O(n\log n)$ time. We then compute the Hershberger--Suri triangulation
for $Q$, which is a Steiner triangulation of $O(r)$ triangles such that every chord of $Q$ intersects $O(\log r)$ triangles~\cite{suri}. This triangulation of $Q$, together with the convex polygons in $P\setminus Q$, form a subdivision of $P$ into convex faces.

We claim that every chord of $P$ intersects at most $O(\log r)$ faces: at most $O(\log r)$ triangles in $Q$ and at most two convex sets in $P\setminus Q$. If a chord $\ell$ of $P$ intersects three components of $P\setminus Q$, say $C_1,C_2,C_3$ in this order, then $\ell$ crosses the boundary of $C_2$ twice, so $C_2$ must have at least two edges on the boundary between $C_2$ and $Q$. However, by Lemma~\ref{lem:convex0}, every edge of $Q$ is either an edge or a diagonal of $P$. Therefore the boundary between $Q$ and a component of $P\setminus Q$ is a single diagonal of $P$. Thus $\ell$ intersects at most two components of $P\setminus Q$; moreover $\ell\cap Q$ is a chord of $Q$, so it intersects $O(\log r)$ triangles inside $Q$.
\end{proof}
By performing the algorithm on the convex subdivision in Corollary~\ref{cor:convex3},
the approximation ratio improves to $O(\log r)$.

\begin{theorem}\label{thm:minlaser-area2}
Let \P be a simple polygon with $n$ vertices, $r$ of which are reflex, and let $k^*$ be the minimum number of lasers needed to subdivide $P$ into pieces of area at most $1$.
We can find an integer $k$ with $k^*\leq k\leq  O(k^*\log r)$ in $O(n\log n)$ time, and a set of $k$ lasers that subdivide \P into pieces of area at most $1$ in output-sensitive $O(k+n\log n)$ time.
\end{theorem}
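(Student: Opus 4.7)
The plan is to rerun the algorithm of Theorem~\ref{thm:minlaser-area} with the Hershberger--Suri triangulation replaced by the convex decomposition of Lemma~\ref{cor:convex3}. First I would apply Lemma~\ref{cor:convex3} to obtain a subdivision of \P into convex faces with stabbing number $O(\log r)$, computed in $O(n\log n)$ time. This subdivision has two kinds of faces: the Steiner triangles inside the inner simple polygon $Q$ (produced by the Hershberger--Suri triangulation of $Q$), and the convex components of $P\setminus Q$. Crucially, the whole object still carries the structure needed by the previous algorithm: a dual tree $T_P$ whose nodes are the pseudotriangles of the first phase of the Hershberger--Suri procedure on $Q$, augmented with each convex component of $P\setminus Q$ attached as a leaf via its unique bounding diagonal of \P (by Lemma~\ref{lem:convex0}, the boundary between $Q$ and a component of $P\setminus Q$ is a single diagonal), together with a bounded-degree recursion tree $T_s'$ for each pseudotriangle $s$.

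Next I would run the same two-phase unrefinement and laser-placement algorithm on this augmented decomposition. In Phase~1, the unrefinement proceeds bottom-up in $T_P$ and inside each pseudotriangle along $T_s'$, merging cells whose accumulated area remains at most $1$, exactly as in Theorem~\ref{thm:minlaser-area}; the convex components of $P\setminus Q$ participate as additional leaves and their areas are folded into the weights on the adjacent diagonals. In Phase~2, every retained region $\widehat{R}_v\in U_s$ is sealed off by $O(1)$ lasers, and every convex face (either a convex component of $P\setminus Q$ that survived the unrefinement with area $>1$, or a Steiner triangle in $Q$ with area $>1$) is cut by $\Theta(\sqrt{\area(\cdot)})$ lasers via Lemma~\ref{lem:Convex-ConvexArea}.

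The charging argument is identical to that of Theorem~\ref{thm:minlaser-area}, with the key substitution that each laser of \opt now intersects $O(\log r)$ (rather than $O(\log n)$) faces of our decomposition, by Lemma~\ref{cor:convex3}. Every region $\widehat{R}_v\in U_s$ has area $>1$ and hence must be stabbed by at least one laser of \opt, so the total number of Step~1 lasers is $\sum_s O(|U_s|)\le O(k^*\log r)$. Similarly, the Step~2 lasers inside large convex cells sum to $\sum O(\sqrt{\area(\cdot)})\le O(k^*\log r)$ because the lower bound of Lemma~\ref{lem:Convex-ConvexArea} is additive over disjoint cells and each laser of \opt meets $O(\log r)$ of them. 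For the running time, computing the convex subdivision costs $O(n\log n)$, the recursion trees and the unrefinement are linear in the subdivision size $O(n)$, and reporting the $k$ lasers takes $O(k)$ additional time, yielding the claimed $O(n\log n)$ preprocessing and $O(k+n\log n)$ output-sensitive bound.

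The main obstacle I expect is the structural check that the augmented dual tree still behaves like a Hershberger--Suri structure for the purposes of Phase~2: specifically, that attaching each convex component of $P\setminus Q$ as a leaf by a single diagonal preserves the ``$O(1)$ lasers per unrefined region'' invariant and that the boundary pieces charged in Step~1 remain $O(1)$ per node of $T_s'$. Once one verifies that the single-diagonal attachment does not inflate the boundary complexity of any $\widehat{R}_v$ and that the $O(\log r)$ stabbing number of Lemma~\ref{cor:convex3} applies to arbitrary chords of \P (not merely chords of $Q$), the rest of the proof is a direct replay of the argument of Theorem~\ref{thm:minlaser-area} with $\log n$ replaced by $\log r$.
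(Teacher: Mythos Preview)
Your proposal is correct and follows exactly the paper's approach: the paper's argument is the one-line remark that running the algorithm of Theorem~\ref{thm:minlaser-area} on the convex subdivision of Lemma~\ref{cor:convex3} (in place of the Hershberger--Suri triangulation) improves the stabbing bound from $O(\log n)$ to $O(\log r)$, and you have spelled out precisely this substitution. Your additional care about attaching the convex components of $P\setminus Q$ as single-diagonal leaves and verifying the $O(1)$-boundary invariant is a faithful (indeed more detailed) unpacking of what the paper leaves implicit.
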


\section{Diameter in simple polygons}
\subsection{Bi-Criteria Approximation for Diameter}\label{sec:minlaser-diameter}
For the diameter version in a simple polygon, we describe a bi-criteria approximation algorithm
(Theorem~\ref{thm:bicriteria-diameter}).
We start from deriving a lower bound for the minimum number of lasers in a decomposition
into pieces of diameter at most $\delta$ (for bi-criteria approximation algorithm we use general \d, instead of \d=1, because we will scroll over \d when using the algorithm to get an approximation for \md).

Consider the infinite set of vertical lines, $\mathcal{L}_V$, evenly spaced with separation $\delta$;
that is, $\mathcal{L}_V=\{x=i\delta: i\in \mathbb{Z}\}$.
The lines in $\mathcal{L}_V$ decompose $P$ into a set $\mathcal{P}_V$ of simple polygons,
that we call \emph{cells}. By construction, the orthogonal projection of each cell
to the $x$-axis is an interval of length at most $\delta$.
(More precisely, we consider the polygon $P$ to be a closed set in the
plane. Subtracting the union of vertical lines $\mathcal{L}_V$ from $P$
results in a set of connected components; the closures of these
components are the simple polygons in $\mathcal{P}_V$).
The polygons in $\mathcal{P}_V$ are faces in the arrangement of
the lines in $\mathcal{L}_V$ and the edges of $P$; the planar dual of this
decomposition is a tree, whose nodes are the faces $\mathcal{P}_V$
and whose edges are dual to the vertical lines.
	
If the projection of polygon $Q\in \mathcal{P}_V$ onto the $x$-axis is
an interval of length $\delta$ (which means it extends from $x=i\delta$
to $x=(i+1)\delta$, for some integer $i$), we say that $Q$ is a
\emph{full-width} cell; otherwise the projection of $Q$ onto the $x$-axis
is of length less than $\delta$, and we say that $Q$ is a
\emph{narrow} cell. (It may be that $P$ itself is a narrow cell
if, e.g., $P$ does not intersect any of the vertical lines $L_V$.)

\begin{figure*}[h]
\centering
\includegraphics[scale=0.8]{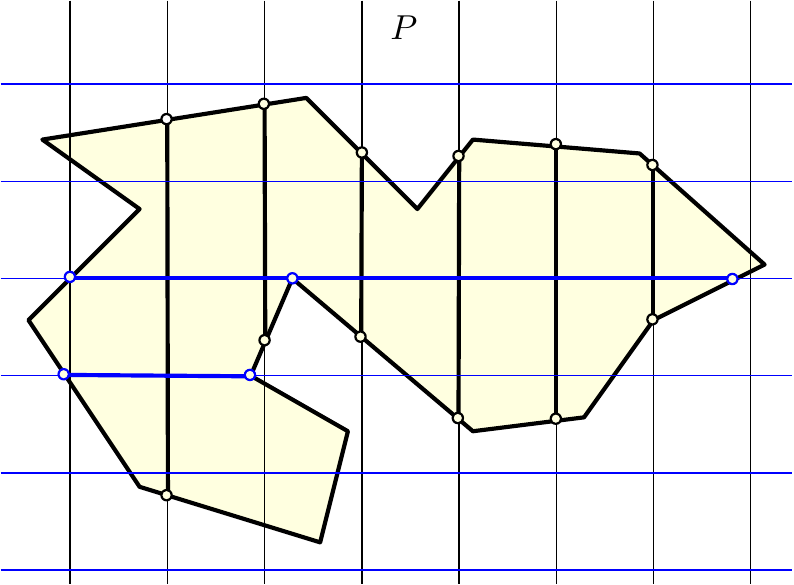}
\caption{$P$ is subdivided by a grid; the lasers are thick}
\label{fig:vertical}
\end{figure*}

The intersection of the lines in $\mathcal{L}_V$ with $P$ is a
set of vertical chords of $P$. Let $C_V$ be the set of these chords.
While there is a chord $\ell\in C_V$ that lies on the boundary of
some narrow cell, remove $\ell$ from $C_V$ (thereby merging the cells
on the two sides of $\ell$ into one cell). As a result, all remaining
chords lie on the boundary between full-width cells.
Let $C_V'$ be the resulting set of chords,
and let $k_V=|C_V'|$ denote their cardinality.
Since any two full-width cells of $\mathcal{P}_V$ that are in
adjacent vertical strips remain separated by a chord in $C_V'$,
the $x$-extent of each face in the new decomposition of $P$ is
at most $3\delta$. We summarize this below.
	
\begin{proposition}
The remaining $k_V$ chords $C_V'$, $k_V\geq 0$,  subdivide $P$ into a set $\mathcal{Q}$
of $k_V+1$ polygons, each of which intersects at most two lines in $\mathcal{L}_V$,
consequently its projection to the $x$-axis is an interval of length less than $3\delta$.
Further, the dual graph of this decomposition is a tree (with $k_V$ edges and $k_V+1$ nodes).
\end{proposition}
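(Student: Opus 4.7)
The plan combines a dual-graph tracking argument with a simple invariant on full-width cells. The counting and tree-structure claims both follow from the fact that removing a chord $\ell\in C_V$ and merging the two adjacent cells corresponds exactly to contracting the matching edge in the dual graph. Since $\mathcal{P}_V$'s dual is a tree with $|C_V|$ edges and $|C_V|+1$ nodes, and edge contraction preserves the tree property (decreasing both counts by one), the dual of $\mathcal{Q}$ at termination is a tree with $k_V$ edges and $k_V+1$ nodes; in particular $|\mathcal{Q}|=k_V+1$.

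For the $x$-extent bound I would use the invariant: if a current cell $M$ contains a full-width original cell $A$ of $\mathcal{P}_V$, then the $x$-projection of $M$ contains that of $A$, which has length $\delta$, so $M$ is non-narrow; and this property is preserved by all subsequent merges, since projections only grow. Consequently, no $Q\in\mathcal{Q}$ can contain two distinct full-width originals $A$ and $A'$: at the step when the current cells containing $A$ and $A'$ would be merged, both sides of the deleted chord would be non-narrow, contradicting the while-loop condition. Hence every $Q$ contains at most one full-width original cell of $\mathcal{P}_V$.

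Call the vertical strip between $x=i\delta$ and $x=(i+1)\delta$ \emph{strip $i$}, and let $i_{\min}\le i_{\max}$ be the extreme strip indices occupied by the original cells of $\mathcal{P}_V$ that make up $Q$. Because the dual tree of $\mathcal{P}_V$ has edges only between cells of adjacent strips and $Q$ is connected, the set of strip indices meeting $Q$ is the contiguous range $\{i_{\min},\dots,i_{\max}\}$. The central geometric step is to show that every strictly internal strip (index $i$ with $i_{\min}<i<i_{\max}$) contains a full-width original cell of $Q$: a dual-tree path inside $Q$ from a cell of strip $i_{\min}$ to one of strip $i_{\max}$ must cross strip $i$ via some cell that is bounded on its left by a chord at $x=i\delta$ and on its right by a chord at $x=(i+1)\delta$, so this cell spans its strip fully. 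Combined with the at-most-one-full-width property, this forces $i_{\max}-i_{\min}\le 2$.

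It then remains to translate the strip bound into the projection bound. If $i_{\max}-i_{\min}\le 1$, the projection of $Q$ lies in $[i_{\min}\delta,(i_{\max}+1)\delta]$, which has length at most $2\delta$ and meets at most two lines of $\mathcal{L}_V$. In the extremal case $i_{\max}-i_{\min}=2$, the unique full-width original of $Q$ lies in the middle strip $i_{\min}+1$, while every cell of $Q$ in strip $i_{\min}$ or $i_{\max}$ is narrow; to lie in $Q$, such a narrow cell must connect to the middle strip via a removed chord, so it touches the inner boundary line of its strip but, being narrow, cannot also reach the outer one. Thus the endpoints of the projection $[l,r]$ satisfy $l>i_{\min}\delta$ and $r<(i_{\max}+1)\delta$, giving $r-l<3\delta$ and leaving only the two lines $x=(i_{\min}+1)\delta$ and $x=i_{\max}\delta$ inside $Q$. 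The main obstacle is the internal-strip/full-width claim; once that is nailed down, the length bound is a routine case check.
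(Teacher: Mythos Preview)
Your proof is correct and follows the same idea the paper uses: the paper's entire justification is the one sentence preceding the proposition (``any two full-width cells of $\mathcal{P}_V$ that are in adjacent vertical strips remain separated by a chord in $C_V'$''), with the remaining details left implicit. Your argument makes those details explicit via the slightly stronger invariant that each $Q$ contains at most one full-width original cell (valid under the natural dynamic reading of ``narrow''), combined with the path-based observation that every strictly internal strip contributes a full-width cell---this is a clean way to finish what the paper only sketches.
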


If $k_V=0$, then there is just one cell, $\mathcal{Q}=\{P\}$.
If $k_V\geq 1$, then each $Q\in \mathcal{Q}$ includes at least one full-width cell,
since the only lasers remaining are those separating one full-width
cell from an adjacent full-width cell sharing the laser.
	
Thus, the boundary of each $Q\in \mathcal{Q}$ includes at least two distinct
(simple) paths connecting a point on one line of $\mathcal{L}_V$ to a point on
an adjacent line of $\mathcal{L}_V$. Each of these paths has length at least
$\delta$. The endpoints of such a path are at distance at least $\delta$ away from each other.
In any laser cutting of $P$ into pieces of diameter at most $\delta$,
each such path contains a laser endpoint in its interior
or at both endpoints (if the path is a horizontal line segment).
In any case, each of these paths contains a laser endpoint in its
interior or at its left endpoint.
Thus overall, there must be at least $2|\mathcal{Q}|=2(k_V+1)$ endpoints
of lasers. This implies that $k^*\geq k_V+1$, where $k^*$ is the
minimum number of lasers in order to achieve pieces
of diameter at most~$\delta$. Therefore we conclude,

\begin{lemma}\label{lem:lb}
If $k_V\geq 1$, then $k^*\geq k_V+1$.
\end{lemma}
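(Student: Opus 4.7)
The plan is to formalize the counting argument already sketched in the text just before the lemma, treating laser endpoints as the resource to be counted. The core idea is: if every cell $Q\in\mathcal{Q}$ of the decomposition induced by $C_V'$ forces at least two laser endpoints into specific disjoint portions of $\partial Q$, and if these portions across different $Q$'s are globally disjoint, then a double-counting argument (each laser has two endpoints) yields $k^*\ge |\mathcal{Q}|=k_V+1$.

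First I would make the geometry precise. Fix a cell $Q\in \mathcal{Q}$. Since $k_V\ge 1$ and the chords of $C_V'$ only separate full-width cells, $Q$ contains at least one full-width cell, so $Q$ crosses two consecutive lines $x=i\delta$ and $x=(i+1)\delta$ of $\mathcal{L}_V$ along a strip of width exactly $\delta$. Intersecting $\partial Q$ with this vertical strip and using that $Q$ is simply connected, I would show that $\partial Q$ restricted to the closed strip $i\delta\le x\le (i+1)\delta$ consists of (at least) two disjoint simple arcs $\alpha_Q, \beta_Q$ each connecting the line $x=i\delta$ to the line $x=(i+1)\delta$. Each such arc is a concatenation of pieces of $\partial P$ and of chords in $C_V'$ (which lie on vertical lines), so its horizontal extent is exactly $\delta$; hence its two endpoints, lying on consecutive vertical lines, are at Euclidean distance $\ge \delta$ apart, and the arc itself is a connected subset of $P$ of horizontal span $\delta$.

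Next I would show each such arc forces a laser endpoint. Suppose for contradiction that some arc $\alpha_Q$ contains no laser endpoint in its relative interior and no laser endpoint at its left endpoint (the one on $x=i\delta$). Then the entire arc, together with its left endpoint, lies in a single cell $F$ of the final laser arrangement. Since the two endpoints of $\alpha_Q$ are at distance $\ge \delta$, we get $\diam(F)\ge \delta$, contradicting the diameter bound (the bound is strict, say $\le\delta$, but the inequality is in fact realised with equality only when $F$ is a horizontal segment, a degenerate case I would treat by choosing the ``left endpoint'' convention so that both endpoints of a horizontal arc cannot simultaneously be missing). Thus each arc contributes at least one laser endpoint from the set (interior of the arc) $\cup$ (left endpoint of the arc).

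Finally I would bound the total. Summing over all $k_V+1$ cells and their two arcs each, we obtain $2(k_V+1)$ marked laser endpoints. The main obstacle is ensuring no laser endpoint is counted twice across different cells, which is why the convention of charging only to the interior of an arc or to its \emph{left} endpoint is chosen: right endpoints on the shared vertical chord $x=(i+1)\delta$ are charged to the cell on the right, preventing double counting, while arc interiors in different cells are automatically disjoint. Since each laser has exactly two endpoints, we conclude $2k^*\ge 2(k_V+1)$, i.e., $k^*\ge k_V+1$, as claimed. The only delicate step is the endpoint-charging convention; once that bookkeeping is nailed down, the rest is immediate from the structural proposition preceding the lemma.
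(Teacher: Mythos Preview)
Your proposal is correct and follows essentially the same approach as the paper: the paper's argument (given in the paragraph immediately preceding the lemma) identifies, for each of the $k_V+1$ cells $Q\in\mathcal{Q}$, two boundary paths spanning a full strip of width $\delta$, observes that each such path must contain a laser endpoint in its interior or at its left endpoint, and then divides the resulting $2(k_V+1)$ endpoints by two. Your write-up is slightly more explicit than the paper about why the left-endpoint convention prevents double counting across adjacent cells, but the idea and structure are identical.
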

	
Now, we consider the set of horizontal lines,
$\mathcal{L}_H=\{y=j\delta: j\in \mathbb{Z}\}$,
and apply the above process to polygon $P$,
yielding horizontal chords $C_H$, and then a subset
$C_H'\subseteq C_H$ of chords after merging cells
(removing lasers that separate a \emph{full-height} cell
from an adjacent \emph{short} cell). The result is a
decomposition of $P$ into $k_H+1=|C_H'|+1$ pieces, each
having projection onto the $y$-axis of length less than $3\delta$.
Analogously to Lemma~\ref{lem:lb}, we get $k^*\geq k_H+1$ if $k_H\geq 1$.
	
If we now overlay the vertical chords $C_V'$ and the horizontal chords
$C_H'$, the resulting arrangement decomposes $P$ into pieces each of
which is a simple polygon having projections onto both the $x$- and
the $y$-axis of lengths less than $3\delta$; thus, the resulting pieces
each have diameter less than $3\delta\sqrt{2}$.
The total number of lasers is $k_V+k_H \leq 2(k^*-1)$.
	
\begin{theorem}\label{thm:bicriteria-diameter}
Let $P$ be a simple polygon with $n$ vertices,
and let $k^*$ be the minimum number of lasers that decompose $P$
into pieces each of diameter at most $\delta$ for a fixed $\delta>0$.
One can compute a set of at most $2(k^*-1)$ axis-aligned lasers
that decompose $P$ into pieces each of diameter at most $3\sqrt{2}\delta$
in time polynomial in $n$ and ${\rm diam}(P)/\delta$.
\end{theorem}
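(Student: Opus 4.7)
The plan is to assemble the two one-dimensional constructions (vertical and horizontal) that have already been developed in the run-up to the theorem, and verify that their overlay delivers both the diameter guarantee and the laser-count bound. Essentially, the proof is a packaging of the preceding analysis: I would invoke the vertical grid construction (yielding $k_V$ chords in $C_V'$ with every resulting cell of $x$-extent less than $3\delta$) and, symmetrically, the horizontal grid construction (yielding $k_H$ chords in $C_H'$ with every cell of $y$-extent less than $3\delta$), and then overlay the two sets.

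I would begin by addressing the edge cases where $k_V=0$ or $k_H=0$. If $k_V=0$, then $P$ already has $x$-projection of length less than $3\delta$ (because no chord of $C_V$ separates two full-width cells), so no vertical lasers are needed and the $x$-extent bound holds trivially; the same holds symmetrically for $k_H=0$. When both $k_V\geq 1$ and $k_H\geq 1$, Lemma~\ref{lem:lb} applied to each axis separately gives $k_V\leq k^*-1$ and $k_H\leq k^*-1$, from which $k_V+k_H\leq 2(k^*-1)$ follows immediately. (If only one of them is zero, we still get the same bound trivially.)

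Next, I would verify the diameter guarantee on the overlaid arrangement. Because $C_V'$ consists of vertical chords and $C_H'$ of horizontal ones, every cell of the overlay is a simple subpolygon of $P$ whose $x$-extent is inherited from the vertical decomposition (less than $3\delta$) and whose $y$-extent is inherited from the horizontal decomposition (less than $3\delta$). Any such cell therefore fits inside an axis-aligned box of side less than $3\delta$, so its diameter is strictly less than $3\sqrt{2}\,\delta$, as claimed. The axis-alignment of all lasers is automatic from the construction.

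Finally, for the running time, I would note that the vertical and horizontal decompositions only require snapping the set of grid lines $x=i\delta$ and $y=j\delta$ intersecting $P$ against $\partial P$ and performing the local ``merge narrow/short cells'' passes, which can be done by a sweep in time polynomial in $n$ and in the number of grid lines meeting $P$; the latter is $O({\rm diam}(P)/\delta)$. The main subtlety in the whole argument is not an obstacle so much as a bookkeeping point: making sure the ``two-path'' charging used in Lemma~\ref{lem:lb} (each retained chord contributes two boundary paths of length at least $\delta$, each forcing a distinct laser endpoint in any valid solution) is indeed sharp enough to yield the additive $-1$ in $k_V\leq k^*-1$, so that summing the two axes gives $2(k^*-1)$ rather than the weaker $2k^*$.
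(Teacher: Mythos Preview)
Your proposal is correct and follows essentially the same route as the paper's proof: overlay the vertical chords $C_V'$ with the horizontal chords $C_H'$, use the Proposition to bound each cell's $x$- and $y$-extent by $3\delta$ (hence diameter by $3\sqrt{2}\delta$), and invoke Lemma~\ref{lem:lb} in each direction to get $k_V+k_H\leq 2(k^*-1)$. Your explicit handling of the $k_V=0$ or $k_H=0$ edge cases is slightly more careful than the paper's one-line conclusion, but the argument is the same.
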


\subsection{$O(1)$-Approximation for \md in simple polygons}
\label{sec:k-laser-mindiameter}

In this section we consider the problem of minimizing the maximum diameter of a cell in the arrangement of $k$ lasers, for a given number \k. Our $O(1)$-approximation algorithm repeatedly decreases the $x$- and $y$- separation in the bi-criteria solution from Theorem~\ref{thm:bicriteria-diameter} until the number of placed lasers is about to jump over 2\k; then, the number of lasers is halved while increasing the diameter by a constant factor.

Specifically, let $\ell(\delta)$ denote the number of lasers used in the end of the bi-criteria algorithm with the $x$- and $y$-separation between consecutive vertical and horizontal lines being \d. Our algorithm to approximate the diameter achievable with \k lasers is as follows:

\begin{itemize}
	\item Initialize $\delta = {\rm diam}(P)$, and $\epsilon>0$
    \item While $\ell(\d)\le2k$, set $\d=\d/(1+\epsilon)$ and recompute $\ell(\d)$.
	\item Let $\delta_0$ be such that $\ell(\delta_0) \le 2k$ but $\ell(\delta_0/(1+\epsilon)) > 2k$.
	\item Let $C_V$ and $C_H$ be the $\ell(\delta_0) \le 2k$ vertical and horizontal lasers, resp., found by the bi-criteria algorithm.
	\item Partition $C_V$ into lasers along $x=i\delta_0$ for even $i$ and the rest (odd $i$); let $C_V'$ be a smallest part. Similarly, let $C_H$ be a smaller part when we partition $C_H$ into two subsets of lines where $y=i\delta_0$ is an even or odd multiple of $\delta_0$.
    \item Return the lasers in $C_V'\cup C_H'$.
\end{itemize}

\begin{theorem}\label{thm:approx-k-laser-diameter}
Let $P$ be a simple polygon with $n$ vertices, and let $\delta^*$ be the optimal diameter achievable with $k$ lasers.
For every $\epsilon>0$, one can compute a set of at most $k$ lasers that partition $P$ into pieces each of diameter at most $4\sqrt{2}(1+\epsilon) \d^*$ in time polynomial in $n$, ${\rm diam}(P)/\delta^*$, and $\epsilon$.
\end{theorem}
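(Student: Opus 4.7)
The plan is to prove three properties of the algorithm's output: (i) the chosen threshold $\delta_0$ satisfies $\delta_0 \le (1+\epsilon)\delta^*$, (ii) the returned set $C_V'\cup C_H'$ contains at most $k$ lasers, and (iii) every cell in the induced arrangement has diameter at most $4\sqrt{2}(1+\epsilon)\delta^*$. The running time bound will follow from a simple count of the while-loop iterations.

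For (i), I would combine the termination condition of the while loop with Theorem~\ref{thm:bicriteria-diameter}. When the loop exits, $\ell(\delta_0/(1+\epsilon)) > 2k$. Let $k^*(\delta)$ denote the minimum number of (arbitrary-orientation) lasers needed to achieve diameter at most $\delta$. Theorem~\ref{thm:bicriteria-diameter} guarantees $\ell(\delta)\le 2(k^*(\delta)-1)$, so $2k<\ell(\delta_0/(1+\epsilon))\le 2(k^*(\delta_0/(1+\epsilon))-1)$, giving $k^*(\delta_0/(1+\epsilon)) > k$. Since $\delta^*$ is by definition the smallest diameter achievable using $k$ lasers, this forces $\delta_0/(1+\epsilon) < \delta^*$, i.e., $\delta_0 < (1+\epsilon)\delta^*$. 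For (ii), since $C_V'$ and $C_H'$ are defined as the smaller halves of the even/odd partitions of $C_V$ and $C_H$, we have $|C_V'|\le |C_V|/2$ and $|C_H'|\le |C_H|/2$, hence $|C_V'\cup C_H'|\le \ell(\delta_0)/2\le k$.

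Property (iii) is the main technical step, and the place I expect the real work. After the bi-criteria algorithm at spacing $\delta_0$, every cell has $x$- and $y$-extent at most $3\delta_0$. Discarding every-other vertical laser (the larger-parity subset) can merge adjacent cells whose shared boundary has the discarded parity, and symmetrically for the horizontal direction. I would show that after these merges each resulting cell still has $x$-extent and $y$-extent at most $4\delta_0$, yielding diameter at most $\sqrt{(4\delta_0)^2+(4\delta_0)^2}=4\sqrt{2}\delta_0\le 4\sqrt{2}(1+\epsilon)\delta^*$. The argument tracks, for each bi-criteria cell $Q$ with $x$-projection $[a,b]$, the parities of its two $\mathcal{L}_V$-boundary positions. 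If the two boundaries lie in different parities (the generic case when $Q$ spans an odd number of strips), at most one boundary is discarded, $Q$ merges with a single neighbor, and the combined projection fits in a window of size $4\delta_0$ aligned to the grid. The delicate case is when $Q$'s two boundaries have the same parity and both are discarded (so $Q$ merges on both sides); the main obstacle is to show that such chained merges cannot cascade. For this I would exploit the bi-criteria invariant that a kept chord of $C_V$ must separate two full-width initial cells of $\mathcal{P}_V$, which tightly constrains the structure of a chain of same-parity boundaries and bounds the total extent by $4\delta_0$.

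For the running time, $\delta$ decreases geometrically by a factor $(1+\epsilon)$ starting from $\mathrm{diam}(P)$ and terminates as soon as $\ell(\delta)>2k$, which by (i) occurs before $\delta$ drops below $\delta^*/(1+\epsilon)$. Hence the while loop performs $O(\log_{1+\epsilon}(\mathrm{diam}(P)/\delta^*))=O(\frac{1}{\epsilon}\log(\mathrm{diam}(P)/\delta^*))$ iterations. Each iteration runs the bi-criteria algorithm of Theorem~\ref{thm:bicriteria-diameter} in time polynomial in $n$ and $\mathrm{diam}(P)/\delta$, and the partition and output steps are linear in the laser count. Multiplying gives total running time polynomial in $n$, $\mathrm{diam}(P)/\delta^*$, and $1/\epsilon$, as claimed.
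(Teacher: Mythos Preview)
Your proposal is correct and follows the same approach as the paper. Parts~(i), (ii), and the running-time analysis match the paper's argument essentially verbatim.

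For part~(iii) you are making the argument harder than it needs to be. The paper's proof dispatches the ``cascading'' worry in one line, using the Proposition from Section~\ref{sec:minlaser-diameter}: every cell of the bi-criteria decomposition intersects at most two \emph{consecutive} lines of $\mathcal{L}_V$. Since two lines of the same parity are at distance at least $2\delta_0$, a single bi-criteria cell cannot border removed (same-parity) lasers lying on two \emph{different} lines; all of its removed boundary chords lie on one line $x=i\delta_0$. Hence every cell of the merged piece touches $x=i\delta_0$ and, by the two-line invariant, is contained in the slab $(i-2)\delta_0<x<(i+2)\delta_0$. The merged piece therefore intersects at most the three lines $x=(i-1)\delta_0,\,x=i\delta_0,\,x=(i+1)\delta_0$, giving $x$-extent at most $4\delta_0$, and the same for $y$. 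No chain analysis or case split on parities is required.
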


\begin{proof}
By Theorem~\ref{thm:bicriteria-diameter}, if $\d^*$ were smaller than $\d_0/(1+\epsilon)$, then $\ell(\d_0/(1+\epsilon))$ would have been at most $2k$, which is not the case (by the choice of $\d_0$), implying that $\d^*\ge\d_0/(1+\epsilon)$. Our algorithm starts with at most $2k$ lasers, produced by the bi-criteria solution, that decomposes $P$ into pieces that each intersects at most two consecutive lines
in $L_V=\{x=i\delta_0:i\in \mathbb{Z}\}$ and $L_H=\{y=i\delta_0:i\in \mathbb{Z}\}$, hence their $x$- and $y$-projections have length at most $3\d_0$. By removing at least half of the horizontal (resp., vertical) lasers, the number of lasers drops to \k or below, and the pieces on opposite sides of these lasers merge.
The removal of a laser along a line $x=i\delta_0$ creates a piece that can intersect only $x=j\delta_0$
for $j\in \{i-1,i,i+1\}$. Therefore, each resulting piece intersects at most 3 consecutive lines in $L_V$ and $L_H$, respectively, they each have $x$- and $y$-projection of length at most $4\delta_0$. Hence the diameter of the final pieces is at most $4\sqrt{2}\delta_0\leq 4\sqrt{2}(1+\epsilon) \d^*$.
\end{proof}

\section{Axis-Parallel Lasers}
\label{sec:orthogonal-laser}

In this section we study \mld and \mla under the constraint that all lasers must be axis-parallel (the edges of \P may have arbitrary orientations). The algorithms for both problems start with a ``window partitioning'' \P into ``(pseudo-)histograms'' of stabbing number at most three, and are then tuned to the specific measures to partition the histograms. We use a simple sweepline algorithm for the diameter, and a dynamic program for the area. The main result is:
\begin{theorem}\label{thm:minlaser-measure-axis-paralleli}
	Let \P be a simple polygon with $n$ vertices and let $k^*$ be the minimum number of axis-parallel lasers needed to subdivide $P$ into pieces of area (diameter) at most $1$.
	There is an algorithm that finds $O(k^*)$ axis-parallel lasers that subdivide \P into pieces of area (diameter) at most $1$ in time polynomial in $n$ and $\area(\P)$ ($\diam(\P)$).
\end{theorem}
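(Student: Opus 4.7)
The plan is to reduce both problems to subproblems on pieces of simpler shape (pseudo-histograms), solve each piece by a specialized subroutine (a sweep for diameter, a DP for area), and amortize the total cost via a bounded stabbing-number property of the decomposition.

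First, I would compute a window partition of \P into axis-aligned pseudo-histograms $H_1,\ldots,H_m$. Here a pseudo-histogram is a simple subpolygon equipped with a distinguished axis-parallel base edge such that every chord parallel to the base lies entirely inside it. The partition is built by shooting axis-parallel rays from reflex vertices until they hit $\partial\P$, choosing the ray direction (horizontal vs.\ vertical) by a standard greedy/canonical rule so that the resulting decomposition has \emph{stabbing number three}: every axis-parallel chord of \P crosses at most three pieces. The number of windows produced is $O(r)$, and each window is itself an axis-parallel chord of \P, a property we shall re-use when we ``pay'' for the windows.

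Second, I would solve each pseudo-histogram separately. Fix $H$ with horizontal base. For \textbf{diameter}, if the height of $H$ already exceeds $1/\sqrt 2$ I first place $O(\diam(H))$ equally-spaced horizontal lasers inside $H$ to trim it into short sub-histograms; then within each short sub-histogram I run a left-to-right sweep, dropping a vertical laser the moment the current open piece would have diameter $>1$. Each vertical laser closes off a piece of diameter exactly about $1$, so a simple charging argument shows the sweep uses $O(k_H^\ast)$ lasers, where $k_H^\ast$ is the optimum restricted to $H$. For \textbf{area}, I would set up a dynamic program along the base whose states record the $x$-coordinate of the previous vertical laser and enough summary information about the sub-histogram to its right to decide whether to close it now (possibly with a constant number of accompanying horizontal lasers) or to continue; transitions correspond to placing one more vertical laser under the area constraint $\le 1$, and the DP outputs an optimum axis-parallel partition of $H$ in time polynomial in $n$ and $\area(H)$.

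Third, I would amortize. Because every laser of OPT is axis-parallel, it is a chord of \P and so meets at most three pseudo-histograms; hence $\sum_H k_H^\ast \le 3k^\ast$, and the total cost of all in-histogram subroutines is $O(k^\ast)$. The remaining cost is the $O(r)$ window chords. I expect the main obstacle to lie here: the $r$ reflex vertices need not all force OPT to spend a laser, so we cannot naively bound $r\le O(k^\ast)$. The plan is a merging/pruning preprocessing step: any pseudo-histogram whose area (resp.\ diameter) is already comfortably below the threshold is merged with its neighbor across the window that separates them, and the window is discarded; the resulting partition still has stabbing number three (merging only shrinks the count along any chord), but now every surviving window is anchored at a reflex vertex whose local neighborhood is big enough that any feasible solution must place $\Omega(1)$ lasers there. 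Charging each surviving window to the OPT lasers it forces yields a total window count of $O(k^\ast)$, and summing with the in-histogram cost gives the claimed $O(k^\ast)$ bound. The running time is polynomial in $n$ and the respective measure of \P since all three stages (window partition, sweep, DP) are individually polynomial.
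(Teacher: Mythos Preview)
Your high-level architecture matches the paper: window partition with stabbing number three, solve each piece by a tailored subroutine (sweep for diameter, DP for area), and bound the number of window chords via a merging step. Two points deserve attention.

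\textbf{The diameter sweep, as you describe it, does not achieve $O(k_H^\ast)$.} You first lay equally spaced horizontal lasers to produce short sub-histograms, then sweep each sub-histogram left to right with vertical lasers. But the vertical lasers you drop live only inside their own slice; they are not shared across slices. On a plain $W\times h$ rectangle this uses $\Theta(h)$ horizontal lasers and $\Theta(W)$ vertical lasers \emph{per slice}, for a total of $\Theta(Wh)$, whereas $k_H^\ast=\Theta(W+h)$. The paper's subroutine avoids this by reversing the roles: first place vertical lasers along the base at spacing $1/2$ (these are chords of the whole histogram, hence shared by all heights), then sweep \emph{top-down}, placing a horizontal laser (plus two flanking verticals) whenever a cell's diameter reaches~$1$. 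The analysis that this is $O(k_H^\ast)$ is a nontrivial four-rule charging scheme; a one-line ``simple charging argument'' does not suffice.

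\textbf{The window-count argument can be made cleaner.} Your plan---merge small pieces, then charge each surviving window to ``$\Omega(1)$ OPT lasers near the anchoring reflex vertex''---is heading in the right direction but the locality claim is not what carries it. After merging (done bottom-up in the recursion tree so that every surviving piece has size $>1$ while all its pockets have size $\le 1$, which is precisely the paper's notion of \emph{pseudo-histogram}), each surviving piece must intersect some laser of OPT. Since an axis-parallel chord of $P$ crosses at most three pieces, the number of surviving pieces---and hence the number of surviving windows---is at most $3k^\ast$. No ``local neighborhood'' argument is needed, and none would work without an additional disjointness claim. Also note that your stated definition of pseudo-histogram (``every chord parallel to the base lies entirely inside it'') is just the definition of a histogram; the pockets are essential here, since merging need not preserve the histogram property.
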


\subsection{Reduction to Histograms}
\label{ssec:window}

A \emph{histogram} is a simple polygon bounded by an axis-parallel line segment, the \emph{base}, and an $x$- or $y$-monotone polygonal chain between the endpoints of the base.

The \emph{window partition} of a simple polygon was originally used for the design of data structures that support link distance queries~\cite{LinkDistance,Suri90}. In this section, we use the axis-parallel version, which partitions a simple polygon $P$ into histograms such that every axis-parallel chord of $P$ intersects at most 3 histograms. Window partitions for orthogonal polygons can be computed by a standard recursion~\cite{LinkDistance,Suri90}; we use a modified version where we recurse until the remaining subpolygons are below the size threshold $1$. This modification guarantees termination on all simple polygons (not only orthogonal polygons).

\begin{figure*}[!ht]
	\centering
	\includegraphics[scale=1]{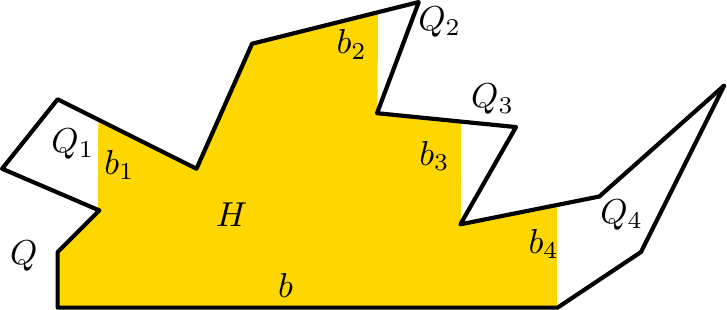}
	\caption{Window partition of a polygon $Q$ with a horizontal base $b$ into a maximal histogram $H$ (colored gold) and four polygons $Q_1$, $Q_2$, $Q_3$, and $Q_4$ (in white). If the sizes (areas/diameters) of $Q_1$, $Q_2$, $Q_3$, and $Q_4$ are each at most $1$, then $Q$ is a  pseudo-histogram.}
	\label{fig:window}
\end{figure*}

\smallskip\noindent\textbf{Window Partition Algorithm.}
Given a simple polygon \P, let $b$ be an axis-parallel chord of \P that subdivides \P into two simple polygons $P_1$ and $P_2$ with a common base $b$. Let $\mathcal{S}=\{(P_1,b),(P_2,b)\}$ be a set of tuples where each tuple has a polygon and its axis-parallel base, and let $\mathcal{H}=\emptyset$ be the set of histograms. While $\mathcal{S}$ contains a pair $(Q,b)$, where the size (e.g., the diameter)
of $Q$ is more than 1, do the following:
\benum
\item compute the maximal histogram\footnote{Without loss of generality, assume $b$ is horizontal. $H$ can be obtained by taking all points of $Q$ reachable through a vertical line from points on $b$.} $H$ of base $b$ in $Q$, and add $(H,b)$ to $\mathcal{H}$; See Figure~\ref{fig:window}.
\item update $\mathcal{S}$ by replacing $(Q,b)$ with the pairs $(Q_i,b_i)$, where the polygons $Q_i$ are the connected components of $Q\setminus H$, and $b_i$ is the boundary between $Q_i$ and $H$.
\eenum
Return $\mathcal{H}$ and $\mathcal{S}$.

\paragraph{Pseudo-histograms.}
Let $T_1$ and $T_2$ be the recursion trees of the algorithm, rooted at $P_1$ and $P_2$, respectively. Let $T=T_1\cup T_2$. Each node $v\in T$ corresponds to a polygon $Q_v\subset P$. Every nonleaf node $v\in T$ also corresponds to a histogram $H_v\subset Q_v$; it is possible that size$(H_v)\leq 1$ but size$(Q_v)>1$ (the size is $\area$ or $\diam$ based on the problem). For a leaf $v\in T$, we have either size$(Q_v)\leq 1$, or $H_v=Q_v$ and size$(H_v)>1$. The polygons $Q_v$ at leaf nodes and the histograms $H_v$ at nonleaf nodes jointly form a subdivision of $P$.

Every node $v\in T$ in the recursion tree corresponds to a polygon-base pair $(Q_v,b_v)$.
For any subset $U\subset V(T)$, where $V(T)$ is the set of vertices of $T$, the bases $\{b_u:u\in U\}$
decompose $P$ into simply connected cells. For every $u\in U$, there is a cell $P_u$ in the decomposition such that  $H_u\subset P_u\subset Q_u$. Since every axis-parallel chord of $P$ crosses at most 2 bases, it can intersect at most 3 polygons in such a decomposition. 

In a bottom-up traversal of $T$, we can find a subset $U\subset V(T)$ such that $\{b_u:u\in U\}$ decomposes
$P$ into polygons $P_u$, $u\in U$, such that size$(P_u)>1$ but the size of every component of $P_u\setminus H_u$ is at most $1$.
Each polygon $P_u$ consists of a histogram $H_u$ with base $b_u$, and subpolygons (\emph{pockets}) of size at most 1 attached to some edges of $H_u$ orthogonal to $b_u$. We call each such polygon $P_u$ a \emph{pseudo-histogram}. See Figure~\ref{fig:window}.

\subsection{$O(1)$-Approximation for \mld in Histograms}
\label{sec:min-orthogonal-laser-diameter}

We start with an $O(1)$-approximation for histograms, and then extend our algorithm to pseudo-histograms and simple polygons. Without loss of generality, we assume that the base is horizontal.

\begin{figure*}[!ht]
	\centering
	\includegraphics[scale=0.85]{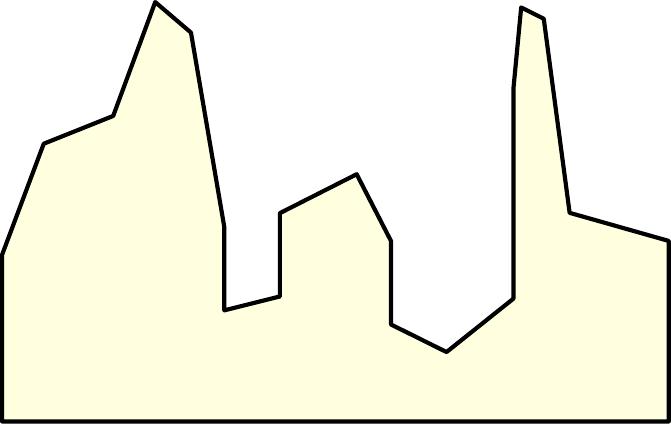}
	\includegraphics[scale=0.85]{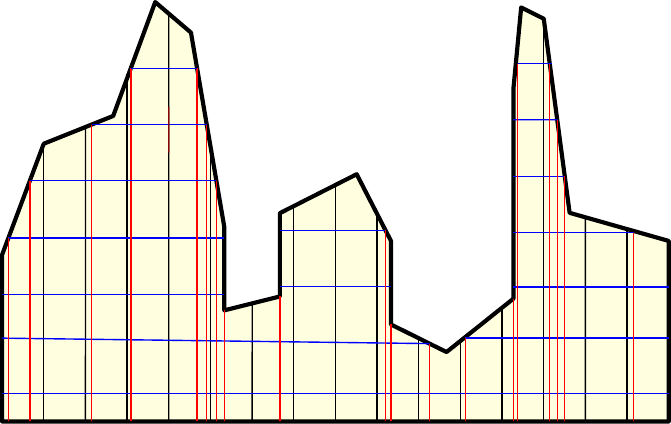}
	\caption{Left: A histogram polygon with a horizontal base. Right: lasers introduced in Phase~1 are shown in black.
    Horizontal (vertical) lasers introduced in Phase~2 are shown in blue (red). }
	\label{fig:histogram}
\end{figure*}

\begin{theorem}\label{thm:HistogramDiam}
	There is an algorithm that, given a histogram $P$ with $n$ vertices, computes an $O(1)$-approximation for the axis-parallel \mld problem in time polynomial in $n$ and $\diam(P)$.
\end{theorem}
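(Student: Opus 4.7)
The plan is a simple axis-parallel grid algorithm that exploits the histogram's $x$-monotone top. Assume without loss of generality that the base $b$ of $P$ is horizontal, of length $L$, on the $x$-axis, so the top boundary of $P$ is the graph of a single-valued function $f:[0,L]\to\mathbb{R}_{\ge 0}$; set $c:=1/\sqrt{2}$. In Phase~1, I would place a vertical laser at every $x=jc$, $j\in\mathbb{Z}_{>0}$, that meets the interior of $P$; since $f$ is single-valued each such line yields a single chord, contributing $\lceil L/c\rceil=O(L)$ lasers. In Phase~2, I would place a horizontal laser at every $y=jc$, $j\in\mathbb{Z}_{>0}$, that meets the interior of $P$; at height $y$ the horizontal line meets $P$ in one chord per connected component of $\{x:f(x)\ge y\}$, each counted as a separate laser. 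The combined axis-parallel arrangement partitions $P$ into cells whose width and height are each at most $c$, so each cell has diameter at most $\sqrt{c^2+c^2}=1$.

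To establish the $O(1)$-approximation, I would show $k^*=\Omega(L+M)$, where $N(y)$ is the number of connected components of $\{x\in[0,L]:f(x)\ge y\}$ and $M:=\int_{0}^{\infty}\!N(y)\,dy$. My algorithm places $O(L)+\sum_{j\ge 1}N(jc)=O(L+M)$ lasers in total, so matching this lower bound gives the approximation. The bound $k^*\ge\Omega(L)$ is immediate: every piece meeting $b$ has diameter $\le 1$, so its base portion has length $\le 1$, forcing at least $\lceil L\rceil-1$ vertical lasers to cut $b$. For $k^*\ge\Omega(M)$ I would sweep a horizontal line downward from $y=\diam(P)$ and track the \emph{active peaks}, i.e.\ the components of $\{x:f(x)\ge y\}$ as $y$ decreases. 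While a peak is active over a vertical interval of length at least $1$, it must contain at least one horizontal laser of the optimum per unit of height within the peak, because otherwise the optimal piece enclosing the corresponding portion would have $y$-extent (hence diameter) exceeding $1$ (a vertical laser at some interior $x_0$ cannot shorten the peak's height, only split it into two narrower peaks of the same height). Summing over peaks then yields the integral $M$.

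The main obstacle is formalizing this charging in the presence of peak \emph{mergers}: as the sweep descends, two previously disjoint peaks may merge into one component, after which a single optimal horizontal laser could in principle serve both. The key observation is that above the merger level the two peaks are disjoint, so the horizontal chords required of the optimum there are necessarily distinct; while at or below the merger level only the single merged component contributes to $M$, not the sum of the individual peak heights, so the integral $M$ already accounts correctly for the shared chords. Thus each ``unit of height within an active peak'' can be uniquely charged to an optimal horizontal laser, giving $k^*=\Omega(M)$ as claimed. The running time is polynomial in $n$ and $\diam(P)$, since there are $O(\diam(P))$ grid lines, each processed in $O(n)$ time by a single traversal of the boundary of $P$ to enumerate its intersection chords.
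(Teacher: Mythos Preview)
Your lower bound $k^*=\Omega(M)$ is false, and as a consequence your algorithm is not an $O(1)$-approximation. Consider a comb: base $[0,1]$, with $n$ teeth of height $0.9$ and width $1/(2n)$, separated by valleys of height $\varepsilon$. Then $N(y)=n$ for all $y\in(\varepsilon,0.9)$, so your Phase~2 places $N(c)=n$ horizontal chords at level $c=1/\sqrt{2}$, one in each tooth. But two \emph{vertical} lasers at $x=\tfrac13$ and $x=\tfrac23$ already suffice: each of the three resulting sub-histograms lies in a $\tfrac13\times 0.9$ box, of diameter $\sqrt{1/9+0.81}<1$. Hence $k^*\le 2$ while you spend $\Theta(n)$ lasers.

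The flawed step is the passage from ``each peak active over a vertical interval of length $\ge 1$ forces $\Omega(\text{height})$ horizontal lasers of $\opt$'' to ``summing over peaks yields the integral $M$.'' Your charging only covers \emph{tall} peaks; short peaks (height $<1$) contribute fully to $M$ and to $\sum_j N(jc)$, yet $\opt$ need place no horizontal laser in them at all---your parenthetical that vertical lasers cannot shorten a peak is true but irrelevant, since a narrow piece may have height close to $1$ and still have diameter $\le 1$. The paper avoids this by abandoning the fixed horizontal grid: after evenly spaced Phase-1 vertical lasers it sweeps a horizontal line top-down and inserts a horizontal laser (plus two flanking vertical lasers) only when some cell's diameter first reaches $1$, then charges each inserted horizontal laser individually to a laser of $\opt$ via a four-case argument.
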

\begin{proof}
We first describe the algorithm.

\smallskip\noindent\textbf{Algorithm.}
We are given a histogram $P$ with a horizontal base $ab$.
If $\diam(P)\leq 1$, halt. Otherwise, do the following:
\benum
\item  Subdivide $ab$ into $\lceil 2|ab|\rceil$ intervals which all, except possibly one, have length $1/2$
       and place vertical lasers on the boundary between consecutive intervals.
\item Sweep $P$ with a horizontal line $L$ top down, and maintain the
       set of cells formed by all lasers in step one and the line $L$.
       When the diameter of a cell $C$ above $L$ is precisely $1$,
       place a horizontal laser $pq$ along the bottom side of $C$,
       where $p,q\in \partial P$, and place two vertical lasers
       at $p$ and $q$, respectively.
\eenum

\smallskip\noindent\textbf{Analysis.}
Let $\opt$ denote the set of lasers in an optimal solution, and let $k^*=|\opt|$.
Denote by $\alg$ the set of lasers computed by the algorithm;
let $\alg^1$ be the number of vertical lasers computed in Phase~1,
and let $\alg^2_h$ and $\alg^2_v$ be the set of horizontal and vertical
lasers computed in Phase~2. Clearly, $|\alg^2_v|\leq 2|\alg^2_h|$.
Therefore it is enough to prove that $|\alg^1|=O(k^*)$ and
$|\alg^2_h|=O(k^*)$.

First we show that $|\alg^1|=O(k^*)$. The vertical lasers in $\opt$
subdivide the base $ab$ into segments of length at most $1$.
Therefore, $k^*\geq \lfloor |ab|\rfloor$.
Combined with $k^*\geq 1$, this readily implies
$|\alg^1|=\lceil 2|ab|\rceil-1=O(k^*)$.

Next we prove $|\alg^2_h|\leq 2k^*$ using a charging scheme.
Specifically, we charge every laser in $\alg^2_h$ to a laser
in $\opt$ such that each laser in $\opt$ is charged at most twice.
Recall for each laser $pq\in\alg^2_h$ placed by the algorithm,
there is a cell $C=C_{pq}$ such that $\diam(C_{pq})=1$ and
$pq$ contains the bottom edge of $C_{pq}$. Since $\diam(C_{pq})=1$,
the cell $C_{pq}$ intersects some laser in $\opt$; we charge $pq$
to one of these lasers. Denote by $\opt_h(C_{pq})$ and $\opt_v(C_{pq})$,
resp., the horizontal and vertical lasers in $\opt$ that intersect $C_{pq}$.
The charging scheme is described by the following rules:
\begin{enumerate}\itemsep -2pt
\item[{\rm (a)}] If $\opt_h(C_{pq})\neq \emptyset$, then charge $pq$ to the
      lowest laser in $\opt_h(C_{pq})$;
\item[{\rm (b)}] else if $C_{pq}$ intersects $\partial P$, then
    charge $pq$ to a laser in $\opt_v(C_{pq})$ that
    is closest to $C_{pq}\cap \partial P$;
\item[{\rm (c)}] else if there is no horizontal laser in $\opt$ that lies above $pq$,
    then charge $pq$ to an arbitrary laser in $\opt_v(C_{pq})$;
\item[{\rm (d)}] else charge $pq$ to the lowest horizontal laser in $\opt$
        that lies above $pq$.
\end{enumerate}

It remains to prove that each laser in $\opt$ is charged at most once
for each the four rules. Since (a) and (d) charge horizontal lasers,
and (b) and (c) charge to vertical lasers in $\opt$, then
each laser in $\opt$ is charged by at most two of the rules.
In each case, we argue by contradiction.
Assume that a laser $\ell\in \opt$ is charged twice by one of the rules,
that is, there are two lasers  $pq,rs\in \alg^2_h$, that are charged to $\ell$ by
the same rule. The width of cells $C_{pq}$ and $C_{rs}$ is
at most $1/2$, because of the spacing of the vertical lasers in $\alg^1$.
Since $\diam(C_{pq})=\diam(C_{rs})=1$, they each have height
at least $\sqrt{3}/2$.
Without loss of generality, we may assume that the algorithm
chooses $pq$ before $rs$.

\smallskip\noindent\textbf{(a)}
In this case, $\ell$ is the lowest horizontal laser in $\opt$ that intersect
$C_{pq}$ and $C_{rs}$, respectively. Since $pq$ is above $rs$, laser $pq$ intersects
the interior of $C_{rs}$, contradicting the assumption that $C_{rs}$ is a cell
formed by the arrangement of all lasers in $\alg$.

\smallskip\noindent\textbf{(b)}
In this case, $\ell$ is a vertical laser that intersects both $C_{pq}$ and $C_{rs}$,
and also intersect $\partial P$. When the algorithm places
a horizontal laser at $pq$, it also places vertical lasers from $p$ and $q$ to
the base of $P$. These three lasers separate $\partial P$ from the portion of
$\ell$ below $pq$. This contradict the assumption that $C_{rs}$ is a cell
formed by the arrangement of all lasers in $\alg$.

\smallskip\noindent\textbf{(c)}
In this case, both $C_{pq}$ and $C_{rs}$ intersects a vertical laser
$\ell\in \opt$, and they both lie above the highest horizontal laser that
crosses $\ell$. Consequently, they both intersect the two highest cells,
say $C_{\rm left}$ and $C_{\rm right}$, on the two sides of $\ell$ in the
arrangement formed by $\opt$.
The combined height of $C_{pq}$ and $C_{rs}$ is at least $\sqrt{3}$.
Therefore, the height of $C_{\rm left}$ and $C_{\rm right}$ is at least
$\sqrt{3}>1$, contradicting the assumption that
$\diam(C_{\rm left})\leq 1$ and $\diam(C_{\rm right})\leq 1$.

\smallskip\noindent\textbf{(d)}
In this case, $\ell$ is the lowest horizontal laser in $\opt$
that lies above $C_{pq}$ and $C_{rs}$, respectively. Let $C_{\rm below}$
be the cell of the arrangement of $\opt$ that lies below $\ell$.
The combined height of $C_{pq}$ and $C_{rs}$ is at least $\sqrt{3}$.
Therefore, the height of $C_{\rm below}$ is at least $\sqrt{3}>1$,
contradicting the assumption that $\diam(C_{\rm below})\leq 1$.
\end{proof}

\paragraph{Adaptation to pseudo-histograms.}
In a laser cutting of $P$ into pieces of diameter at most 1, each pseudo-histogram $P_u$ intersects a laser, since $\diam(P_u)>1$. An adaptation of the algorithm in Section~\ref{sec:min-orthogonal-laser-diameter} can find an $O(1)$-approximation for \mld in each $P_u$.
As noted above, each laser intersect at most 3 pseudo-histograms, hence the union of lasers in the solutions for pseudo-histograms is an $O(1)$-approximations for $P$.

The sweep-line algorithm in Section~\ref{sec:min-orthogonal-laser-diameter} can easily be adapted to subdivide a pseudo-histogram $P_u$. Recall that $P_u$ consists of a histogram $H_u$ and pockets of diameter at most 1. We run steps~1 and~2 of the algorithm for the histogram $H_u$ with two minor changes in step~2: (1) we compute the \emph{critical} diameters with respect to $P_u$ (rather than $H_u$), and
(2) when the diameter of a cell $C$ above a chord $pq$ of $P_u$ is precisely 1, we place up to \emph{four} vertical lasers: at intersection points of $L$ with $\partial P_u$ the $\partial H_u$ (the vertical lasers through $pq\cap \partial H_u$ cut possible pockets that intersect $pq$). The analysis of the sweep line algorithm is analogous to Section~\ref{sec:min-orthogonal-laser-diameter}, and yields the following result.

%

\begin{theorem}\label{thm:SimPolygon-OrthogonalLasers}There is an algorithm that, given a simple polygon $P$ with $n$ vertices,computes an $O(1)$-approximation for the axis-parallel \mld problem in time polynomial in $n$ and $\diam(P)$.\end{theorem}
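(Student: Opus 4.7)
The plan is to reduce to pseudo-histograms via the window partition of Section~\ref{ssec:window}, then apply a minor extension of the histogram algorithm from Theorem~\ref{thm:HistogramDiam} to each pseudo-histogram, and union the resulting laser sets. I would first run the window partition algorithm on $P$ to produce a collection $\{P_u\}_{u \in U}$ of pseudo-histograms with bases $\{b_u\}$, with the guarantees that every $P_u$ has $\diam(P_u) > 1$, each connected component of $P_u \setminus H_u$ (a pocket) has diameter at most $1$, and every axis-parallel chord of $P$ meets at most three of the pseudo-histograms.

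The next step is the lower bound that ties the per-pseudo-histogram optima to the global optimum. Let $\opt$ be an optimal axis-parallel laser set for $P$ and let $\opt_u$ be the minimum number of axis-parallel lasers needed to cut $P_u$ alone into pieces of diameter at most $1$. For each $\ell \in \opt$, the intersection $\ell \cap P_u$ is a union of $k_\ell^u$ chords of $P_u$, and taken over all $\ell$ these chords form a feasible cut of $P_u$; hence $\opt_u \leq \sum_{\ell \in \opt} k_\ell^u$. Since each axis-parallel line crosses at most two bases $b_u$, it lies in at most three pseudo-histograms, so $\sum_u k_\ell^u \leq 3$ for every $\ell$, yielding $\sum_u \opt_u \leq 3\,|\opt|$. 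Consequently, an $O(1)$-approximation for each $P_u$ individually aggregates into a global $O(1)$-approximation.

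For each $P_u$ with $\diam(P_u) > 1$, I would run the two-phase sweep-line algorithm of Theorem~\ref{thm:HistogramDiam}, with the two modifications flagged in the paragraph preceding the theorem: (i)~cell diameters are measured relative to $P_u$, including attached pockets, so a horizontal laser is placed precisely when a cell's true diameter in $P_u$ reaches $1$; and (ii)~when a horizontal laser $pq$ is placed, I drop up to four vertical lasers, at $pq \cap \partial H_u$ (clipping any pockets that $pq$ traverses) and at the two endpoints of $pq$ on $\partial P_u$. Pockets have diameter at most $1$ by the stopping condition of the window partition, so after clipping they require no further cuts. Phase~1 still places $\lceil 2|b_u| \rceil - 1$ vertical lasers along $b_u$; any feasible cut of $P_u$ partitions $b_u$ into subintervals of length at most $1$, so $\opt_u \geq \lceil |b_u| \rceil - 1$ and Phase~1 uses $O(\opt_u)$ lasers.

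The main obstacle is verifying that the four-rule charging argument from Theorem~\ref{thm:HistogramDiam} survives the presence of pockets. The invariant to maintain is that, once $pq$ and its (up to four) vertical companions have been placed, the cell $C_{pq}$ immediately below $pq$ is sealed from $\partial P_u$ and from any adjacent pocket, has width at most $1/2$ (inherited from the Phase~1 grid), and has height at least $\sqrt{3}/2$ (since $\diam(C_{pq}) = 1$). Under these sealing and dimension bounds, the contradictions underlying rules (a)--(d) -- namely, that a laser of $\alg$ would cross the interior of another $\alg$-cell, or that an $\opt$-cell would be forced to have height exceeding $\sqrt{3}$ -- remain valid verbatim, since those arguments invoked only the sealing property and the width/height bounds. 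The number of vertical lasers charged per horizontal laser grows from two to four, changing only a constant. This gives $|\alg \cap P_u| = O(\opt_u)$, and summing over $u$ yields $|\alg| = O(|\opt|)$. The running time is $\poly(n, \diam(P))$, since the window partition runs in polynomial time and the sweep performs $\poly(n, \diam(P))$ diameter updates.
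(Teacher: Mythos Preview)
Your proposal is correct and follows essentially the same approach as the paper: decompose $P$ into pseudo-histograms via the window partition (stabbing number at most three), run the modified two-phase sweep on each pseudo-histogram with up to four vertical companions per horizontal laser, and aggregate via the stabbing bound $\sum_u \opt_u \leq 3|\opt|$. One minor slip: the cell $C_{pq}$ that triggers the horizontal laser is the cell \emph{above} $pq$ (with $pq$ as its bottom side), not below; with that correction your sealing and charging sketch matches the paper's.
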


\subsection{Discretization of a Histogram Polygon}
\label{app:discretization}

Consider a histogram polygon $P$ having $n$ vertices.
We assume that the vertices
are in general position, in the sense that no three vertices are collinear.


Let $V_P$ be the set of vertical chords in $P$ having top endpoint at a vertex of $P$;
such vertical chords yield the \emph{vertical decomposition} of $P$ into
vertical trapezoids (which are rectangles if $P$ is orthogonal).
Let $H_P$ be the set of horizontal chords in $P$ having at least one
of its endpoints at a vertex of $P$; such horizontal chords yield the
\emph{horizontal decomposition} of $P$ into horizontal trapezoids
(which are rectangles if $P$ is orthogonal).
The bottom side of a horizontal trapezoid is either the base of $P$ or
a chord in $H_P$; and the top side is a horizontal line segment (possibly of zero length)
that contains up to three chords in $H_P$ (since no three vertices of $P$
are collinear, and each vertex is incident to at most two horizontal chords of $P$).

\paragraph{\mla.}
We show that an $O(1)$-approximate solution for axis-parallel \ma on a histogram $P$
can be found among a discrete set of \emph{candidate} lasers.


\begin{lemma}\label{lem:cl-area}
For a histogram $P$, let $k^*$ be the minimum number of axis-parallel lasers that subdivide
$P$ into pieces of area at most 1. We can find a set $C$ of $O(n+\area(P))$ chords of $P$,
such that $O(k^*)$ lasers from $C$ can subdivide $P$ into pieces of area at most 1.
\end{lemma}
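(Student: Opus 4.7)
The plan is to combine two layers of chords into $C$. The first layer is the set $V_P \cup H_P$ of vertex-aligned vertical and horizontal chords, of size $O(n)$. The second layer is an adaptive grid: inside every horizontal trapezoid $T \in H_P$ of height $h_T$, include in $C$ all vertical chords of $P$ whose $x$-coordinate falls in $T$ and is a multiple of $c/h_T$; symmetrically, inside every vertical trapezoid $T' \in V_P$ of width $w_{T'}$, include all horizontal chords of $P$ whose $y$-coordinate falls in $T'$ and is a multiple of $c/w_{T'}$, where $c$ is a small enough absolute constant. Summing over trapezoids gives $O(c^{-1} \sum_T w_T h_T) = O(\area(P))$ adaptive chords in each orientation, so $|C| = O(n + \area(P))$.

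To exhibit a subset $C' \subseteq C$ of size $O(k^*)$ subdividing $P$ into pieces of area at most $1$, I would start from an optimal axis-parallel solution $\opt$ and replace each laser by a small number of chords of $C$. For each horizontal $\opt$-laser $\ell$ at height $y_0$ and each vertical trapezoid $T' \in V_P$ that $\ell$ passes through, snap $y_0$ to the nearest candidate $y$-value of the adaptive grid inside $T'$ and insert the corresponding chord from $C$; vertical $\opt$-lasers are treated symmetrically using $H_P$. The buffer strip between $\ell$ and its snapped replacement inside $T'$ has width at most $c/(2 w_{T'})$ and horizontal length at most $w_{T'}$, so its area is at most $c/2 < 1$. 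Consequently every face of the new arrangement is either contained in an $\opt$-face (area at most $1$) or such a buffer strip, and the area constraint is preserved.

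The principal obstacle is reconciling the per-trapezoid snapping with the global count $O(k^*)$: a single $\opt$-laser may cross many trapezoids, and a naive rule that contributes one candidate per crossing would over-count. My plan to resolve this uses the fact that the $\opt$-piece immediately adjacent to $\ell$ inside a given $T'$ has area at most $1$, so $\ell$ can traverse only a bounded number of ``wide'' trapezoids before being blocked by another $\opt$-laser; distinct per-trapezoid snaps can then be charged to distinct $\opt$-pieces incident to $\ell$, yielding $O(k^*)$ total over all $\opt$-lasers. The low-density case $\area(P) = O(1)$ (where $k^* = 0$ and the conclusion is trivial) plus the analogous case for individual trapezoids is handled directly by Lemma~\ref{lem:Convex-ConvexArea}. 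Carrying out this charging scheme rigorously, together with tuning the constant $c$ so that buffer strips in adjacent trapezoids do not merge into an over-large face, is where I expect the bulk of the technical effort to lie.
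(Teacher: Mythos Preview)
Your candidate set $C$ is essentially the same as the paper's, but your snapping step has the orientations crossed, and that is what creates the over-counting problem you are worried about. You place extra \emph{horizontal} candidate chords inside \emph{vertical} trapezoids (and vice versa); consequently a single horizontal $\opt$-laser $\ell$ crosses many vertical trapezoids $T'_1,\ldots,T'_m$, and your per-trapezoid snapping contributes one chord per $T'_i$. The charging argument you sketch (``$\ell$ traverses only boundedly many wide trapezoids before being blocked by another $\opt$-laser; charge distinct snaps to distinct $\opt$-pieces'') does not obviously close: the total number of $\opt$-pieces is $\Theta((k^*)^2)$, not $O(k^*)$, and nothing prevents $\ell$ from crossing $\Theta(n)$ narrow trapezoids without meeting any other $\opt$-laser. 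So as written this is a genuine gap.

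The paper avoids the issue entirely by aligning orientations: it places the extra \emph{vertical} chords inside the \emph{vertical} trapezoids of $V_P$ (enough to cut each such trapezoid into sub-trapezoids of area at most~$1$), and symmetrically for horizontal. Then a vertical $\opt$-laser $\ell$ lies in exactly one vertical trapezoid, and one simply adds to $L$ the two nearest vertical candidates in $C$ on either side of $\ell$; a horizontal $\opt$-laser similarly contributes $O(1)$ horizontal candidates. This gives $|L|=O(k^*)$ immediately, with no charging. Correctness is also simpler than your buffer-strip argument: if a cell $\sigma$ of the $L$-arrangement had $\area(\sigma)>1$, some $\ell\in\opt$ would meet its interior; the two candidates in $L$ flanking $\ell$ bound a trapezoid $\tau$ of area at most~$1$, and since $\sigma$ cannot cross either of them, $\sigma\subseteq\tau$, a contradiction. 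Swapping the orientations in your construction would let you drop the charging scheme altogether.
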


\begin{proof}
We construct the set $C$ of candidate lasers as follows. We add all chords in $V_P\cup H_P$ (incident to vertices of $P$) into $C$.
For every vertical (horizontal) trapezoid $\tau$ in the decomposition of $V_P$ ($H_P$), let $R_{\tau}$ be the axis-parallel bounding box of $\tau$. If $\area(R_{\tau})>1$, then include $\lceil \area(R_{\tau})\rceil-1$ evenly spaced vertical (horizontal) chords into $C$, which subdivide $\tau$ into trapezoids of area at most $1$. We give an upper bound on the number of lasers in $C$.
Since each vertex of $P$ is incident to at most 4 axis-parallel chords, $|V_P\cup C_P|\leq O(n)$.
For every trapezoid $\tau$, we have $\area(\tau)\leq \area(R_\tau)\leq 2\,\area(\tau)$.
Since $\sum_{\tau} \area(\tau)=\area(P)$, where we sum over vertical (horizontal) trapezoids,
then $\sum_\tau \lceil \area(R_{\tau})\rceil-1 \leq \area(P)$.
Overall, $|C|\leq O(n+\area(P))$, as required.
	
Let $\opt$ be a set of lasers in an optimal solution for the axis-parallel \mla problem, where $k^*=|\opt|$. We choose a subset $L\subset C$ of $O(k^*)$ lasers that subdivide $P$ into pieces of area at most 1. We may assume that $\area(P)>1$, hence $k^*\geq 1$ (otherwise we could choose $L=\emptyset$).
	
For each vertical laser $\ell\in \opt$, we add the nearest vertical chords in $C$ on the left and right side of $\ell$, resp., into $L$. Similarly, for each horizontal laser $\ell\in \opt$, we add the nearest top and bottom horizontal chords of $\ell$ from $H_P \cup H_1$ into $L$.
There is at most one nearest horizontal chord in $C$ below $\ell$, and at most three nearest horizontal chords above $\ell$,
as no three vertices of $P$ are collinear. Hence, we place $O(1)$ lasers for each laser $\ell\in \opt$.
	
The lasers in $L$ subdivide $P$ into new cells; let $\sigma$ be one of them. We claim that $\area(\sigma)\leq 1$.
Suppose, to the contrary, that $\area(\sigma)>1$. Then some laser $\ell\in \opt$ intersects the interior of $\sigma$.
If $\ell$ is vertical (horizontal), then $\ell$ is in a vertical (horizontal) trapezoid $\tau$ with $\area(\tau)\leq 1$
such that $\tau$ is bounded by vertical (horizontal) lasers in $L$. Consequently, $\sigma\subseteq \tau$,
hence $\area(\sigma)\leq \area(\tau)\leq 1$. This contradicts our assumption, and proves the claim.
\end{proof}

\paragraph{\mld.}
We show that an $O(1)$-approximate solution for axis-parallel \md on a
histogram $P$ can be found among a discrete set of \emph{candidate} lasers.

\begin{lemma}\label{lem:cl-diam}
For a histogram $P$, let $k^*$ be the minimum number of axis-parallel lasers that subdivide
$P$ into pieces of diameter at most 1. We can find a set $C$ of $O(n+\text{per}(P))$ chords of $P$,
such that $O(k^*)$ lasers from $C$ can subdivide $P$ into pieces of diameter at most 1.
\end{lemma}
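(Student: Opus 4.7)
The plan is to mirror Lemma~\ref{lem:cl-area}, replacing its area-based density of spaced candidates by a constant-length, perimeter-based one, and then to rerun the two-phase algorithm of Theorem~\ref{thm:HistogramDiam} with every laser snapped to $C$.

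I would build $C$ as follows: include $V_P \cup H_P$ (contributing $O(n)$ chords); for a small fixed constant $c > 0$, include the vertical chord of $P$ at every $x$-value $ic$, $i \in \mathbb{Z}$, and the horizontal chord of $P$ at every $y$-value $jc$, $j \in \mathbb{Z}$; finally, include the vertical chord of $P$ through every $x$-coordinate at which a candidate horizontal chord meets $\partial P$. Since $P$ is a histogram with horizontal base $ab$ and an $x$-monotone top chain of length at least $|ab| + 2h_{\max}(P)$, both $|ab|$ and $h_{\max}(P)$ are $O(\per(P))$, and each candidate horizontal chord contributes only two boundary intersections, so altogether $|C| = O(n + \per(P))$.

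For the upper bound I would rerun the two-phase algorithm of Theorem~\ref{thm:HistogramDiam} with snaps. Phase~1 uses the $O(|ab|)$ vertical $ic$-chords on $ab$, yielding initial strips of width $\le c$; since any valid decomposition into diameter-$1$ pieces must subdivide the base into subsegments of length $\le 1$ (and only vertical lasers can subdivide the base), one has $k^* \ge \lfloor |ab|\rfloor$, and Phase~1 therefore contributes $O(|ab|) = O(k^*)$ lasers. Phase~2 sweeps top-down, cuts an active cell once its height would exceed $\sqrt{1-c^2}$, snaps the cut up to the nearest $y = jc$, and draws the two accompanying vertical chords through the endpoints of that horizontal cut on $\partial P$ (all in $C$ by construction). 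Each final piece then has width $\le c$ and height $\le \sqrt{1-c^2}$, hence diameter $\le 1$, and the count is $O(k^*)$ by the identical four-rule charging argument used in the proof of Theorem~\ref{thm:HistogramDiam}.

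The main obstacle will be checking that the $jc$-snap in Phase~2 simultaneously (i) keeps every finalized piece of diameter $\le 1$ and (ii) preserves the charging argument, which requires each defining cell to have height at least $\sqrt{3}/2$ so that two such cells stacked along a common $\opt$-laser would exceed height $1$. Both are handled by choosing $c$ small enough (say $c = 1/16$): the snapped cell height then lies in $(\sqrt{1-c^2}-c,\, \sqrt{1-c^2}]\subset (\sqrt{3}/2,\, 1]$, which simultaneously forces diameter $\le 1$ and retains the height lower bound needed by the four-case analysis.
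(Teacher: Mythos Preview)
Your approach is different from the paper's, and there is a genuine gap in the charging step.

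The paper does not rerun the sweep algorithm of Theorem~\ref{thm:HistogramDiam}. Instead it snaps an optimal solution $\opt$ directly: it takes $C=V_P\cup H_P\cup V_1\cup H_1$, where $V_1,H_1$ are all chords of $P$ lying on a $1/\sqrt2$-spaced axis-parallel grid, puts all of $V_1$ into the solution $L$ (already $O(k^*)$, since vertical $\opt$-lasers alone must cut the base into unit pieces), and then for each laser $\ell\in\opt$ adds $O(1)$ nearby chords from $C$ that sandwich $\ell$ vertically and horizontally. The correctness argument shows directly that every cell of the resulting arrangement has $x$- and $y$-extent at most $1/\sqrt2$.

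The gap in your route is the very first step of the four-rule charging. In Theorem~\ref{thm:HistogramDiam}, each Phase-2 cell $C_{pq}$ satisfies $\diam(C_{pq})=1$, and \emph{that} is what forces $C_{pq}$ to meet some laser of $\opt$. You replaced the diameter trigger by a height trigger (height reaching $\sqrt{1-c^2}$); for strips of width strictly less than $c$ this gives cells of diameter strictly less than $1$, and snapping the cut upward only shrinks them further. Such a cell need not intersect any $\opt$-laser, so there is nothing to charge it to. The height lower bound $\sqrt{3}/2$ you singled out is used only in cases~(c) and~(d), \emph{after} one already knows the cell meets $\opt$; it is not a substitute for the diameter condition. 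Reverting to the diameter-$1$ trigger would restore the intersection with $\opt$ for the pre-snap cell, but then the four cases must be redone with post-snap lasers and pre-snap charging cells, and the ``$pq$ intersects the interior of $C_{rs}$'' contradictions in cases~(a)--(b) no longer line up verbatim. This is probably repairable with more care, but it is not ``the identical four-rule charging argument''; the paper's direct snapping of $\opt$ avoids the difficulty entirely.
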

\begin{proof}
Let $\mathcal{L}_V=\{x=i\cdot (1/\sqrt{2}):i\in \mathbb{Z}\}$ be an infinite set of
vertical lines. Let $V_1$ be the set of vertical chords of $P$ that lie on lines
in $\mathcal{L}_V$. As $P$ is $x$-monotone, $P$ has at most one chord in any line
in $\mathcal{L}_V$. Similarly, let $H_1$ be the set of horizontal chords of $P$ that
lie on lines in $\mathcal{L}_H=\{y=i\cdot (1/\sqrt{2}):i\in \mathbb{Z}\}$;
any horizontal line can contain up to $\Omega(n)$ chords of $P$.
Note that $|V_1|\leq O(\diam(P))$, and $|H_1|\leq O(\text{per}(P))$.
Letting $C=(H_P\cup V_P)\cup (H_1\cup V_1)$, we have $|C|\leq O(n+\text{per}(P))$.

Let $\opt$ be a set of lasers in an optimal solution for the axis-parallel \mld problem, where $k^*=|\opt|$.
We choose a subset $L\subset C$ of $O(k^*)$ lasers that subdivide $P$ into pieces of diameter at most 1.
We may assume that $\diam(P)>1$ hence $k^*\geq 1$ (otherwise we could choose $L=\emptyset$).

Denote by $w$ the length of the base edge of $P$. Then $|V_1|\leq \ceil{w/(1/\sqrt{2})} = \ceil{\sqrt{2}w}$.
Since the vertical lasers in $\opt$ subdivide the base into intervals of length at most 1, we have
$k^*\geq \ceil{w}$. Combined with $k^*\geq 1$, this implies $|V_1|\leq O(k^*)$.
We include all vertical lasers in $V_1$ to $L$.

For each vertical laser in $\ell\in \opt$, we add $O(1)$ lasers from $V_P\cup H_P\cup H_1$ into $L$ as follows.
The laser $\ell$ lies in some vertical trapezoid, $\tau$, in the subdivision by $V_P$.
We add both lasers of $V_P$ on the boundary of $\tau$ into $L$.
We also add the highest chord from $H_P\cup H_1$ that intersects $\ell$ into $L$.

For each horizontal laser $\ell\in \opt$, we add $O(1)$ lasers from $V_P\cup V_1\cup H_1$
into $L$ as follows. Let $R_{\ell}\subset P$ be a maximal axis-parallel rectangle whose
top side is $\ell$ and the interior of $R_\ell$ is disjoint from horizontal lasers in $\opt$
(i.e., the bottom side of $R_\ell$ is either the base of $P$ or another horizontal
laser in $\opt$). We add all lasers in $H_1$ that intersect $R_{\ell}$ into $L$.
Note that the height of $R_{\ell}$ is at most $1$ (otherwise $\opt$
would contain a cell of height more than $1$). Thus $R_\ell$ intersects at most
two lasers from $H_1$ (that we add in $L$).
Furthermore, $\ell$ lies in some trapezoid, $\tau$, in the subdivision formed by $H_P$.
We add all $O(1)$ lasers of $H_P$ on the boundary of $\tau$ to $L$;
and if any chord in $H_1$ intersects $\tau$ above $\ell$,
then we add the lowest such chord to $L$.

The lasers in $L$ subdivide $P$ into new cells; let $\sigma$ be one of them.
We claim that $\diam(\sigma)\leq 1$.
Since $V_1\subset L$, the $x$-extent of $\sigma$ is at most $1/\sqrt{2}$.
If a topmost edge of $\sigma$ is in a horizontal laser $\ell\in L$, then
by construction $L$ contains another horizontal lasers at distance at most
$1/\sqrt{2}$ below $\ell$. Consequently, the $y$-extent of $\sigma$ is also
at most $1/\sqrt{2}$, and $\diam(\sigma)\leq 1$.

It remains to consider the case that $\sigma$ is bounded above the boundary of $P$.
Suppose, to the contrary, that $\diam(\sigma)>1$, hence its $y$-extent is more than $1/\sqrt{2}$.
Then some laser $\ell\in \opt$ intersects $\sigma$.
If $\ell$ is vertical, then the top endpoint of $\ell$ is in $\sigma$.
By construction, $L$ contains the highest chord from $H_P\cup H_1$ that intersects $\ell$.
Thus the $y$-extent of $\sigma$ is at most $1/\sqrt{2}$.
If $\ell$ is horizontal, lying in some horizontal trapezoid $\tau$, then $L$ contains
all lasers in $H_P$ along the top edge of $\tau$. If the top edge of $\tau$ is an edge of $P$,
then $L$ also contains the lowest a chord in $H_1$ that intersects $\tau$ above $\ell$.
This again implies that the $y$-extent of $\sigma$ is at most $1/\sqrt{2}$.
In both cases, we have shown that he $y$-extent of $\sigma$ is at most $1/\sqrt{2}$,
hence $\diam(\sigma)\leq 1$, as claimed.
\end{proof}

\subsection{$O(1)$-Approximation for \mla}
\label{app:min-orthogonal-laser-DP}

We now consider the problem of \mla, with axis-parallel lasers chosen
from a discrete set to achieve pieces of area at most $1$.
The $O(1)$-approximation algorithm is based on the window partition method
described earlier, allowing us to reduce to the case of subdividing a
pseudo-histogram $P$, for which we give an exact dynamic programming
algorithm.

Let $V$ (resp., $H$) be the discrete set of vertical (resp., horizontal) candidate lasers.
For $h\in H$, let $P_h \subseteq P$ be the sub-pseudo-histogram of $P$ that is
above $h$, with base $h$.
We note that in an optimal solution for $P$ or for $P_h$, we need not consider vertical lasers from $V$ other than those
that meet the base of the pseudo-histogram; any vertical laser within a ``pocket'' of the pseudo-histogram can be replaced by
the laser that defines the lid of the pocket, since each pocket is of area at most 1.

A subproblem in the dynamic program is specified as a tuple,
\[\sigma=(h_{\rm base}, k_v, v_{\rm left}, v_{\rm right}, h_{\rm over}, W_{\rm max}),\]
which includes the following data:

\begin{description}

\item[$h_{\rm base}$] is either the base of $P$ or one of the candidate
horizontal lasers, $H$.

\item[$k_v$] is the number of vertical lasers extending through the
base, $h_{\rm base}$, of the subproblem.

\item[$v_{\rm left}$ and $v_{\rm right}$] are the leftmost and rightmost vertical lasers
extending through the base, $h_{\rm base}$, of the sub-pseudo-histogram
(the positions of other vertical lasers are not specified by $\sigma$).
Possibly, $v_{\rm left}=v_{\rm right}$, if $k_v=1$. If $k_v=0$,
we set $v_{\rm left}=v_{\rm right}=NIL$.

\item[$h_{\rm over}$] is an ``overhanging'' horizontal laser, which is a constraint inherited from a neighboring subproblem; it is either $NIL$ (no constraint) or is a horizontal laser of $H$ that crosses
$v_{\rm left}$, and is above the base $h_{\rm base}$.
Further, if $h_{\rm over}\neq NIL$, then only $v_{\rm left}$ can cross
$h_{\rm over}$ (out of the $k_v$ vertical lasers extending through the
base, $h_{\rm base}$), and no horizontal laser can cross $v_{\rm
  left}$ above $h_{\rm base}$ and below $h_{\rm over}$.

\item[$W_{\rm max}$] is the maximum allowed spacing between consecutive vertical lasers
in the subproblem; it suffices to consider values $W$ that are determined by pairs of candidate vertical lasers.
\end{description}

If $h_{\rm over}=NIL$, the sub-pseudo-histogram, $P_\sigma$, corresponding to
$\sigma$ is $P_\sigma$; otherwise, the sub-pseudo-histogram
$P_\sigma$ is the subset of $P_\sigma$ that is to the right of
$v_{\rm left}$ and below $h_{\rm over}$.

Our goal is to compute the function $f(\sigma)$, equal to the minimum
number of horizontal lasers that (together with a suitable set
of $k_v$ vertical lasers) can partition the sub-pseudo-histogram
corresponding to $\sigma$ into pieces of area at most $1$,
subject to the parameters of the subproblem.
Note that, by our choice of candidate horizontal lasers (cf.~Lemma~\ref{lem:cl-area}),
the candidate set $H$ is sufficient to guarantee that it is always possible to
achieve pieces of area at most $1$, even without vertical lasers.
%

If there exists a set of vertical lasers, $v_1,v_2,\ldots,v_{k_v}$,
between $v_{\rm left}=v_1$ and $v_{\rm right}=v_{right}$, such that the areas
of the resulting subpieces of $P_\sigma$, using only these vertical
lasers, are each at most $1$, then $f(\sigma)=0$, since no
horizontal lasers are needed to achieve the desired area threshold of
$1$.
It is easy to decide if this is the case:
(i) if $h_{\rm over}=NIL$, then we must have that the area of the portion
of $P_\sigma=P_{h_{\rm base}}$ that is left of $v_{\rm left}$ is at most
$1$, the area of the portion of $P_\sigma$ to the right of
$v_{\rm right}$ is at most $1$, and there exists a set of
intermediate vertical cuts, $v_2,\ldots,v_{k_v-1}$, ordered from left
to right between $v_{\rm left}$ and $v_{\rm right}$, with each piece of
$P_\sigma$ between $v_i$ and $v_{i+1}$ having area at most
$1$. (A simple greedy strategy allows this to be checked, placing
lasers from left to right in order to make each piece be as wide as
possible, while having area at most $1$.)
(ii) if $h_{\rm over}\neq NIL$, then we proceed similarly, but now
within the pseudo-histogram polygon $P_\sigma$, which is bounded on the left
by $v_{\rm left}$, and lies below $h_{\rm over}$.

If $k_v\leq 2$, then we compute the fewest horizontal lasers (from $H$) to
meet the area bound $1$ by sweeping $P_\sigma$, from top to bottom,
inserting horizontal lasers (from the discrete candidate set) only as
needed to achieve the area bound.
(Recall that, by our choice of discrete candidate lasers, the area bound can always
be achieved.)

\begin{figure*}[!ht]
	\centering
	\includegraphics[width=\textwidth]{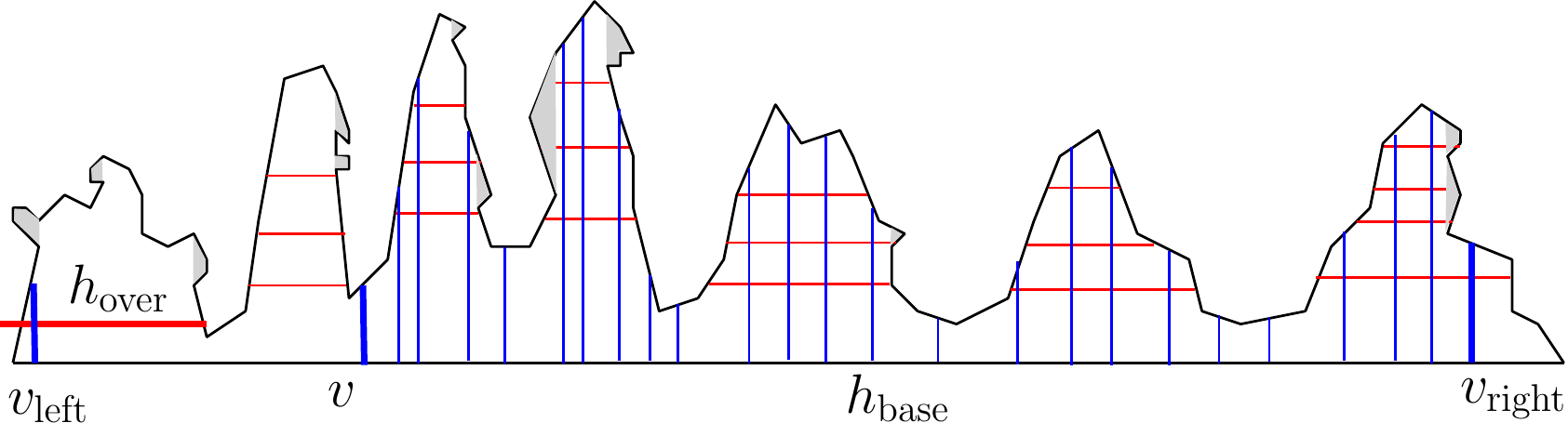}\\
	\includegraphics[width=\textwidth]{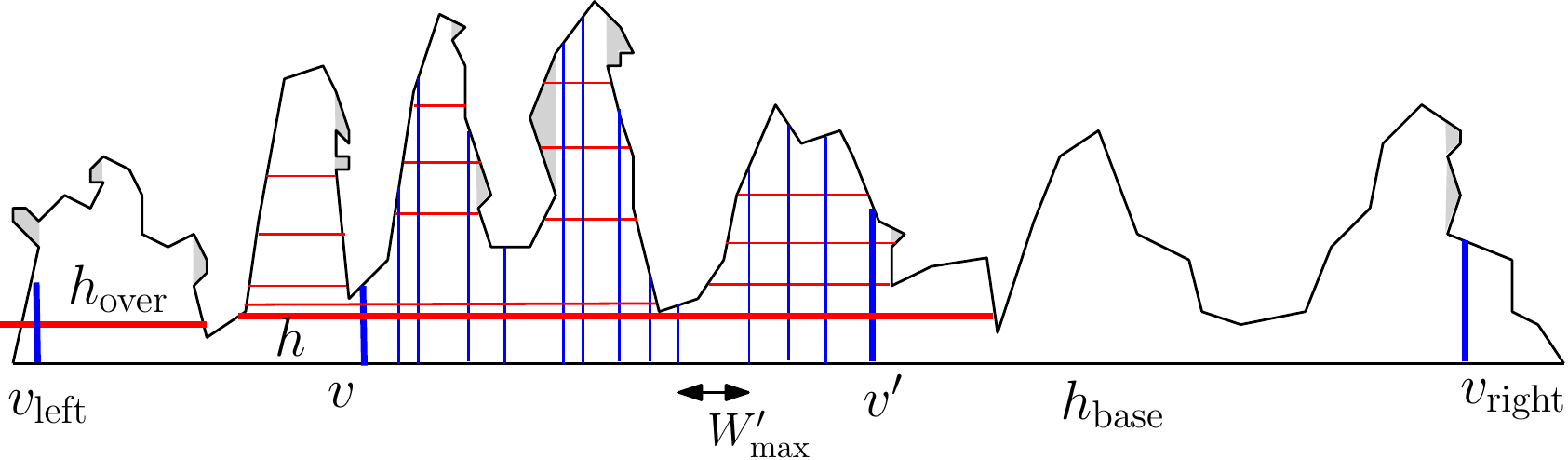}
	\caption{Top: Case (1): $v$ is not crossed by a horizontal laser.
          Bottom: Case (2) $v$ is crossed by a horizontal laser; $h$ is the lowest such laser. The pockets are shown in gray shading.}
	\label{fig:subproblems}
\end{figure*}

If $k_v\geq 3$, then we optimize over the choice of the leftmost
vertical laser cut, $v$, within $P_\sigma$, considering two possibilities:
\begin{description}
\item[(1)] In the optimal solution, $v$ is not crossed within $P_\sigma$ by a
horizontal laser.

In this case,
$$f(\sigma) = \min_{v} \{f(\sigma') + H(v_{\rm left},v,h_{\rm over})\},$$
with $H(v_{left},v,h_{over})$ equal to the minimum number of horizontal lasers
required in the sub-pseudo-histogram determined by $(v_{\rm left},v,h_{\rm over})$,
with $\sigma'=(h_{\rm base}, k'_v=k_v-1, v'_{\rm left}=v, v_{\rm right}, h'_{\rm over}=NIL, W_{\rm max})$, and
with the minimization being taken over $v\in V$ satisfying (i) $v$ lies strictly between
(in $x$-coordinate) $v_{\rm left}$ and $v_{\rm right}$,
(ii) $v$ is at distance at most $W_{\max}$ to the right of $v_{\rm left}$, and
(iii) $v$ does not cross $h_{\rm over}$ (if $h_{\rm over}\neq NIL$).

\item[(2)] In the optimal solution, $v$ is crossed by at least one horizontal laser.
Let $h$ be the lowest such horizontal laser crossing $v$,
let $v'$ (possibly $v'=v$) be the rightmost vertical laser of OPT crossing $h$,
let $k'\in\{1,2,\ldots ,k_v\}$ be the number of vertical lasers crossing $h$ in OPT,
and let $W'_{max}$ be the maximum spacing between consecutive vertical
lasers crossing $h$ in OPT.

Necessarily, the area of the piece of $P_\sigma$ that is left of $v$
and below $h_{\rm over}$ and below $h$ must be at most $1$.
(Note that the inherited overhang constraint implies the
constraint that no additional horizontal
  laser can cross $v_{\rm left}$ between $h_{\rm base}$ and $h_{\rm over}$.)

We get new subproblems
$\sigma_1=(h, k', v, v', NIL, W'_{\rm max})$, and
$\sigma_2=(h_{\rm base}, k_v-k', v', v_{\rm right}, h, W_{\rm max})$.

In this case, the recursive optimization is given by
$$f(\sigma) = \min_{v,v',h,k'} \{ 1+ f(\sigma_1) + f(\sigma_2)\}.$$
\end{description}
This concludes the description of the dynamic program.

While we have specified the algorithm for the measure of area, with
slight modifications, the algorithm also applies to the measure of
diameter, allowing us to solve \mld in pseudo-histograms
(at a much higher polynomial time bound than stated in Theorem~\ref{thm:HistogramDiam}.

\section{Diameter Measure in Polygons with Holes and Axis-Parallel Lasers}
\label{sec:polygon-holes}

\subsection{Bi-Criteria Approximation for Diameter}
\label{sec:bicriteria-holes}
In this section we give a bi-criteria approximation for the diameter version
in a polygon with holes when lasers are constrained to be axis-parallel.
The approach is similar to the algorithm for simple polygons
and lasers of arbitrary orientations (cf.~Section~\ref{sec:minlaser-diameter}) in that both use grid lines, but they differ significantly to handle holes in a polygon when the lasers are axis-parallel. Particularly, in simple polygons we place lasers along grid lines, while in polygons with holes the grid lines just divide the problem into sub-problems.

\paragraph{Lasers in vertical strips.}
Consider the infinite set of equally spaced vertical lines $\mathcal{L}_V=\{x=i\delta: i\in \mathbb{Z}\}$, for some $\d>0$.
The lines subdivide \P into a set $\mathcal{P}_V$ of polygons (possibly with holes),
that we call \emph{strips}. (Unlike Section~\ref{sec:minlaser-diameter}, we do not place
lasers along the lines in $\mathcal{L}_V$; we use the strips for a divide-and-conquer strategy.)
The projection of any strip on the $x$-axis has length at most $\delta$;
we say that a strip is \emph{full-width} if its projection has length exactly $\delta$.
Let $\mathcal{F}_V\subset \mathcal{P}_V$ denote the set of full-width strips,
and let $F\in\mathcal F_V$ be a full-width strip.

The leftmost (resp., rightmost) points of $F$ lie on a line $L=\{x=i\delta\}$ (resp., $R=\{x=(i+1)\delta\}$) for some $i\in \mathbb{Z}$ (see Fig.~\ref{fig:F}). Consequently, the outer boundary of $F$ contains two simple paths between $L$ and $R$; we denote them by $T$ (top) and $B$ (bottom).
\begin{figure}\centering\includegraphics{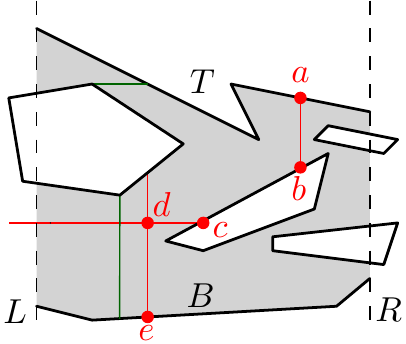}\caption{$F$ is shaded, the holes are white. The $L$-$R$ separating path $\gamma=abcde$ (vertices marked with red disks) alternates between holes and lasers (red) in the interior of \P; $a_F(\g)=2$ as there are two links $ab$ and $de$ in path \g whose extensions are fully contained in $F$. $ab$ and $cde$ are the maximal rectilinear subpaths of \g through the free space. The minimum-link path $\gamma(F)$ (darkgreen) also alternates between holes and free space.}\label{fig:F}\end{figure}

Since the distance between $L$ and $R$ is $\delta$, in any laser cutting of $P$ into cells of diameter at most $\d$, there exists a $T\textrm-B$ path $\gamma\subset F$ along the boundaries of cells that separates $L$ and $R$. Since $\gamma$ is disjoint from the interior of the cells, it must follow lasers in the interior of $P$. 
We may assume, w.l.o.g., that \g follows any laser at most once; otherwise we could shortcut \g along the laser. Since the lasers are axis-aligned, \g is an alternating sequence of subpaths that are either in $\partial P$ or rectilinear paths through the interior of $F$; we call any such $T\textrm-B$ path an \e{alternating} path.

An axis-aligned segment $s$, fully contained in $F$, is \e{associated} with $F$ if it remains fully contained in $F$ after it is maximally extended within \P (i.e., if both endpoints of the supporting chord of $P$ are on the boundary of $F$). For example, any \e{vertical} segment $s\subset F$ is associated with $F$ (because $T$ and $B$ belong to the boundary of $F$). Let $a_F(\g)$ be the number of associated links of $\g$ (i.e., the number of edges whose supporting chords are fully contained in $F$). Let $|\g|$ denote the total number of the (axis-aligned) edges in $\g$. A key observation is the following.

\begin{lemma}\label{lem:vertical-piece}
$|\g|\le3\,a_F(\g)$.
\end{lemma}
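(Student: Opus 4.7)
The plan is to decompose $\g$ into its maximal free-space rectilinear subpaths $\pi_1,\ldots,\pi_s$, which alternate with arcs of $\g$ along $\partial P$, and apply a local counting argument. Within each such $\pi_i$, the axis-aligned edges strictly alternate between horizontal (H) and vertical (V): two consecutive axis-aligned edges with the same orientation would collapse into a single edge of $\g$.

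First I would observe that every vertical edge of $\g$ is automatically associated with $F$: since $F$ lies inside a vertical slab of width $\d$, the supporting chord of any vertical segment contained in $F$ is itself vertical and so stays inside $F$. Writing $v_i, h_i$ for the numbers of V and H edges of $\pi_i$, alternation gives $h_i \le v_i+1$, so $|\pi_i| \le 2v_i+1$. Summing over all $s$ subpaths, $|\g| \le 2\sum_i v_i + s \le 2\, a_F(\g) + s$. Hence it suffices to show $s \le a_F(\g)$, i.e., that each subpath contributes at least one associated edge to $a_F(\g)$.

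This is immediate when $v_i \ge 1$. The remaining (and main) case, which I expect to be the key step of the proof, is a single-edge subpath $\pi_i = \{e\}$ with $e$ horizontal, where I need to show that $e$ itself is associated with $F$. The idea is that both endpoints of $e$ are transition points of $\g$ and therefore lie either on a hole boundary (the adjacent hole arcs of $\g$) or on the outer chains $T, B$; in every case they belong to $\partial P$. Assuming general position (no laser is tangent to $\partial P$ at an interior point of the laser), a transition point along the supporting laser $\ell$ of $e$ can only be an endpoint of $\ell$: otherwise $\g$ would have to turn at a proper intersection of $\ell$ with another, perpendicular, laser, forcing a vertical edge into $\pi_i$ and contradicting $v_i = 0$. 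Thus both endpoints of $e$ coincide with the endpoints of the chord $\ell$, giving $\ell = e \subset F$, so $e$ is associated. Combining everything, $|\g| \le 2\, a_F(\g) + s \le 3\, a_F(\g)$, as claimed.
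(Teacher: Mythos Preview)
Your proof is correct and takes essentially the same approach as the paper: both decompose $\gamma$ into its maximal free-space rectilinear subpaths, note that every vertical link is automatically associated with $F$, and treat the single-horizontal-link subpath as a separate case to conclude that each subpath contributes at least one-third of its links to $a_F(\gamma)$. One small simplification for your single-link case: you do not need a general-position assumption, because by the definition of a chord its relative interior lies in the interior of $P$, so the two transition endpoints of $e$ (which lie on $\partial P$) are forced to coincide with the endpoints of the supporting chord $\ell$, giving $e=\ell\subset F$ directly.
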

\begin{proof}
Let $\pi$ be a (maximal) rectilinear subpath of \g through the free space, i.e., a part of \g whose endpoints lie on the boundary of \P. If $\pi$ is a single horizontal link, then the link is associated  with $F$ (because if any of its two ednpoints is outside $F$, then \g protrudes through $L$ or $R$, not separating them). Otherwise (i.e., if $\pi$ contains vertical links), the number of the vertical links is at least 1/3 of the total number of links in $\pi$. The lemma follows by summation over all subpaths of $\gamma$.
\end{proof}

Our algorithm computes an alternating path $\g(F)$ with the minimum number of links and places one laser along every link of $\gamma(F)$ (the horizontal lasers may extend beyond $F$).
%
%
To find $\g(F)$, we can build the \e{critical graph} of $F$, whose vertices are $T$, $B$, and components of $\partial P$ within the strip $F$ (including holes in the strip), and in which the weight of the edge between two vertices is the axis-parallel link distance between them.
The weight of an edge between vertices $i$ and $j$ can be found by in polynomial time by standard wave propagation techniques~\cite{mitchell2014minimum,das1991geometric}, i.e., by successively labeling the areas reachable with \k links from $i$ for increasing \k, until $j$ is hit by the wave. After the critical graph is built, $\g(F)$ is found as the shortest $T\textrm-B$ path in the graph.

By minimality of $\g(F)$, the number links $|\g(F)|$ in it (and hence the number of lasers we place) is at most $|\g(F)|$. Let $k_V=\sum_{F\in \mathcal{F}_V} |\gamma(F)|$ be the total number of lasers placed in all full-width strips in $\mathcal{F}_V$, and let $k^*$ be the minimum number of axis-parallel lasers in a laser cutting of \P into cells of diameter at most $\delta$.
An immediate consequence of \lemref{vertical-piece} is the following.
\begin{corollary}\label{cor:vertical-piece}
$k_V\leq 3k^*$.
\end{corollary}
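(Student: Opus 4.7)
The plan is to fix any optimal axis-parallel laser cutting $\opt$ of $\P$ with $|\opt|=k^*$ into cells of diameter at most $\delta$, extract from it a canonical alternating $T$-$B$ separating path in each full-width strip, and then combine Lemma~\ref{lem:vertical-piece} with a charging argument that attributes each associated link to a distinct laser of $\opt$.

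First I would show that for each $F\in\mathcal{F}_V$, the lasers of $\opt$ lying in $F$ together with $\partial\P$ contain an alternating $T$-$B$ path $\gamma^*_F\subseteq F$ that separates the bounding lines $L$ and $R$ of $F$. This separation is forced: the horizontal distance between $L$ and $R$ equals $\delta$, so any cell of $\opt$ meeting both $L$ and $R$ would have diameter strictly greater than $\delta$, contradicting optimality. A standard planar separation argument applied to the arrangement of $\opt\cup\partial\P$ restricted to $F$ then yields such a $\gamma^*_F$, and, exactly as with $\gamma(F)$, we may assume (by shortcutting along any doubly-used laser) that each laser is traversed at most once.

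Next I would apply Lemma~\ref{lem:vertical-piece} to $\gamma^*_F$ to obtain $|\gamma^*_F|\le 3\,a_F(\gamma^*_F)$, and combine this with the minimality of $\gamma(F)$ among alternating $T$-$B$ paths in $F$ to conclude $|\gamma(F)|\le|\gamma^*_F|\le 3\,a_F(\gamma^*_F)$. Summing over full-width strips,
\[
k_V \;=\; \sum_{F\in\mathcal{F}_V}|\gamma(F)| \;\le\; 3\sum_{F\in\mathcal{F}_V}a_F(\gamma^*_F).
\]

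The main obstacle is the final charging step, in which I must argue $\sum_{F\in\mathcal{F}_V}a_F(\gamma^*_F)\le k^*$. By the very definition of ``associated,'' each associated link of $\gamma^*_F$ lies on a laser of $\opt$ whose supporting chord in $\P$ is contained entirely in the single strip $F$ (vertical lasers trivially; horizontal ones by assumption). Since distinct full-width strips have disjoint interiors, no laser of $\opt$ can be associated with two different strips, and because $\gamma^*_F$ uses each laser at most once, every laser of $\opt$ is charged at most once in total, giving the desired $3k^*$ bound. The only subtlety is handling degenerate horizontal lasers that lie exactly on a grid line of $\mathcal{L}_V$ or whose chord straddles the boundary between a full-width and a non-full-width strip; I would resolve these via a generic perturbation of $\mathcal{L}_V$ or a consistent tie-breaking rule that assigns each such laser to a unique strip.
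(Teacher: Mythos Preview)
Your proposal is correct and follows exactly the argument the paper intends: extract from $\opt$ an alternating separating path $\gamma$ in each full-width strip, apply Lemma~\ref{lem:vertical-piece} to bound $|\gamma(F)|\le|\gamma|\le 3\,a_F(\gamma)$, and then observe that the associated links correspond to lasers of $\opt$ whose supporting chords lie entirely in $F$, so that summing over the (interior-disjoint) strips charges each laser of $\opt$ at most once. The paper's own proof compresses all of this into a single sentence and leaves the minimality step, the summation over strips, and the boundary/tie-breaking issues you spell out entirely implicit; your write-up is a faithful and more careful expansion of the same idea.
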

\begin{proof}
As the links of $\g(F)$ follow lasers, at least $a_F(\g)$ lasers are fully contained in $F$.
\end{proof}

The $k_V$ lasers placed in full-width strips subdivide $P$ into polygonal pieces; let $Q$ be one such piece.
\begin{lemma}\label{lem:2delta}
The length of the $x$-projection of $Q$ on the $x$-axis is at most $2\delta$.
\end{lemma}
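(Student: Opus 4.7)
The plan is to argue by contradiction. Suppose the $x$-projection of $Q$ has length exceeding $2\delta$. Then there exist points $p_1,p_2\in Q$ with $x(p_2)-x(p_1)>2\delta$, and since $Q$ is a connected cell of the subdivision produced by the $k_V$ lasers, there is a path $\pi\subset Q$ joining them. The open interval $(x(p_1),x(p_2))$ therefore contains two consecutive grid values $j\delta$ and $(j+1)\delta$ of $\mathcal L_V$ in its interior.

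Next I would apply the intermediate value theorem to the $x$-coordinate along $\pi$ to extract a subpath $\pi'\subseteq\pi$ whose endpoints lie on the grid lines $L=\{x=j\delta\}$ and $R=\{x=(j+1)\delta\}$, respectively, and which stays inside the closed vertical slab between $L$ and $R$. Since $\pi'\subset P$ is connected, it lies in a single component of the intersection of $P$ with this slab; that component meets both $L$ and $R$, so it is a full-width strip $F\in\mathcal F_V$, and the algorithm has already placed lasers along every link of $\gamma(F)$ inside $F$.

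The decisive step, which I expect to be the main obstacle, is to exploit the separation property of $\gamma(F)$. Because $\gamma(F)$ is a $T$-to-$B$ alternating path in $F$ that separates $L$ from $R$, the path $\pi'$ (which joins a point on $L$ to a point on $R$ inside $F$) must meet $\gamma(F)$ at an interior crossing. Each link of $\gamma(F)$ is either a piece of $\partial P$ or a laser placed by the algorithm, so $\pi'$ would have to cross either a boundary arc of $P$ (forcing it to exit $P$, contradicting $\pi'\subset P$) or a laser of the arrangement (forcing it to leave the cell $Q$, contradicting $\pi'\subset Q$). Either case yields the desired contradiction, proving that the $x$-projection of $Q$ has length at most $2\delta$.
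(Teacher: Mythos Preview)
Your argument is correct and follows essentially the same route as the paper's proof: assume $Q$ spans across two consecutive grid lines, produce a path in $Q$ that stays in the closed slab between them, observe that this path lies in some full-width strip $F\in\mathcal F_V$, and then use the $L$--$R$ separation property of $\gamma(F)$ to force an intersection with a laser (or $\partial P$), contradicting $\pi'\subset Q$. The only cosmetic difference is that the paper obtains the slab-confined path by taking a \emph{shortest} path in $Q$ between a point on $\ell_1$ and a point on $\ell_2$ (minimality immediately forces it into the slab), whereas you extract it via the intermediate value theorem from an arbitrary connecting path; both work equally well.
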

\begin{proof}
We prove that $Q$ intersects at most one line in $\mathcal{L}_V$. 
Suppose, to the contrary, that $Q$ intersects two consecutive lines $\ell_1:x=i\delta$ and $\ell_2=x=(i+1)\delta$.
Let $\lambda$ be a shortest path in $Q$ between points in $Q\cap \ell_1$ and
$Q\cap \ell_2$, respectively. By minimality, $\lambda$ lies in the strip
between $\ell_1$ and $\ell_2$. Consequently, $\lambda$ is contained in
some full-width strip $F\subset \mathcal{F}_V$. However, the path $\gamma(F)$
intersects every path in $F$ between $F\cap \ell_1$ and $F\cap \ell_2$;
in particular, it intersects $\lambda$. Since we have placed a laser
along every segment of $\gamma(F)$ in the interior of $P$,
$\lambda$ intersects a laser, contradicting the assumption that
$\lambda\subset Q$.
\end{proof}

\paragraph{Lasers in horizontal strips.}
Similarly, we consider the set of horizontal lines $\mathcal{L}_H=\{y=j\delta: j\in \mathbb{Z}\}$
and apply the above process to $P$, yielding horizontal chords $C_H$
that subdivide the polygon into horizontal strips (polygons, possibly with holes). We again work only with \emph{full-height} strips,
whose boundary intersect two consecutive lines in $\mathcal{L}_H$.
In each full-height strip, we find a minimum-interior-link rectilinear path that separates the boundary points
along the two lines in $\mathcal{L}_H$, and place lasers along the links of the path.
Let $k_H$ be the number of lasers over all full-height strips.
%

\paragraph{Putting everything together.}
We overlay the $k_V$ lasers in full-width strips with the $k_H$ lasers in full-height strips.
The resulting arrangement partitions $P$ into polygonal pieces (possibly with holes).
The $x$- and $y$-projection of each piece has length at most
$2\delta$ by Lemma~\ref{lem:2delta}; thus, each piece has diameter less than $2\delta\sqrt{2}$.
By Corollary~\ref{cor:vertical-piece}, the total number of lasers used in the
arrangement is $k_V+k_H \leq 6k^*$. We obtain the following theorem.

\begin{theorem}\label{thm:bicriteria-diam-holes}
	Let $P$ be a polygon with holes of diameter ${\rm diam}(P)$ having $n$ vertices,
	and let $k^*$ be the minimum number of laser cuts that partition $P$
	into pieces each of diameter at most $\delta$ for a fixed $\delta>0$. In time
	polynomial in $n$ and ${\rm diam}(P)/\delta$, one can compute a set of
	at most $6k^*$ lasers that subdivide $P$ into pieces each of diameter
	at most $2^{3/2}\delta$.
\end{theorem}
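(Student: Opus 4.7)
The plan is to bundle together the vertical-strip and horizontal-strip constructions already developed in Section~\ref{sec:bicriteria-holes}. First I would run the vertical-strip algorithm on the grid $\mathcal{L}_V=\{x=i\delta\}$: for every full-width strip $F\in\mathcal{F}_V$, build the critical graph (with nodes $T$, $B$, and the hole-boundary components in $F$, and edge weights given by axis-parallel link distances), extract the minimum-link alternating $T$-$B$ path $\gamma(F)$ as a shortest path, and place a laser along every interior link of $\gamma(F)$. Corollary~\ref{cor:vertical-piece} already gives $k_V\leq 3k^*$, and Lemma~\ref{lem:2delta} already gives that every cell of the resulting arrangement has $x$-projection of length at most $2\delta$.

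I would then repeat the construction symmetrically on the horizontal grid $\mathcal{L}_H=\{y=j\delta\}$, obtaining $k_H\leq 3k^*$ additional axis-parallel lasers such that every cell of the horizontal-only arrangement has $y$-projection of length at most $2\delta$. The key observation that finishes the argument is that overlaying the two laser sets can only refine both arrangements: every cell $Q$ of the overlay is contained in some cell of the vertical-only arrangement and in some cell of the horizontal-only arrangement, so it inherits both projection bounds simultaneously. Hence $\diam(Q)\leq\sqrt{(2\delta)^2+(2\delta)^2}=2^{3/2}\delta$, and the total laser count is $k_V+k_H\leq 6k^*$.

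For running time, I would note that there are $O(\diam(P)/\delta)$ strips of each orientation; within each strip the critical graph has $O(n)$ vertices, whose pairwise axis-parallel link distances can be filled in by standard rectilinear wave propagation in time polynomial in $n$, after which a single shortest-path computation yields $\gamma(F)$. Summed over all strips this is polynomial in $n$ and $\diam(P)/\delta$, matching the bound in the statement. The only conceptually non-routine step is the overlay observation in the previous paragraph, and it is immediate from set-containment of cells under refinement; everything else is a direct invocation of Corollary~\ref{cor:vertical-piece} and Lemma~\ref{lem:2delta} and their horizontal analogs, so no genuine obstacle remains.
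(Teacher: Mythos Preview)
Your proposal is correct and follows essentially the same approach as the paper: the paper's proof is precisely the ``Putting everything together'' paragraph that overlays the vertical-strip lasers (with $k_V\le 3k^*$ from Corollary~\ref{cor:vertical-piece} and $x$-extent $\le 2\delta$ from Lemma~\ref{lem:2delta}) with their horizontal analogues, yielding the $6k^*$ laser bound and the $2^{3/2}\delta$ diameter bound. Your overlay observation and running-time sketch match the paper's argument.
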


\subsection{$O(1)$-Approximation to \md}\label{sec:constant-holes}

Similarly to Section~\ref{sec:k-laser-mindiameter}, we can use the bi-criteria algorithm to
derive a constant-factor approximation for minimizing the maximum diameter of a cell in the arrangement of a given number \k of axis-parallel lasers. Our $O(1)$-approximation algorithm repeatedly decreases the $x$- and $y$- separation in the bi-criteria solution from Theorem~\ref{thm:bicriteria-diam-holes} until the number of placed lasers is about to jump over 6\k; then, the number of lasers is decreased by a factor of $6$ while increasing the diameter by a constant factor.

Specifically, let $\ell(\delta)$ denote the number of lasers used in the end of the bi-criteria algorithm with the $x$- and $y$-separation between consecutive vertical and horizontal lines being \d. Our algorithm to approximate the diameter achievable with \k lasers is as follows:

\begin{itemize}
	\item Initialize $\delta = {\rm diam}(P)$, and let $\epsilon>0$.
    \item While $\ell(\d)\le 6k$, set $\d=\d/(1+\epsilon)$ and recompute $\ell(\d)$.
	\item Let $\delta_0$ be such that $\ell(\delta_0) \le 6k$ but $\ell(\delta_0/(1+\epsilon) > 6k$.
	\item Let $\mathcal{F}_V$ and $\mathcal{F}_H$ be the full-width and full-height strips, resp.,
 used in the bi-criteria algorithm to place the $\ell(\delta_0)$ lasers.
	\item Partition $\mathcal{F}_V$ into 6 subsets: the set of strips whose left boundary is in a line $x=i\delta_0$, where $i\equiv a\mod 6$ for $a=\{0,1,2,3,4,5\}$. Let $\mathcal{F}_V'\subset \mathcal{F}_V$ be a subset of strips that uses the minimum number of chords for their minimum-link paths.
	\item Similarly, partition $\mathcal{F}_H$ into 6 subsets of strips based on the residue class of $j\mod 6$, where the top side of the strip is in $y=j\delta_0$. Let $\mathcal{F}_H'\subset \mathcal{F}_H$ be a subset that uses the minimum number of chords for their minimum-link paths.
    \item Return the lasers used in minimum-link paths in the strips of $\mathcal{F}_V'$ and $\mathcal{F}_H'$.
\end{itemize}

\begin{theorem}\label{thm:approx-k-laser-diam-hole}
	Let $\delta^*$ be the minimum diameter achievable with $k$ axis-parallel lasers.
	For every $\epsilon>0$, one can compute a set of at most $k$ axis-parallel lasers that partition $P$ into pieces each of diameter at most $12\sqrt{2}(1+\epsilon)\d^*$ in time polynomial in $n$, ${\rm diam}(P)/\delta^*$, and $\epsilon$.
\end{theorem}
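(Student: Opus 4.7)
The plan is to mirror the proof of Theorem~\ref{thm:approx-k-laser-diameter}, with Theorem~\ref{thm:bicriteria-diam-holes} playing the role of the bi-criteria engine. The first step is to show that the loop's exit value $\delta_0$ satisfies $\delta_0 \le (1+\epsilon)\delta^*$. I would argue by contradiction: if $\delta^* < \delta_0/(1+\epsilon)$, then some set of at most $k$ axis-parallel lasers already achieves cell diameter at most $\delta_0/(1+\epsilon)$, so writing $k^*(\cdot)$ for the optimum-laser function we have $k^*(\delta_0/(1+\epsilon)) \le k$. Applying Theorem~\ref{thm:bicriteria-diam-holes} at threshold $\delta_0/(1+\epsilon)$ then produces at most $6\,k^*(\delta_0/(1+\epsilon)) \le 6k$ lasers, i.e.\ $\ell(\delta_0/(1+\epsilon)) \le 6k$, contradicting the stopping condition of the while loop.

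Second, I would verify that the algorithm outputs at most $k$ lasers. Let $k_V$ and $k_H$ denote the total chord counts used by the bi-criteria min-link paths across $\mathcal{F}_V$ and $\mathcal{F}_H$ at scale $\delta_0$, so that $k_V + k_H = \ell(\delta_0) \le 6k$. Pigeonholing over the six residue classes modulo $6$ extracts $\mathcal{F}_V'$ whose paths contribute at most $k_V/6$ chords, and symmetrically $\mathcal{F}_H'$ with at most $k_H/6$ chords, for a total of at most $(k_V+k_H)/6 \le k$ lasers in $\mathcal{F}_V' \cup \mathcal{F}_H'$.

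Third, I would bound the diameter of a cell $Q$ in the resulting arrangement. Since every retained path $\gamma(F)$ still separates the two parallel boundaries of its kept strip $F$, no cell can cross a kept strip; therefore $Q$ must lie between the paths of two consecutive kept vertical strips, whose left boundaries are $6\delta_0$ apart and each of which has width $\delta_0$. Adapting the argument of Lemma~\ref{lem:2delta} across the five intervening skipped strips yields an $x$-projection of length at most $12\delta_0$, and the symmetric argument gives $y$-projection at most $12\delta_0$, so $\mathrm{diam}(Q) \le 12\sqrt{2}\,\delta_0 \le 12\sqrt{2}(1+\epsilon)\delta^*$.

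The delicate part will be the third step: one must carefully trace how cells of the bi-criteria decomposition merge when the chords of five out of six residue classes are dropped along each axis, and in particular verify that horizontal chords placed inside a retained vertical strip but extending beyond that strip do not create unaccounted mergers with the discarded horizontal paths. The running time is polynomial in $n$, $\mathrm{diam}(P)/\delta^*$, and $1/\epsilon$, since every invocation of the bi-criteria algorithm runs in polynomial time by Theorem~\ref{thm:bicriteria-diam-holes} and the while loop performs $O(\log_{1+\epsilon}(\mathrm{diam}(P)/\delta^*))$ iterations before the stopping condition fires.
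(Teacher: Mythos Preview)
Your proposal is correct and mirrors the paper's proof exactly: both invoke Theorem~\ref{thm:bicriteria-diam-holes} to show $\delta_0\le(1+\epsilon)\delta^*$ by contradiction, pigeonhole over the six residue classes to retain at most $k$ lasers, and then argue (via the separating property of each retained $\gamma(F)$, i.e., the adaptation of Lemma~\ref{lem:2delta}) that every resulting cell meets at most one line of the coarsened grid $\{x=6i\delta_0\}$, yielding the $12\sqrt{2}\,\delta_0$ diameter bound. The concern you flag about horizontal chords extending beyond a kept vertical strip is harmless---the $x$-extent bound uses only that no cell can cross a retained vertical strip, and symmetrically for the $y$-extent.
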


\begin{proof}
By Theorem~\ref{thm:bicriteria-diam-holes}, if $\d^*$ were smaller than $\d_0/(1+\epsilon)$, then $\ell(\d_0/(1+\epsilon))$ would have been at most $6k$, which is not the case (by the choice of $\d_0$), implying that $\d^*\ge\d_0/(1+\epsilon)$. Our algorithm starts with at most $6k$ lasers, produced by the bi-criteria solution, that decomposes $P$ into strips that each intersect at most one line in $\mathcal{L}_V=\{x=i\delta_0:i\in \mathbb{Z}\}$ and in $\mathcal{L}_H=\{y=i\delta_0:i\in \mathbb{Z}\}$, respectively; hence their $x$- and $y$-projections have length at most $2\d_0$. By removing at least $\frac56$ of the horizontal (resp.\ vertical) lasers, the number of lasers drops to \k or below, and the cells on opposite sides of these lasers merge.
However, each resulting cell intersects at most one line in $\{x=6i\delta_0:i\in \mathbb{Z}\}$
and at most one line in $\{y=6i\delta_0:i\in \mathbb{Z}\}$. Consequently, the $x$- and $y$-projection of each resulting cell is an interval of length at most $12\delta_0$. Hence the diameter of the final cells is at most $12\sqrt{2}\delta_0\leq 12\sqrt{2}(1+\epsilon) \d^*$.
\end{proof}

\section{$O(\log\opt)$-approximation for \mlc}\label{sec:circle}
This section considers the radius of the largest inscribed circle as the measure of cell size; in particular, in \mlc the goal is to split the polygon \P (which may have holes) into pieces so that no piece contains a disk of radius 1. We give an $O(\log\opt)$-approximation algorithm for \mlc based on reducing the problem to SetCover.
The following reformulation is crucial for the approximation algorithm:
\begin{observation}\label{obs:reformulation}
A set of lasers splits \P into pieces of in-circle radius at most 1 iff every unit disk that lies inside \P is hit by a laser.
\end{observation}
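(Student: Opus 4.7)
The plan is to prove the equivalence by contrapositive in both directions, exploiting the fact that each piece is a connected component of $P$ with the union of the lasers removed. Fixing a single open/closed convention at the outset is the only subtlety; I will work with open unit disks contained in the interior of \P, and interpret ``in-circle radius at most $1$'' as ``no open disk of radius greater than $1$ fits in the piece,'' which is the natural reading consistent with the other sections.

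For the $(\Leftarrow)$ direction I would assume that a set $L$ of lasers has the property that every unit disk $D\subset P$ meets some laser, and show that every piece $Q$ of the arrangement has in-circle radius at most $1$. Suppose instead that some piece $Q$ contains an open disk $D'$ of radius strictly greater than $1$. Then $D'$ contains (a closed translate of) an open unit disk $D$, and $D\subset Q\subset P$. Since $Q$ is an open connected component of $P\setminus L$, the disk $D$ is disjoint from $L$, contradicting the hypothesis that every unit disk inside \P is hit by a laser.

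For the $(\Rightarrow)$ direction I would assume that every piece has in-circle radius at most $1$, take an arbitrary open unit disk $D\subset P$, and show that $D$ must meet a laser. If not, $D$ is a connected subset of $P\setminus L$ and therefore contained in a single piece $Q$. But then $Q$ contains a disk of radius $1$, so its in-circle radius is at least $1$; under the open-disk convention, \P contains an $\varepsilon$-thickening of $D$ for small $\varepsilon>0$ (since $D$ lies in the interior of \P and \P is closed with positive clearance around $D$), so in fact $Q$ contains an open disk of radius $>1$, contradicting the hypothesis.

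No step requires more than elementary set-theoretic reasoning, so I do not anticipate a significant obstacle; the entire argument is an unpacking of the definition of ``piece'' as a connected component together with the definition of the in-circle radius as the supremum over radii of inscribed disks. The only point one has to be a little careful about is the open/closed convention, which is why I would state it explicitly at the top of the proof so that both directions use consistent strict/non-strict inequalities.
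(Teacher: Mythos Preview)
The paper states this as an observation without proof; your contrapositive unpacking of the definitions is exactly the intended (and essentially the only) argument, so there is nothing to compare.

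One small gap worth noting: in the $(\Rightarrow)$ direction your $\varepsilon$-thickening step is not always valid. An open unit disk $D\subset\mathrm{int}(P)$ can have its closure tangent to $\partial P$ (and, likewise, $D$ can be disjoint from every laser while $\overline D$ is tangent to one), so ``positive clearance around $D$'' need not hold and $Q$ need not contain any open disk of radius strictly greater than~$1$. This is not a flaw in your approach but a genuine boundary-case ambiguity in the statement itself: with ``in-circle radius $\le 1$'' read literally and ``unit disk'' meaning radius exactly~$1$, the biconditional fails on a measure-zero set of configurations. If you fix either convention consistently---e.g., read the measure condition as ``no inscribed disk of radius $\ge 1$'' (equivalently, in-radius $<1$), or equivalently quantify over \emph{closed} unit disks---both directions go through with no thickening needed, and this is harmless for the set-cover application that follows.
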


\begin{theorem}\label{thm:apxcircle}
For a polygon \P with $n$ vertices (possibly with holes),
\mlc admits an $O(\log\opt)$-approximation in time polynomial in $n$ and $\area(P)$.
\end{theorem}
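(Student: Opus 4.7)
The plan is to recast \mlc as a hitting-set problem on a set system of bounded VC-dimension, and then apply the LP-rounding/$\varepsilon$-net framework of Br\"onnimann and Goodrich (or Clarkson) to obtain an $O(\log\opt)$-approximation. By Observation~\ref{obs:reformulation}, the problem is equivalent to placing a minimum set of chords of \P so that every unit disk contained in \P is intersected by at least one chord. The centers of such disks form the inward $1$-offset region $R=\{c\in P:B(c,1)\subseteq P\}$, a planar region of complexity $O(n)$ bounded by line segments and circular arcs.

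First I discretize. For candidate disks, I overlay $R$ with a grid of sufficiently small constant resolution $\varepsilon$, obtaining a finite set $G$ of $O(\area(\P))$ points; this yields a family $\mathcal{D}$ of grid-centered candidate unit disks. For candidate lasers, I take a set $\mathcal{L}$ of $O(n^2)$ canonical chords, one representative per combinatorial type determined by the pair of boundary features supporting the chord's endpoints. A standard perturbation argument shows that each laser in an arbitrary optimal solution can be replaced by a nearby canonical chord without losing any disk it previously hit, so restricting to $\mathcal{L}$ costs only a constant factor.

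Next, I build the hitting-set instance: the ground set is $\mathcal{D}$, and for each $\ell\in\mathcal{L}$ the associated set is $\{D\in\mathcal{D}:\ell\cap D\neq\emptyset\}$. Equivalently, viewing centers as points in the plane and unit-radius neighborhoods of chords (width-$2$ strips) as ranges, this set system has bounded VC-dimension. The Br\"onnimann--Goodrich framework then produces a hitting set of size $O(\opt\cdot\log\opt)$ in time polynomial in the instance size, hence polynomial in $n$ and $\area(\P)$.

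The main technical obstacle is the discretization step: I must argue simultaneously that (i) hitting all grid-centered disks in $\mathcal{D}$ suffices to hit every inscribed unit disk in \P, and (ii) the optimal canonical solution remains within a constant factor of the true $\opt$. The standard trick is to shrink the disks slightly, i.e., to require each chosen chord to pass within distance $1-\varepsilon$ of every grid point in $G$: then any inscribed unit disk $B(c,1)$ is still hit, since some grid point $p$ lies within $\varepsilon$ of $c$ and the triangle inequality gives $d(c,\ell)\le d(c,p)+d(p,\ell)\le 1$, while conversely the optimum is preserved up to a constant-factor loss by moving each optimal chord into its nearest canonical type. Choosing $\varepsilon$ to be a sufficiently small constant, this loss is absorbed into the $O(\log\opt)$ rounding factor, yielding the claimed approximation ratio and runtime.
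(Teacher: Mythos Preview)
Your high-level framework---reduce to a geometric hitting-set instance of bounded VC-dimension and apply Br\"onnimann--Goodrich---matches the paper exactly. The genuine gap is in your discretization of the \emph{laser} side, not the disk side.

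You propose a candidate set $\mathcal{L}$ of $O(n^2)$ chords, ``one representative per combinatorial type determined by the pair of boundary features.'' Consider a single large axis-aligned square of side~$L$ with no holes: there are $O(1)$ edges, hence $O(1)$ edge-pairs, hence $|\mathcal{L}|=O(1)$; yet any feasible solution needs $\Omega(L)$ chords. No ``nearby canonical chord'' exists for most lasers in an optimal solution, because two chords of the same combinatorial type can be arbitrarily far apart. Your perturbation claim (``each laser in an arbitrary optimal solution can be replaced by a nearby canonical chord without losing any disk it previously hit'') is therefore false, and the $\varepsilon$-shrinking trick cannot repair it: shrinking the target radius by a fixed constant does not compensate for an unbounded displacement of the chord. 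This is why the paper's running time must depend on $\area(P)$ and not only on~$n$.

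The paper fixes exactly this step with a different idea: it lays a $\sqrt{2}$-spaced grid $G$ of points inside $P$ (so every inscribed unit disk \emph{contains} a point of $G$), and builds the candidate chord set by taking each optimal chord and shifting/rotating it until it is \emph{anchored} at two points of $G\cup V(P)$. Because every unit disk pinned by the original chord already contains a grid point, the sweep cannot pass over that disk without first anchoring on its grid point; hence at most four anchored copies cover everything the original chord covered. This yields a candidate set of size $O((n+\area(P))^2)$---note the dependence on $\area(P)$---and only then does the bounded-VC set-cover argument go through.
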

\begin{proof}
We lay out a regular square grid of points at spacing of $\sqrt{2}$. The set of grid points within $P$ is denoted by $G$. We may assume $|G|=O(\area(P))$ by a suitable (e.g., uniformly random) shift.
Due to the spacing, every unit-radius disk in $P$ contains a point of $G$ (possibly on its boundary).

Consider an optimal set, $L^*$ of lasers that hit all unit disks that are contained within $P$.
Replace each laser (chord) $c\in L^*$ with up to four anchored chords of the same homotopy type as $c$
with respect to the vertices of $P$ and the points $G$, obtained as follows:
Shift the chord $c$ vertically down (up), while keeping its endpoints on the same pair of edges of $P$, until it becomes incident to a point in $G$ or a vertex of $P$, then rotate the chord clockwise (counterclockwise) around this point until it becomes incident to another point in $G$ or a vertex of $P$.
Since every unit disk within \P contains a point of $G$, any unit disk within $P$ that is intersected by $c$ is also intersected by one of the shifted and rotated copies of $c$.
This means that we can construct a candidate set, $C$, of $O((n+\area(P))^2)$ chords that can serve as lasers in an  approximate solution, giving up at most a factor 4 of optimal.
Further, in the arrangement of the segments $C$ within $P$, any unit disk is intersected by some set of chords of $C$, thereby defining a combinatorial type for each unit disk in $P$. (Two disks are of the same type if they are intersected by the same subset of chords in $C$; one way to define the type is to associate it with a cell in the arrangement of lines drawn parallel to each chord $c\in C$ at distance 2 from $c$ on each side of $c$ -- while the center of the disk is in one cell of the arrangement, the disk intersects the same chords.) Let $D$ be the polynomial-size ($O(|C|^2)$) set of disks, one ``pinned'' (by two segments, from the set $C$ and the set of edges of $P$) disk per combinatorial type.
By construction, any set of chords from $C$ that meets all disks of $D$ must meet all unit disks within $P$.

We thus formulate a discrete set cover instance in which the ``sets'' correspond to the candidate set $C$ of chords, and the ``elements'' being covered are the disks $D$.
Since there are constant-size sets of disks that cannot be shattered, the VC dimension of the set system is constant, and an $O(\log\opt)$-approximate solution for the set cover can be found in time polynomial in the size of the instance \cite{BG}.
\end{proof}
The same algorithm works for the version in which the lasers are restricted to be axis-aligned (the only change is that the candidate set consists from the grid of axis-aligned chords through $G$ and vertices of \P).


\paragraph{Acknowledgements.}
We thank Peter Brass for technical discussions and for organizing an NSF-funded workshop where these problems were discussed and this collaboration began.
This research was partially supported by NSF grants
CCF-1725543, 
CSR-1763680,  
CCF-1716252,  
CCF-1617618, 
CCF-1439084,
CNS-1938709,   
DMS-1800734, 
CRII-1755791, 
CCF-1910873,  
CNS-1618391, 
DMS-1737812, 
OAC-1939459, 
CCF-1540890, and 
CCF-2007275. 
The authors also acknowledge partial support from the US-Israel Binational Science Foundation (project 2016116), the DARPA Lagrange program, the Sandia National Labs and grants by the Swedish Transport Administration and the Swedish Research Council.
\bibliographystyle{abbrv}
\bibliography{laser}
\appendix

\end{document}